\newtheorem{remark}{Remark}
\newtheorem{theorem}{Theorem}
\newtheorem{lemma}[theorem]{Lemma}
\newtheorem{observation}{Observation}
\newcommand{\llcs}{\ifmmode\mathsf{lcs}\else\textsf{lcs}\fi}
\newcommand{\lisplus}{\mathsf{LIS}^+}
\definecolor{bostonuniversityred}{rgb}{0.8, 0.0, 0.0}
\newcounter{theo}[section] \setcounter{theo}{0}
\newcommand{\Michael}[1]{}\newcommand{\Saeed}[1]{}
\begin{document}
	
	\title{Dynamic Longest Increasing Subsequence and the Erd\"{o}s-Szekeres Partitioning Problem\footnote{The results of this manuscript appeared in preliminary versions~\cite{our-stoc-paper} (STOC'20) and~\cite{mitzenmacher2020erd}.}}
\author{
	Michael Mitzenmacher\thanks{Toyota Technological Institute at Chicago}
	\and Saeed Seddighin\footnotemark[1]
}
	\date{}
	
	\maketitle
	
	\begin{abstract}
		% Longest increasing subsequence (\textsf{LIS}) is one of the most intriguing problems in combinatorial optimization. On one hand, its simplicity makes it a perfect motivating example for dynamic programming and binary-search. On the other hand, the problem is so hard to tackle in that all we know about it is limited to the $O(n \log n)$ time algorithm via dynamic programming and binary-search. In terms of approximation, the problem has remained relatively untouched after quite a long time. The only known approximate algorithms work only when the gap between the solution size and the input size is small. Distance to monotonicity (\textsf{DTM}) is the complement of \textsf{LIS}.
\setcounter{page}{0}
In this paper, we provide new approximation algorithms for dynamic variations of the longest increasing subsequence (\textsf{LIS}) problem, and the complementary distance to monotonicity (\textsf{DTM}) problem.
In this setting, operations of the following form arrive sequentially:
(i) add an element, (ii) remove an element, or (iii) substitute an element for another.  At every point in time, the algorithm has an approximation to the longest increasing subsequence (or distance to monotonicity).  We present a $(1+\epsilon)$-approximation algorithm for \textsf{DTM} with polylogarithmic worst-case  update time and a constant factor approximation algorithm for \textsf{LIS} with worst-case update time $\tilde O(n^\epsilon)$ for any constant $\epsilon > 0$.% $n$ in the runtime denotes the size of the array at the time the operation arrives.

Our dynamic algorithm for \textsf{LIS} leads to an almost optimal algorithm for the Erd\"{o}s-Szekeres partitioning problem. Erd\"{o}s-Szekeres partitioning problem was introduced by Erd\"{o}s and Szekeres in 1935 and was known to be solvable in time $O(n^{1.5}\log n)$. Subsequent work improve the runtime to $O(n^{1.5})$ only in 1998. Our dynamic \textsf{LIS} algorithm leads to a solution for Erd\"{o}s-Szekeres partitioning problem with runtime $\tilde O_{\epsilon}(n^{1+\epsilon})$ for any constant $\epsilon > 0$.

\thispagestyle{empty}

	\end{abstract}
\section{Introduction}\label{sec:intro}
Longest increasing subsequence (\textsf{LIS})  is one of the oldest problems in computer science. Given an array $a = \langle a_1, a_2, \ldots, a_n\rangle$ of size $n$, \textsf{LIS} is defined as the largest subset of the elements whose values are strictly increasing in the order of their indices. Distance to monotonicity (\textsf{DTM}) is the dual problem. For \textsf{DTM}, we wish to remove the smallest number of elements such that the remaining subsequence is increasing. \textsf{LIS} and \textsf{DTM} are special cases of the celebrated \textit{edit distance} and \textit{longest common subsequence} problems when the input strings are permutations.
 
The classic \textit{patience sorting} solution for \textsf{LIS}  utilizes dynamic programming and binary search to solve \textsf{LIS}
exactly in time $O(n \log n)$.  (In what follows, when we refer to a solution, we typically refer to the size of the \textsf{LIS}, but
also the corresponding increasing subsequence can be found in time proportional to its size.)  
Matching lower bounds ($\Omega(n \log n)$) are known for comparison-based algorithms~\cite{fredman1975computing} and solutions based on algebraic decision trees~\cite{ramanan1997tight}.
For approximation algorithms, for any $\epsilon > 0$, a multiplicative $O(n^\epsilon)$ approximate solution can be determined in truly sublinear time via random sampling\footnote{For an $O(n^\epsilon)$ approximation algorithm, one can sample $O(n^{1-\epsilon})$ many elements from the array and report the \textsf{LIS} of those samples.}. 
Surprisingly, not much is known that improves upon this algorithm generally, although when $n/\mathsf{LIS}(a)$ is subpolynomial  we can obtain better approximation guarantees  for \textsf{LIS}~\cite{saeedfocs19,saks2010estimating}.
% Let $D$ be an array of size $n+1$ whose $i$'th element denotes the smallest element of $a$ such that an increasing subsequence of length $i$ ends at that element. Starting from $D = \infty^{n+1}$, one can iterate over the elements of $a$ and update $D$ accordingly. It is an easy exercise to show that $D$ is increasing and that a binary search suffices to update $D$ in each iteration.

% Despite its simplicity, this is the best algorithm we are aware of for \textsf{LIS} and unlike similar classic problems such as edit distance or longest common subsequence, the problem does not become easier even when the solution size is small. 
From a complexity point of view, unconditional lower bounds apply to \textsf{LIS}. For instance, any algorithm that obtains an $f(n)$ approximate solution for \textsf{LIS} has to make at least $n/(f(n)+1)$ value queries\footnote{A value query provides an $i$ as input and asks for the value of $a_i$.} to the elements of $a$ to distinguish the case that $a$ is decreasing from the case that $a$ has an increasing subsequence of length at least $f(n)+1$. Thus a subpolynomial approximation algorithm for \textsf{LIS} in truly sublinear time is not possible in general. 
% However, this does not justify our embarrassingly limited knowledge of \textsf{LIS}. 
Even if we are guaranteed that the solution size is  $\Theta(n^{1-\epsilon})$ (a setting for which the known complexity lower bounds do not apply), we are not aware of any subpolynomial approximate solution for \textsf{LIS}. Very recently, Rubinstein \textit{et al.}~\cite{saeedfocs19} obtain $O(n^{3\epsilon})$ approximation in time $\tilde O(n^{0.5+7\epsilon})$ in this case. 
Also, these lower  bounds do not hold for stronger computational models such as quantum computation, but we  do not have better general quantum approximation  algorithms.
% Likewise, for stronger computational models such as quantum computation for which the unconditional lower bounds fail, still no improvement is discovered.

% \Michael{Pull up related work on dynamic setting/explain what the dynamic setting is.}\Saeed{The following paragraph is new. Feel free to edit or leave me comments to apply.}

In this work, we focus on approximation algorithms in the dynamic setting, where at each step, the array can be updated by inserting, deleting, or modifying an element.  The goal is to maintain an approximation of the correct value at each step.
Dynamic settings for many problems have been studied, e.g. \cite{DBLP:conf/stoc/HenzingerKNS15,DBLP:conf/stoc/NanongkaiS17,gawrychowski2018optimal,DBLP:conf/stoc/AssadiOSS18,DBLP:conf/soda/AssadiOSS19,chen2013dynamic,DBLP:conf/stoc/LackiOPSZ15,DBLP:journals/corr/abs-1909-03478,DBLP:conf/focs/NanongkaiSW17}. In general, in dynamic settings the goal is to develop an algorithm where the solution can be updated efficiently given incremental changes to the input. In the context of graph algorithms~\cite{DBLP:conf/stoc/NanongkaiS17,DBLP:conf/focs/NanongkaiSW17,DBLP:conf/stoc/LackiOPSZ15,DBLP:conf/stoc/AssadiOSS18,DBLP:conf/soda/AssadiOSS19,DBLP:journals/corr/abs-1909-03478}, such changes are usually modeled by edge addition or edge deletion. For string problems, changes are typically modeled with character insertion and character deletion~\cite{gawrychowski2018optimal,chen2013dynamic}, as we consider here.  

% a query of one of the following types arrives and we update the array accordingly:
% \begin{itemize}
%	\item A new element is added at an arbitrary position of the array.
%	\item An existing element of the array is removed.
%	\item The value of an arbitrary element of the array is updated.
% \end{itemize}
We provide novel approximation algorithms for \textsf{LIS} and \textsf{DTM} in the dynamic setting.
For \textsf{LIS}, for any $0 < \epsilon < 1$, we give a dynamic algorithm with worst-case update time $\tilde O(n^{\epsilon})$ and approximation factor $O_{\epsilon}(1)$;  that is, for constant $\epsilon$, the approximation factor is a constant that depends on $\epsilon$. For \textsf{DTM}, we present an algorithm with approximation factor $1+\epsilon$ for any constant $\epsilon$ and worst-case update time $O(\log^2 n)$, where the order notation hides factors that can depend on $\epsilon$. We primarily treat $\epsilon$ as constant since the exponent of the $\log$ factors suppressed by the $\tilde O$ notation may depend on $1/\epsilon$. Here, $n$ denotes the size of the array at the time the operation arrives. Thus the update time does not depend on the number of operations arrived prior to the new operation.

\definecolor{LightCyan}{rgb}{0.88,1,1}

\begin{table}[!htbp]
	\centering
	\begin{tabular}{|l|c|c|}
		\hline
		Problem & Approximation factor & Update time\\
		\hline
		\textsf{LIS} & $1+\epsilon$ & $\tilde O(\sqrt{n})$\\
		\hline	
        \textsf{LIS} & $O((1/\epsilon)^{O(1/\epsilon)})$ & $\tilde O(n^{\epsilon})$\\
	    \hline
	    $\mathsf{LIS}^+$ & $O(\log n)$ & $O(\log^3 n)$\\
	    \hline
		\textsf{DTM} & $1+\epsilon$ & $O(\log^2 n)$\\
        \hline
	\end{tabular}
\caption{The results of this paper are summarized in this table. $\mathsf{LIS}^+$ is a special case of \textsf{LIS} where only element insertion is allowed.}\label{table:results}
\end{table}

\subsection{Erd\"{o}s-Szekeres partitioning problem}
Our dynamic algorithm has an interesting application to a long-standing mathematical problem, namely \textit{the Erd\"{o}s-Szekeres partitioning problem}. It is well-known that any sequence of size $n$ can be decomposed into $O(\sqrt{n})$ monotone subsequences. The proof follows from a simple fact: Any sequence of length $n$ contains either an increasing subsequence of length $\sqrt{n}$ or a non-increasing subsequence of length $\sqrt{n}$. Thus, one can iteratively find the maximum increasing and the maximum non-increasing subsequences of a sequence and take the larger one as one of the solution partitions. Next, by removing the partition from the original sequence and repeating this procedure with the remainder of the elements we obtain a decomposition into at most $O(\sqrt{n})$ partitions. The computational challenge, also known as the Erd\"{o}s-Szekeres partitioning problem, is to do this in an efficient way. The above algorithm can be implemented in time $O(n^{1.5} \log n)$ if we use patience sorting in every iteration. Bar-Yehuda and Fogel~\cite{yehuda1998partitioning} improve the runtime down to $O(n^{1.5})$ by designing an algorithm that after a preprocessing step, solves \textsf{LIS} in time $O(n + k^2)$ where the solution size is bounded by $k$. Since any comparison-based algorithm takes time at least $\tilde \Omega(n)$, the gap for Erd\"{o}s-Szekeres partitioning problem has been $\tilde \Omega(\sqrt{n})$ for quite a long time and the question was raised in a number of works as an important open problem~\cite{pettie2003shortest,gronlund2014threesomes}.

We prove that via our dynamic \textsf{LIS} algorithm, the Erd\"{o}s-Szekeres partitioning problem can be solved in time $\tilde O_{\epsilon}(n^{1+\epsilon})$ for any constant $\epsilon > 0$. We assume that our algorithm performs as stated in Table~\ref{table:results}.

\begin{theorem}\label{theorem:main}
	For any constant $\epsilon > 0$, one can in time $\tilde O_{\epsilon}(n^{1+\epsilon})$ partition any sequence of length $n$ of distinct integer numbers into $O_{\epsilon}(\sqrt{n})$ monotone (increasing or decreasing) subsequences.
\end{theorem}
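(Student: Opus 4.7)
The plan is to turn the classical greedy Erd\"{o}s-Szekeres decomposition into an amortized procedure driven by the dynamic \textsf{LIS} data structure promised by Table~\ref{table:results}. Classically, one repeatedly strips off the longer of a longest increasing and a longest decreasing subsequence; each strip has size $\Omega(\sqrt{n_i})$ on a residual of size $n_i$, so $O(\sqrt{n})$ iterations suffice but a naive implementation costs $\Theta(n\log n)$ per iteration. The dynamic structure will instead let us charge only $\tilde O(n^{\epsilon})$ to each element as it is deleted, so the entire greedy procedure collapses to $\tilde O_\epsilon(n^{1+\epsilon})$.

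\medskip

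Concretely, I would maintain two dynamic $O_\epsilon(1)$-approximate \textsf{LIS} instances on the current residual: one on the sequence itself (so its approximate \textsf{LIS} is an increasing subsequence of $a$) and one on the sequence with its values negated (so its approximate \textsf{LIS} corresponds to a decreasing subsequence of $a$). At each step I read off the current witness from whichever instance reports the larger value, emit that witness as one block of the output partition, and perform one deletion per element of the block in each of the two instances. Since the dynamic algorithm gives an $O_\epsilon(1)$-approximation and Erd\"{o}s-Szekeres guarantees a true monotone subsequence of length $\lceil\sqrt{n_i}\rceil$ in the residual, the extracted block has size at least $\sqrt{n_i}/c_\epsilon$ for some $c_\epsilon = O((1/\epsilon)^{O(1/\epsilon)})$.

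\medskip

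For the partition count, the recurrence $n_{i+1}\le n_i - \sqrt{n_i}/c_\epsilon$ combined with $\sqrt{n-a}\le\sqrt{n}-a/(2\sqrt{n})$ shows that $\sqrt{n_i}$ decreases by $\Omega(1/c_\epsilon)$ each iteration, giving $O(c_\epsilon\sqrt{n})=O_\epsilon(\sqrt{n})$ blocks. For the running time, the total number of deletions performed across all iterations is exactly $n$, each costing $\tilde O(n^\epsilon)$ in each instance, for $\tilde O_\epsilon(n^{1+\epsilon})$ total; initialization is an $\tilde O(n)$ sequence of $n$ insertions; and output emission sums to $O(n)$ because the block sizes sum to $n$.

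\medskip

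I expect the main obstacle to be that Table~\ref{table:results} as stated advertises an \emph{approximate value}, whereas the reduction needs an explicit witness subsequence that can actually be deleted. The hard part of the proof is therefore arguing that the dynamic structure can, without degrading its worst-case update bound, expose the underlying increasing subsequence and list its elements at amortized cost comparable to $\tilde O(n^\epsilon)$ per element; if this is not automatic from the construction, I would augment it with pointer-linked witness bookkeeping carried through each update, so that the witness for the current approximate \textsf{LIS} is always available for immediate enumeration.
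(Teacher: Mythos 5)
Your proposal is correct and follows essentially the same route as the paper: two dynamic approximate \textsf{LIS} instances (one for increasing, one for decreasing), repeatedly extract and delete the larger reported monotone subsequence, with $O_\epsilon(\sqrt{n})$ iterations guaranteed by the constant-factor approximation and total cost $\tilde O_\epsilon(n^{1+\epsilon})$ from charging $\tilde O(n^\epsilon)$ per insertion/deletion. The witness-recovery concern you flag is resolved in the paper by the observation that the dynamic algorithm can report an increasing subsequence of the estimated length in time proportional to that length, so no extra bookkeeping beyond what the construction already provides is needed.
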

\begin{proof}
	The proof follows directly from our algorithm for dynamic \textsf{LIS}. In our dynamic setting, we start with an empty array $a$ and at every point in time we are allowed to (i) add an element, or (ii) remove an element, or (iii) substitute an element for another. The algorithm is able to update the sequence and estimate the size of the \textsf{LIS} in time $\tilde O_{\epsilon}(|a|^{\epsilon})$ where $|a|$ is the size of the array at the time the operation is performed. Moreover, the approximation factor of our algorithm is constant as long as $\epsilon$ is constant. More precisely, our algorithm estimates the size of the longest increasing subsequence within a multiplicative factor of at most $(1/\epsilon)^{O(1/\epsilon)}$. It follows from our algorithm that by spending additional time proportional to the reported estimation, our algorithm is able to also find an increasing subsequence with size equal to the reported length.
	
	Given a sequence of length $n$ with distinct numbers, we use the dynamic algorithm for \textsf{LIS} to decompose it into $O_{\epsilon}(\sqrt{n})$ monotone subsequences in time $\tilde O_{\epsilon}(n^{1+\epsilon})$. To do so, we initialize two instances of our dynamic algorithm that keep an approximation to the longest increasing subsequence and the longest decreasing subsequence of the array. More precisely, in the first instance, we insert all elements of the array exactly the same way they appear in our sequence and in the second instance we insert the elements in the reverse order. Thus the dynamic algorithm for the second instance always maintains an approximation to the longest decreasing subsequence of our array.
	
	In every iteration, we estimate the size of the longest increasing and longest decreasing subsequences of the array via the dynamic \textsf{LIS} algorithm. We then choose the maximum one and ask the algorithm to give us the sequence corresponding to the solution reported. Finally, we remove the elements from both instances of the dynamic algorithm and repeat the same procedure for the remainder of the elements.
	
	The total runtime of our algorithm is $\tilde O_{\epsilon}(n^{1+\epsilon})$ since we insert $n$ elements in each of the instances and then remove $n$ elements which amounts to $2n$ operations for each instance that runs in time $\tilde O_{\epsilon}(n^{1+\epsilon})$. Moreover, because at every point in time the maximum estimate we receive from each of the dynamic algorithms is at least a constant fraction of the actual longest increasing subsequence, we repeat this procedure at most $O_{\epsilon}(\sqrt{n})$ times. Therefore, we decompose the sequence into $O_{\epsilon}(\sqrt{n})$ monotone subsequences.
\end{proof}

\begin{remark}
	The constant factor hidden in the $O$ notation for the number of partitions is optimal in neither the algorithm of Theorem~\ref{theorem:main} nor the previous algorithm of~\cite{yehuda1998partitioning} nor the simple greedy algorithm that runs patience sorting in every step.
\end{remark}

\subsection{Subsequent Work}
Since our dynamic algorithm has constant approximation factor, in order to make sure the number of partitions remains $O(\sqrt{n})$, one needs to set $\epsilon$ to constant and therefore the gap between our runtime of $\tilde O(n^{1+\epsilon})$ and the lower bound of $\Omega(n)$ remains polynomial. Two independent subsequent work further tighten the gap. Kociumaka and Seddighin~\cite{saeednew} improve the gap to subpolynomial by presenting a dynamic algorithm with approximation factor $1-o(1)$ and update time $O(n^{o(1)})$. Gawrychowski and Janczewski~\cite{gawrychowski2020fully} further tighten the gap to polylogarithmic by obtaining a similar algorithm with polylogarithmic update time (with polynomial dependence on $1/\epsilon$). The work of Kociumaka and Seddighin~\cite{saeednew} also gives the first exact algorithm for dynamic \textsf{LIS} with sublinear update time. Their algorithm is able to update the solution in time $\tilde O(n^{2/3})$ after each operation and gives a correct solution with probability $1-n^{-5}$.

In another subsequent work, Mitzenmacher and Seddighin~\cite{our-soda-paper} use the grid-packing technique given here to obtain an improved sublinear time algorithm for approximating \textsf{LIS}. Their algorithm is able to obtain an approximation of \textsf{LIS} in truly sublinear time within a factor of $\Omega(\lambda^{\epsilon})$ where $\epsilon > 0$ is an arbitrarily small constant factor and $\lambda$ is the ratio of the solution size and the input size. 

\subsection{Related Work}
\textsf{LIS} has received significant attention in the areas of property testing~\cite{DBLP:conf/stoc/ErgunKKRV98, DBLP:conf/random/DodisGLRRS99,DBLP:journals/eatcs/Fischer01,DBLP:conf/approx/AilonCCL04}, streaming~\cite{DBLP:conf/soda/GopalanJKK07,DBLP:conf/focs/GalG07},
and massively parallel computation (MPC)~\cite{DBLP:conf/stoc/ImMS17}, as well as in the standard algorithmic  setting~\cite{fredman1975computing,ramanan1997tight,saeedfocs19,saks2010estimating}. Several questions remain open about approximation algorithms for \textsf{LIS}. Although a linear lower bound on the runtime is trivial when the solution size is $O(1)$, neither convincing lower bounds nor upper bounds are known for approximating \textsf{LIS} within subpolynomial multiplicative factors if the solution size is larger ($\omega(1)$) in general. For a special case when $n/\textsf{LIS}(a)$ is subpolynomial, we can approximate the solution size within a subpolynomial factor in sublinear time~\cite{saeedfocs19,saks2010estimating}. In particular, Saks and Seshadhri~\cite{saks2010estimating} present a $(1+\epsilon)$ approximation algorithm  for \textsf{LIS} in sublinear time if the ratio of $n$ over the solution size is sublogarithmic. The only prior non-trivial dynamic algorithm for \textsf{LIS} that we are aware of is the work of Chen \textit{et al.}~\cite{chen2013dynamic}, where the authors present an exact dynamic algorithm for \textsf{LIS} with worst-case update time $O(r+\log n)$ when the solution size is bounded by $r$. The update time for this algorithm can grow up to $\Omega(n)$ if the solution size is $\Omega(n)$.

When the available memory is sublinear (as it is in the streaming and the MPC models), patience sorting can be used to compute a solution for smaller fragments of the input. Previous work show that these local solutions can be cleverly merged to obtain $1+\epsilon$ approximate solutions in the streaming~\cite{DBLP:conf/soda/GopalanJKK07} and the MPC models~\cite{DBLP:conf/stoc/ImMS17}. In contrast, our technique for approximating \textsf{LIS} is not based on patience sorting.  We show it also has an application to a streaming variant of \textsf{LIS}, and we expect it will have additional applications in the future. 

Distance to monotonicity (\textit{a.k.a} Ulam distance) is also a very well-studied problem~\cite{DBLP:conf/soda/NaumovitzSS17,DBLP:conf/soda/AndoniN10,DBLP:conf/spaa/BoroujeniS19,saks2010estimating}. While \textsf{LIS} has resisted a multiplicative approximation algorithm, \textsf{DTM} can be approximated within a multiplicative factor $1+\epsilon$ in time $\tilde O (n/d + \sqrt{n})$ when the solution size is lower bounded by $d$~\cite{DBLP:conf/soda/NaumovitzSS17}. Streaming~\cite{DBLP:conf/soda/GopalanJKK07} and MPC~\cite{DBLP:conf/spaa/BoroujeniS19} algorithms for \textsf{DTM} have also appeared.

\subsection{Preliminaries}
We consider the two problems, \textsf{LIS} and \textsf{DTM}. Input to both problems is an array $a$ with arbitrary length. For \textsf{LIS}, the goal is to find the length of the largest subsequence of elements such that their values increase according to their indices. For \textsf{DTM} the goal is to determine the smallest number of elements such that the remaining subsequence is increasing. Obviously, $\mathsf{DTM}(a) = |a| - \mathsf{LIS}(a)$. However, an approximate solution for one  problem does not imply an approximate solution for the other (much like maximum matching and vertex cover).  We assume for simplicity and without loss of generality that all the numbers are distinct, although one can easily modify our algorithm to handle repeated numbers. 

% \Michael{  check this for repeats} \Saeed{Done. Removed the parts that we said we only return the solution size. We mentioned this earlier in the intro.}
Our results here are for the dynamic setting. Initially, the input array is empty ($|a| = 0$). At each step, an element is either inserted at an arbitrary position of the array or removed from an arbitrary position of the array. (Element substitution can also be implemented with the previous two operations, so we consider only insertions and removals.) We also study a special case of \textsf{LIS} where all operations add elements to the array. We call this problem $\lisplus$. % Our algorithms return the size of the solution (and not the entire solution), however, it is easy to see that they can also provide a corresponding solution for the size reported if extra time proportional to the size of the solution is provided.

% \Michael{I don't like "query".  I prefer operations, or updates, or update operations. Probably  just operations.  }\Saeed{Fixed! (technical prats are not modified yet)}
We more formally define the array operations. Each insertion operation is of the form ``\textsf{insert $(i,x)$}" where $i$ is an integer between $1$ and the length of the current array plus one. $i$ specifies the position of element $x$. After this operation, all the elements whose previous index was at least $i$ will be shifted to the right. Similarly, an operation ``\textsf{delete $(i)$}" removes the $i$'th element of the array and element $i+1$ will replace its position. Likewise, all the elements whose previous index was at least $i$ will be shifted to the left.

For simplicity, in our algorithms we assume that at any point random access to the elements is provided. That is, in every step, one can access the value of the $i$'th element of the array as a value query. This brings an $O(\log n)$ overhead to the runtime since one needs to design a data structure that allows us element addition, element removal, and access to the $i$'th element. Any balanced binary tree (\textit{e.g.} red-black tree) suffices for that purpose~\cite{dasgupta2008algorithms}. We can also recover the position of each element of the array in logarithmic time with a balanced binary tree.

\section{Summary of the Results and Techniques}\label{sec:results}
Our main result is a dynamic algorithm for \textsf{LIS} with worst-case update time $\tilde O(n^{\epsilon})$ and approximation factor $O((1/\epsilon)^{O(1/\epsilon)})$. 
%\Michael{The question always comes up as to whether this just holds for constant $\epsilon  > 0$ of if $\epsilon$ can depend on $n$ so maybe address it early once.  Also, what is $n$ here?  In the dynamic setting we start with an empty array.  Is $n$ the number of operations?  If you haven't done so clarify somewhere.}\Saeed{I added a sentence to the abstract for $n$. We don't need $\epsilon$ to be constant. However, I don't want to emphasize that $\epsilon$ can depend on $n$ since the exponent of the $\log$ factor hidden in $\tilde O$ notation depends on $1/\epsilon$.}
%\Michael{I would take the explanationn of n out of the abstract, but explain at the first time you use n in the introduction.  That's fine as long as we use it consistently that way throughout or are clear when we are not.  For the epsilon, I would early on say that the reader should think of epsilon as constant, and then footnote or explain as you did in the comment -- that epsilon can be non-constant but for non-constant epsilon one has to be careful for the reasons you've described, so we express our results as though epsilon was constant unless stated otherwise.  (And where it might be useful, state so!)}\Saeed{Added a sentence before the table.}
Our algorithm is based on a novel technique which we call grid packing. % This new technique  has applications to other models. As an example, we show that grid packing leads to an improved streaming algorithm for \textsf{LIS} when a helpful adviser is available.
%\Michael{I've  commented where the above is repetitive -- we've now said this earlier.}\Saeed{Looks good to me}
%\Michael{If lazy algorithm is a term of the art then give a reference.  If not, I don't think we should use the term lazy.  Because I don't see how the algorithm is lazy at all.  My version  of the paragraph follows this version.}\Saeed{Replaced lazy with block-based}
%To simplify the explanation, we use the notion of \textit{block-based algorithms} in this paper. Roughly speaking, we call an algorithm block-based if it starts with an array $a$ of length $n$ and is responsible for at most $g(n)$ operations. The benefit of a block-based algorithm is that we allow for a preprocessing time of $f(n)$ for block-based algorithms. It is not hard to see (see Section \ref{sec:amortized}) that a block-based algorithm with preprocessing time $f(n)$ and worst-case update time $h(n)$ can be used to obtain a dynamic algorithm with worst-case update time $\max\{f(n)/g(n),h(n)\}$. This can be seen as a reduction from worst-case update time to amortized update time.
In this section, we give a high-level summary of grid packing and our overall approach; Full proofs are given in Sections~\ref{sec:grid} and ~\ref{sec:constant}.
\subsection{Block-based Algorithms}
In our work, to simplify our proofs, we utilize the notion of what we call a \textit{block-based algorithm}.  Very roughly speaking, a block-based algorithm starts with an array $a$ of length $n$.  It can use $f(n)$ preprocessing time, after which it is responsible for a  block of $g(n)$ operations, where each operation has worst-case update time $h(n)$.  We show in Section \ref{sec:amortized} via a simple reduction that such a block-based algorithm can be used to obtain a dynamic algorithm with worst-case update time $\max\{f(n)/g(n),h(n)\}$.

%\Michael{Editing to use block-based in the below paragraph, edit back if you like.}\Saeed{This looks better to me. I commented the previous paragraph.}

A motivating example shows the notion of block-based algorithms simplifies the analysis. Chen \textit{et al.}~\cite{chen2013dynamic} show that when \textsf{LIS} for an array is upper bounded by $r$, a dynamic algorithm can maintain the exact solution for \textsf{LIS} with worst-case update time $\tilde O(r)$.  We show that this exact algorithm yields a dynamic $(1+\epsilon)$-approximation algorithm with worst-case update time $\tilde O(\sqrt{n})$. 
%\Michael{Again, clarify here if $n$ is array size, number of operations, what exactly.}

We first provide the intuition and explain the complications.  At any point in time, if the solution value is below $2\sqrt{n}/\epsilon$, then the runtime guarantee is met by using the algorithm of Chen \textit{et al.}~\cite{chen2013dynamic}. Otherwise, we can compute an exact solution, and then use the same value for up to $\sqrt{n}$ steps to maintain a valid approximation.
We then spend time $\tilde O(n)$ for $\sqrt{n}$ operations, leading to amortized update time of $\tilde O(\sqrt{n})$. Deamortizing
this approach to bound the worst-case update time seems cumbersome. The issue is that it is not clear when we should switch between the two algorithms. For example, if we define a threshold $\tau$ and switch between the algorithms when the solution size crosses the threshold $\tau$, we may go back and forth across the threshold.  We could consider multiple thresholds, but at this point we appear to be complicating the analysis beyond what should be necessary.  

Working with the framework of a block-based algorithm conveniently remedies the problem. Assuming we start with an array $a$ of length $n$, we allow a preprocessing time of $f(n) = O(n \log n)$ for the algorithm to compute the \textsf{LIS}. We set $g(n) = \sqrt{n}$. If the \textsf{LIS} value $r$ is above $2\sqrt{n}/\epsilon$, for the next $\sqrt{n}$ steps, we report $r-i$ in the $i$'th step and can be sure that our solution is within a small range from the optimal one.
%\Michael{Why do we have to report $r-i$, can't we stick with a single value?}\Saeed{We don't want to report a value which is larger than the actual \textsf{LIS} even if its within a $1+\epsilon$ multiplicative factor. Since the \textsf{LIS} may decrease after each operation, we have to have an additive term $-i$}\Michael{Why don't we want to report a value that is larger?  This seems like a new restriction-- have we ever stated that as a requirement?  It seems like we need to explain that earlier and why if that's the case.}\Saeed{If the solution size is 10 and I report 9, I could say that my approximation factor is $9/10$. But if the solution size is $10$ and I report $11$, my solution is invalid since no such increasing subsequence exists (and I certainly can't report one if somebody asks for it). Because of this, I prefer to only give smaller values for LIS and larger values for DTM.}
Otherwise, we use the algorithm of Chen \textit{et al.}~\cite{chen2013dynamic} with worst-case update time $O(\sqrt{n})$ for the next $\sqrt{n}$ steps. Using the reduction, this turns to an algorithm with worst-case update time $\tilde O(\sqrt{n})$ for \textsf{LIS} with approximation factor $1+\epsilon$. 

\vspace{0.2cm}
{\noindent \textbf{Corollary}~\ref{theorem:trivial}, [restated informally]. \textit{For any constant $\epsilon > 0$, there exists a dynamic algorithm for \textsf{LIS} with worst-case update time $\tilde O(\sqrt{n})$ and approximation factor $1+\epsilon$.\\}}

%\Michael{grid packing is OK, but grid packing does refer to another algorithmic problem (google it), and it's not clear to me that this is a "packing".  I might call it grid path covering, grid segment covering, or just grid covering.  But I'm open to leaving it as is.}

%\Michael{I'm pretty sure you meant "precedes" when you say segment A proceed (should be precedes) segment B.  Please check and change.}\Saeed{Done.}
\subsection{Grid Packing and Applications}
As mentioned earlier, our algorithm for \textsf{LIS} is based on a technique that we call grid packing. Grid packing is defined on a table of $m \times m$ cells; the only parameter of the problem is $m$.
The problem can be thought of as a game between us and an adversary.
We introduce a number of segments on the table. Each segment covers a consecutive set of cells in either a row or in a column. A segment $A$ \textit{precedes} a segment $B$ if \textbf{every} cell of $A$ is strictly higher than every cell of $B$ and also \textbf{every} cell of $A$ is strictly to the right of every cell of $B$. Two segments are \textit{non-conflicting}, if one of them precedes the other one. Otherwise, we call them \textit{conflicting}.  The segments we introduce can overlap and there is no restriction on the number of segments or the length of each segment. However, we would like to minimize the maximum number of segments that cover a cell. 

\begin{figure}[ht]

\centering

\tikzset{every picture/.style={line width=0.75pt}} %set default line width to 0.75pt        

\begin{tikzpicture}[x=0.75pt,y=0.75pt,yscale=-1,xscale=1]
%uncomment if require: \path (0,300); %set diagram left start at 0, and has height of 300

%Shape: Rectangle [id:dp5254311647291401] 
\draw   (181,11) -- (450,11) -- (450,281) -- (181,281) -- cycle ;
%Straight Lines [id:da3305762449852494] 
\draw    (301,10) -- (301,281) ;

%Straight Lines [id:da3206498977891954] 
\draw    (331,10) -- (331,282) ;

%Straight Lines [id:da02838390385871148] 
\draw    (361,10) -- (361,280) ;

%Straight Lines [id:da6891763144045235] 
\draw    (211,10) -- (211,281) ;

%Straight Lines [id:da38683484497784937] 
\draw    (241,10) -- (241,281) ;

%Straight Lines [id:da7490078717616151] 
\draw    (271,10) -- (271,281) ;

%Straight Lines [id:da3531030059687228] 
\draw    (391,10) -- (391,281) ;

%Straight Lines [id:da43894336811181467] 
\draw    (421,10) -- (421,280) ;

%Straight Lines [id:da27078630886751465] 
\draw    (182,41) -- (450,41) ;

%Straight Lines [id:da005150497881268201] 
\draw    (182,71) -- (450,71) ;

%Straight Lines [id:da23147705862104506] 
\draw    (182,101) -- (450,101) ;

%Straight Lines [id:da7425549844424117] 
\draw    (182,131) -- (450,131) ;

%Straight Lines [id:da2351732964828932] 
\draw    (182,161) -- (450,161) ;

%Straight Lines [id:da7401347371733205] 
\draw    (182,191) -- (450,191) ;

%Straight Lines [id:da8968761436786192] 
\draw    (182,221) -- (450,221) ;

%Straight Lines [id:da8562431949289493] 
\draw    (182,251) -- (450,251) ;

%Straight Lines [id:da7212412833117718] 
\draw [color={rgb, 255:red, 208; green, 2; blue, 27 }  ,draw opacity=1 ][line width=3.75]    (256,114) -- (256,267) ;

%Straight Lines [id:da7675592926230685] 
\draw [color={rgb, 255:red, 74; green, 144; blue, 226 }  ,draw opacity=1 ][line width=3.75]    (319,29) -- (441,29) ;

%Straight Lines [id:da4112941771444574] 
\draw [color={rgb, 255:red, 65; green, 117; blue, 5 }  ,draw opacity=1 ][line width=3.75]    (196,203) -- (196,271) ;

%Shape: Square [id:dp4777174644258826] 
\draw  [color={rgb, 255:red, 0; green, 0; blue, 0 }  ,draw opacity=1 ][fill={rgb, 255:red, 248; green, 231; blue, 28 }  ,fill opacity=0.48 ] (211,41) -- (241,41) -- (241,71) -- (211,71) -- cycle ;
%Straight Lines [id:da09165043527230288] 
\draw [line width=3.75]    (226,53) -- (226,181) ;

%Shape: Square [id:dp8680283690742998] 
\draw  [color={rgb, 255:red, 0; green, 0; blue, 0 }  ,draw opacity=1 ][fill={rgb, 255:red, 74; green, 144; blue, 226 }  ,fill opacity=0.35 ] (271,41) -- (301,41) -- (301,71) -- (271,71) -- cycle ;
%Straight Lines [id:da0026998791284389423] 
\draw [color={rgb, 255:red, 245; green, 166; blue, 35 }  ,draw opacity=1 ][line width=3.75]    (287,57) -- (409,57) ;

\end{tikzpicture}
\caption{Segments are shown on the grid. The pair (black, orange) is conflicting since the yellow cell (covered by the black segment) is on the same row as the blue cell (covered by the orange segment). The following pairs are non-conflicting: (green, black), (green, orange), (green, blue), (red, orange), (red, blue), (black, blue).} \label{fig:crossing}
\end{figure}
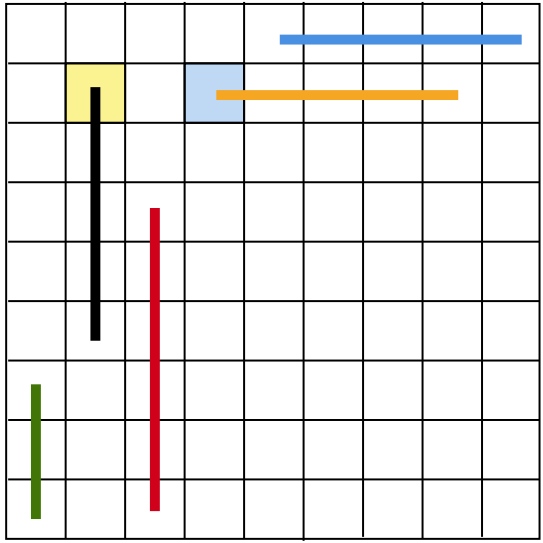

After we introduce the segments, an adversary puts a non-negative number on each cell of the grid. We emphasize that our segments do not depend on these numbers, as the numbers are given after we provide our segments. The score of a subset of cells in the table is the sum of the values in the cells, and the overall score of the table is the maximum score of a path of length $2m-1$ from the bottom-left corner to the top-right corner. (In such a path, each move is either up or to the right.)

The score that we obtain using our segments is the maximum sum of scores of a non-conflicting set of segments.
One can easily verify that the score of the table is a clear upper bound on the score we obtain using any subset of non-conflicting segments. We would like to introduce the segments in such a way that the ratio of the score of the table over our score is always bounded by constant, no matter how the adversary puts the numbers on the table. 
For a fixed $\alpha \geq 1$ and a $\beta \geq 1$, we call a solution $(\alpha,\beta)$-approximate if at most $\alpha$ segments cover each cell and it guarantees a $1/\beta$ fraction of the score of the table for us for any assignment of numbers to the table cells. 
%\Michael{So $\beta$ is bigger than 1 here?}\Saeed{Yes, I changed the previous sentence to make it clear.}
We show in Section \ref{sec:grid} that grid packing admits an $(O(m^\kappa \log m),O(1/\kappa))$-approximate solution for any $0 < \kappa < 1$.
%\Michael{I'm confused, if $\beta$ is bigger than 1, why is this $O(1/\kappa)$?  Is $\kappa$ supposed to be bigger than 1 or smaller than 1 -- it can't be just bigger than 0?  It's odd having O notation in both parts of the expression -- it makes sense for me that the second coordinate, the $\beta$, would be fixed and then we'd get a corresponding $\alpha$.}\Saeed{$\kappa$ is between 0 and 1 so $1/\kappa$ is always greater than $1$. I will think about a clean way to fix one of the parameters and obtain the other one in terms of it. However, it may be OK to leave it as it is since many previous work do this. E.g. https://users.cs.duke.edu/~debmalya/papers/soda13-scheduling.pdf}

Before explaining the idea behind this result, we would like to make a connection between grid packing and \textsf{LIS}. Let us consider an array $a$ of length $n$. We assume for the sake of this example that all the numbers of the array are distinct and are in range $[1,n]$. In other words, $a$ is a permutation of numbers in $[n]$. We map the array to a set of points on the 2D plane by putting a point at $(i,a_i)$ for every position $i$ of the array.

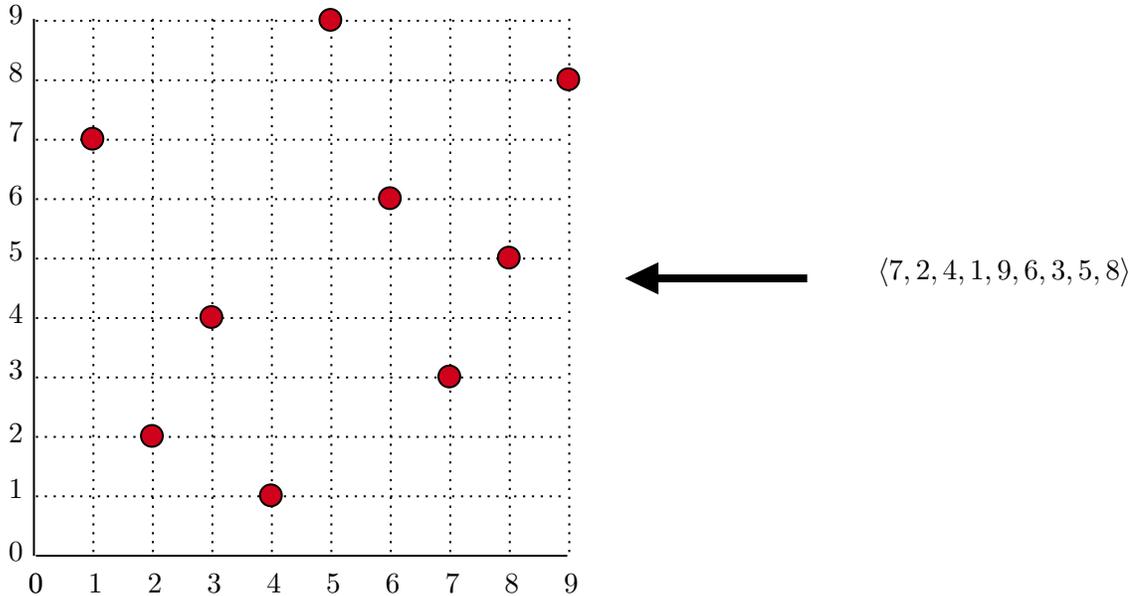
\begin{figure}[ht]

\centering

\tikzset{every picture/.style={line width=0.75pt}} %set default line width to 0.75pt        

\begin{tikzpicture}[x=0.75pt,y=0.75pt,yscale=-1,xscale=1]
%uncomment if require: \path (0,369); %set diagram left start at 0, and has height of 369

%Straight Lines [id:da3191885316685632] 
\draw  [dash pattern={on 0.84pt off 2.51pt}]  (147,49) -- (147,320) ;

%Straight Lines [id:da7559101633943939] 
\draw  [dash pattern={on 0.84pt off 2.51pt}]  (177,49) -- (177,321) ;

%Straight Lines [id:da1291104188279324] 
\draw  [dash pattern={on 0.84pt off 2.51pt}]  (207,49) -- (207,319) ;

%Straight Lines [id:da6247914431079504] 
\draw  [dash pattern={on 0.84pt off 2.51pt}]  (57,49) -- (57,320) ;

%Straight Lines [id:da2443453671621807] 
\draw  [dash pattern={on 0.84pt off 2.51pt}]  (87,49) -- (87,320) ;

%Straight Lines [id:da4676476525067903] 
\draw  [dash pattern={on 0.84pt off 2.51pt}]  (117,49) -- (117,320) ;

%Straight Lines [id:da42635072959794496] 
\draw  [dash pattern={on 0.84pt off 2.51pt}]  (237,49) -- (237,320) ;

%Straight Lines [id:da8277079805087988] 
\draw  [dash pattern={on 0.84pt off 2.51pt}]  (267,49) -- (267,319) ;

%Straight Lines [id:da24985081014096644] 
\draw  [dash pattern={on 0.84pt off 2.51pt}]  (28,80) -- (296,80) ;

%Straight Lines [id:da6339328653912732] 
\draw  [dash pattern={on 0.84pt off 2.51pt}]  (28,110) -- (296,110) ;

%Straight Lines [id:da10731858330496191] 
\draw  [dash pattern={on 0.84pt off 2.51pt}]  (28,140) -- (296,140) ;

%Straight Lines [id:da2712941152341055] 
\draw  [dash pattern={on 0.84pt off 2.51pt}]  (28,170) -- (296,170) ;

%Straight Lines [id:da7317347402429573] 
\draw  [dash pattern={on 0.84pt off 2.51pt}]  (28,200) -- (296,200) ;

%Straight Lines [id:da8994404191047722] 
\draw  [dash pattern={on 0.84pt off 2.51pt}]  (28,230) -- (296,230) ;

%Straight Lines [id:da6044897770856661] 
\draw  [dash pattern={on 0.84pt off 2.51pt}]  (28,260) -- (296,260) ;

%Straight Lines [id:da7815256961262274] 
\draw  [dash pattern={on 0.84pt off 2.51pt}]  (28,290) -- (296,290) ;

%Straight Lines [id:da06443896163595575] 
\draw    (27,49) -- (27,320) ;

%Straight Lines [id:da8160886746264304] 
\draw  [dash pattern={on 0.84pt off 2.51pt}]  (297,49) -- (297,319) ;

%Straight Lines [id:da5343639785815872] 
\draw    (28,320) -- (296,320) ;

%Straight Lines [id:da1659083679036335] 
\draw  [dash pattern={on 0.84pt off 2.51pt}]  (28,50) -- (296,50) ;

%Shape: Ellipse [id:dp9836802741092274] 
\draw  [color={rgb, 255:red, 0; green, 0; blue, 0 }  ,draw opacity=1 ][fill={rgb, 255:red, 208; green, 2; blue, 27 }  ,fill opacity=1 ] (51,109.5) .. controls (51,106.46) and (53.46,104) .. (56.5,104) .. controls (59.54,104) and (62,106.46) .. (62,109.5) .. controls (62,112.54) and (59.54,115) .. (56.5,115) .. controls (53.46,115) and (51,112.54) .. (51,109.5) -- cycle ;
%Shape: Ellipse [id:dp8584062429307988] 
\draw  [color={rgb, 255:red, 0; green, 0; blue, 0 }  ,draw opacity=1 ][fill={rgb, 255:red, 208; green, 2; blue, 27 }  ,fill opacity=1 ] (81,259.5) .. controls (81,256.46) and (83.46,254) .. (86.5,254) .. controls (89.54,254) and (92,256.46) .. (92,259.5) .. controls (92,262.54) and (89.54,265) .. (86.5,265) .. controls (83.46,265) and (81,262.54) .. (81,259.5) -- cycle ;
%Shape: Ellipse [id:dp23277672408707772] 
\draw  [color={rgb, 255:red, 0; green, 0; blue, 0 }  ,draw opacity=1 ][fill={rgb, 255:red, 208; green, 2; blue, 27 }  ,fill opacity=1 ] (111,199.5) .. controls (111,196.46) and (113.46,194) .. (116.5,194) .. controls (119.54,194) and (122,196.46) .. (122,199.5) .. controls (122,202.54) and (119.54,205) .. (116.5,205) .. controls (113.46,205) and (111,202.54) .. (111,199.5) -- cycle ;
%Shape: Ellipse [id:dp25469902245724896] 
\draw  [color={rgb, 255:red, 0; green, 0; blue, 0 }  ,draw opacity=1 ][fill={rgb, 255:red, 208; green, 2; blue, 27 }  ,fill opacity=1 ] (141,289.5) .. controls (141,286.46) and (143.46,284) .. (146.5,284) .. controls (149.54,284) and (152,286.46) .. (152,289.5) .. controls (152,292.54) and (149.54,295) .. (146.5,295) .. controls (143.46,295) and (141,292.54) .. (141,289.5) -- cycle ;
%Shape: Ellipse [id:dp8339059316567126] 
\draw  [color={rgb, 255:red, 0; green, 0; blue, 0 }  ,draw opacity=1 ][fill={rgb, 255:red, 208; green, 2; blue, 27 }  ,fill opacity=1 ] (171,49.5) .. controls (171,46.46) and (173.46,44) .. (176.5,44) .. controls (179.54,44) and (182,46.46) .. (182,49.5) .. controls (182,52.54) and (179.54,55) .. (176.5,55) .. controls (173.46,55) and (171,52.54) .. (171,49.5) -- cycle ;
%Shape: Ellipse [id:dp8836850789925752] 
\draw  [color={rgb, 255:red, 0; green, 0; blue, 0 }  ,draw opacity=1 ][fill={rgb, 255:red, 208; green, 2; blue, 27 }  ,fill opacity=1 ] (201,139.5) .. controls (201,136.46) and (203.46,134) .. (206.5,134) .. controls (209.54,134) and (212,136.46) .. (212,139.5) .. controls (212,142.54) and (209.54,145) .. (206.5,145) .. controls (203.46,145) and (201,142.54) .. (201,139.5) -- cycle ;
%Shape: Ellipse [id:dp25676474134901084] 
\draw  [color={rgb, 255:red, 0; green, 0; blue, 0 }  ,draw opacity=1 ][fill={rgb, 255:red, 208; green, 2; blue, 27 }  ,fill opacity=1 ] (231,229.5) .. controls (231,226.46) and (233.46,224) .. (236.5,224) .. controls (239.54,224) and (242,226.46) .. (242,229.5) .. controls (242,232.54) and (239.54,235) .. (236.5,235) .. controls (233.46,235) and (231,232.54) .. (231,229.5) -- cycle ;
%Shape: Ellipse [id:dp7276624991606142] 
\draw  [color={rgb, 255:red, 0; green, 0; blue, 0 }  ,draw opacity=1 ][fill={rgb, 255:red, 208; green, 2; blue, 27 }  ,fill opacity=1 ] (261,169.5) .. controls (261,166.46) and (263.46,164) .. (266.5,164) .. controls (269.54,164) and (272,166.46) .. (272,169.5) .. controls (272,172.54) and (269.54,175) .. (266.5,175) .. controls (263.46,175) and (261,172.54) .. (261,169.5) -- cycle ;
%Shape: Ellipse [id:dp012143613453408975] 
\draw  [color={rgb, 255:red, 0; green, 0; blue, 0 }  ,draw opacity=1 ][fill={rgb, 255:red, 208; green, 2; blue, 27 }  ,fill opacity=1 ] (291,79.5) .. controls (291,76.46) and (293.46,74) .. (296.5,74) .. controls (299.54,74) and (302,76.46) .. (302,79.5) .. controls (302,82.54) and (299.54,85) .. (296.5,85) .. controls (293.46,85) and (291,82.54) .. (291,79.5) -- cycle ;
%Straight Lines [id:da5514943528211631] 
\draw [line width=3]    (417,180) -- (330,180) ;
\draw [shift={(325,180)}, rotate = 360] [fill={rgb, 255:red, 0; green, 0; blue, 0 }  ][line width=3]  [draw opacity=0] (16.97,-8.15) -- (0,0) -- (16.97,8.15) -- cycle    ;

% Text Node
\draw (-34,147) node   {$ \begin{array}{l}
	\end{array}$};
% Text Node
\draw (517,177) node   {$\langle 7,2,4,1,9,6,3,5,8\rangle$};
% Text Node
\draw (28,334) node   {$0$};
% Text Node
\draw (58,334) node   {$1$};
% Text Node
\draw (88,334) node   {$2$};
% Text Node
\draw (118,334) node   {$3$};
% Text Node
\draw (148,334) node   {$4$};
% Text Node
\draw (178,334) node   {$5$};
% Text Node
\draw (208,334) node   {$6$};
% Text Node
\draw (238,334) node   {$7$};
% Text Node
\draw (268,334) node   {$8$};
% Text Node
\draw (298,334) node   {$9$};
% Text Node
\draw (28,334) node   {$0$};
% Text Node
\draw (18,318) node   {$0$};
% Text Node
\draw (18,286) node   {$1$};
% Text Node
\draw (18,258) node   {$2$};
% Text Node
\draw (18,226) node   {$3$};
% Text Node
\draw (18,198) node   {$4$};
% Text Node
\draw (18,166) node   {$5$};
% Text Node
\draw (18,138) node   {$6$};
% Text Node
\draw (18,106) node   {$7$};
% Text Node
\draw (18,76) node   {$8$};
% Text Node
\draw (18,47) node   {$9$};

\end{tikzpicture}

\caption{An array $\langle 7, 2, 4, 1, 9, 6, 3, 5, 8\rangle$ is mapped to the 2D plane.} \label{fig:lis-grid}
\end{figure}

Now, divide the plane into an $m \times m$ grid, and fix a longest increasing subsequence. The number on each cell of the grid would be equal to the contribution of the elements in that grid cell to the fixed longest increasing subsequence. (We emphasize that the number is {\em not} the longest increasing subsequence inside the cell, but the contribution to the fixed longest increasing subsequence only.)  It follows that the score of the grid is exactly equal to the size of the longest increasing subsequence. Let us assume that the score of each segment is available. To approximate the score of the grid (which equals the size of the \textsf{LIS}) we find the largest score we can obtain using non-conflicting segments by dynamic programming. The last observation which gives us speedup for \textsf{LIS} is the following:
%\Michael{The next sentence isn't following clearly -- why are we approximating  the score of the segement, and not finding it exactly?  And you're saying we're computing the LIS for the "element correspondinng to that segement".  Again, that sounds like an exact, not approximate, computations;  also, I'm not clear on what the  "elements corresponding to that segment" are.  Elements in the original sequence?  But then why are  computing the LIS -- we just said we're NOT computing LIS's within the cell?  Do you mean the grid scores along the segments?  How does this relate to grid packing we just went through in the previous paragraph but is not mentioned here?  Please clarify.}\Saeed{Revised the next paragraph.}
instead of using the score of each segment (which we are not aware of), we use the size of the \textsf{LIS} for each segment as an approximate value for its score. \textsf{LIS} of each segment can be computed in time $\tilde O(n/m)$ since at most $n/m$ elements appear in every row or every column of the grid. This quantity is clearly an upper bound on the score of each segment but can be used to construct a global solution for the entire array (see Section \ref{sec:constant} for more details). In our dynamic algorithm, every time a change is made, we only need to update the approximate score (\textsf{LIS}) of the corresponding segments. %More details about this is given in Section \ref{sec:constant}.
%Therefore, we can dynamically keep the size of \textsf{LIS} for each segment and then each time a change is made, we run the DP to update the approximate solution for the overall grid. More details about this is given in Section .
%\Michael{Again, the above is confusing;  not clear what "elements corresponding to that segment are", what we're computing LIS's on and why....} \Saeed{I cut some of the details and made it clear why we use \textsf{LIS}}

\input{figs/lis-grid2}

%\Michael{So at this point we wander off for next 5 pages, and appear to give full proofs regarding grid packing, etc.  This is not structured right.  At some point the introduction should end, and if we're presenting results, we should divide out the section and give actual results.  Here seems like a good place...}\Saeed{I'll try to cut some of the proof ideas. Since the reviewers usually read the first 10 pages, I like to bring all the ideas (and no proofs) here. If anybody is interested in the proofs, they can read the corresponding section. We can bring Section 1.3 out of the intro and make a separate section if you prefer that.}
Our solution for grid packing is based on a combinatorial construction. The first observation is that any path of length $2m-1$ from the bottom left to the top right of the grid can be decomposed into several disjoint parts such that each part is either completely in a row or completely in a column, and further column-parts or row-parts are non-conflicting using the previous terminology.
%\Michael{Rather than talk about  increasing numbers, can  we just say column-parts are non-conflicting and row-parts are non-conflicting, using our previous terminology?}\Saeed{Done. Does it look better now?}

\input{figs/vh}

Grid packing therefore reduces to the 1-dimensional variant of grid packing, \textit{array packing}, as follows. For the array packing problem, an array of length $m$ is given as input (with no numbers on it). Our goal is to define segments (this time all horizontal), while keeping the maximum number of segments covering each cell small. After we fix our solution, the adversary puts non-negative numbers on the array cells. For any fixed interval $[x,y]$ we would like to have a segment completely in that interval whose score is at least a fraction of the score of that interval. More precisely, a solution for array packing is $(\alpha,\beta)$-approximate if it covers each cell at most $\alpha$ times and the score of any interval over the maximum score of a segment inside it is bounded by $\beta$. Similar to grid packing, we are not aware of the numbers when giving a solution, and the adversary is aware of our solution before deciding which numbers to put and which interval to choose for the comparison.

%\Michael{You tend to switch back and forth between saying $(\alpha,\beta)$-approximate, $(\alpha,\beta)$ approximation, $(\alpha,\beta)$ solution, etc.  Try to pick one and stick with it, or otherwise make clear what your terminology is.}\Saeed{Fixed.}
 An $(\alpha,\beta)$-approximate solution for array packing yields a $(2\alpha,2\beta)$-approximate solution for grid packing as follows. We treat each row and each column of the grid as an array and make a separate solution for the corresponding array packing instance. After the adversary puts the numbers on the grid, any path from bottom left to the top right can be divided into disjoint column intervals or row intervals, one of which provides us a $2$ approximate solution for the score of the grid. Finally, the guarantee of array packing enables us to prove that the above solution is  $(2\alpha,2\beta)$-approximate for grid packing.

Thus, all that remains is to provide a solution for array packing. To begin, for each cell of the array, there should be one segment covering only that cell. Otherwise, there is no way to compete with an adversary that puts $1$ on that cell and 0 on the other cells and uses that cell for the chosen interval. Thus, $m$ segments of length $1$ for the $m$ cells of the array is an inevitable part of any solution. A first idea to extend this construction is to put segments of length $2$ on every other cell of the array, giving $m/2$ segments covering all of the array cells,
and continuing further, for any $1 \leq i \leq \log m$, we use $m/2^i$ segments of length $2^i$ to cover all cells of the array. While with this construction at most $\log m$ segments cover each cell, the best guarantee that we can expect for such a solution in terms of score is a $1/\Omega(\log m)$ fraction of the score of each interval. That is, such a solution is only $(O(\log m), O(\log m))$-approximate.

To improve the approximation factor to a constant, we make $m^{\kappa}$ copies of each set of segments. Roughly speaking, for segments of length $2^i$ we make $m^{\kappa}$ copies by right shifting the segments by $2^i/m^\kappa$ cells each time (see Section \ref{sec:grid} for more details about this construction and edges cases such as when $2^i < m^{\kappa}$). %\Michael{This is not clear.  Are we using $m^\kappa$ segments in total, or replacing each one of the $m/2^i$ segments with $m^\kappa$ segments?  What happens when $2^i < m^\kappa$, which is almost always the case?}\Saeed{This construction is a bit messy. Perhaps we can just say the ideas and leave the details for the technical section? (Section \ref{sec:grid})}
This clearly adds a multiplicative overhead of $m^{\kappa}$ for the number of segments covering each cell. However, as we show in Section \ref{sec:grid} it improves the second parameter of the approximation guarantee down to $O(1/\kappa)$ from $O(\log m)$.
%To see this, assume that an interval $[\alpha,\beta]$ is chosen by the adversary and we would like to prove that the score of one of the segments within this interval is an $\Omega(\kappa)$ fraction of the score of the interval. Due to our construction, we can cover this interval with two segments completely in the interval such that the remaining uncovered part is as small as $O(\frac{\beta-\alpha+1}{m^\kappa})$. Thus, if we recurse on this procedure, we can cover the entire interval with $O(1/\kappa)$ many segments completely in this interval. Therefore, the score of one of them should be at least an $\Omega(\kappa)$ fraction of the score of the interval itself.

%\Michael{If you're coming back to this proof later, this is far too much detail.  If you're not, it's not enough detail.}\Saeed{I cut a lot of the details. We are giving a formal proof for this in Section \ref{sec:grid}}

\vspace{0.2cm}
{\noindent \textbf{Theorem}~\ref{theorem:grid-cover}, [restated informally]. \textit{For any $0 < \kappa < 1$, the grid packing problem on an $m \times m$ grid admits an $(\tilde O(m^\kappa),O(1/\kappa))$-approximate solution.\\}}

Grid packing is a very strong tool for approximating \textsf{LIS}. For example, consider an array $a$ of length $n$ for which we wish to design a block-based dynamic algorithm for \textsf{LIS}. We fix a constant $0 < \kappa < 1$ and set $f(n) = \tilde O(n^{1+\kappa})$.  Let $m = n^{1/3}$ be the size of the grid we construct for this array. The horizontal thresholds are set in a way that separate the elements into $m$ different pieces each containing roughly $n/m$ elements. That is, the first threshold is the value of the $n/m$'th element of the array after sorting the numbers and so on. The vertical thresholds are set to divide the elements into $m$ different parts each containing $n/m$ elements. That is, the first part contains the first $n/m$ elements of the array and so on. This way, each element of the array corresponds to one unique cell of the grid.
%\Michael{The above paragraph is confusing.  You've explained grids. Are you saying something different here?  if not, cut everything from "the vertical thresholds" on.}  \Saeed{Yes, here we are showing how grid packing can be used for a dynamic algorithm. The previous time, we just gave intuition on how grid packing related to LIS in general.}

The most important property of this division is that every row or every column of the grid contains at most $n/m$ elements. This property is asymptotically maintained for the next $g(n) = n^{2/3}$ operations for which the block-based algorithm is responsible. Obviously, this guarantee also holds for the segments. 
%\Michael{So what's not clear at this point is what happens to segments when you insert or delete an element.  When you insert an element, some segments have to change somehow, or you can't cover everything? You need an explanation of what happens to segments on insertion/deletion;  I don't think you've explained it.  Again, this is too much detail for the introduction, and needs to be separated out.  If you want high-level ideas, fine, but all this should come later.} \Saeed{Yes. These are just the high-level ideas. Everything is explained in more details in Section \ref{sec:constant}. In particular, we talk about the issue you raise in the corresponding section.}
We solve the problem in the following way: first we make a solution for grid packing of size $m \times m$. For each segment, we make a separate instance of the dynamic \textsf{LIS} problem that solves the problem for the elements covered by that segment. Initially, we use the naive algorithm that computes \textsf{LIS} from scratch every time an operation arrives. However, since each segment corresponds to at most $O(n/m)$ elements, the worst-case update time for each segment is $\tilde O(n/m)$. Moreover, each cell is covered by at most $\tilde O(m^\kappa)$ segments which means each operation modifies at most $\tilde O(m^\kappa)$ segments. Thus, the total update time is $\tilde O(m^\kappa n/m) = \tilde O(n^{2/3+\kappa})$. In order to approximate the size of the \textsf{LIS}, every time we run a DP on the segments to find a set of non-conflicting segments whose total size of \textsf{LIS} is maximized. Notice that the \textsf{LIS} of each segment is available in time $O(1)$ (since after each update we store the size of the solution for each segment), and DP takes time $\tilde O(m^{2+\kappa})$ which is basically the total number of segments we have. This is obviously a lower bound on the actual solution size since any partial solution for a non-conflicting set of segments can be combined to obtain a global solution for the union of the elements in all segments. Moreover, the size of the \textsf{LIS} for each segment is definitely an upper bound on the contribution of that segment to the optimal solution. Thus, the solution of the DP is at least an $\Omega(\kappa)$ fraction of the size of the \textsf{LIS} for the entire array.

One thing to keep in mind is that updating the grid requires a more careful analysis. Since the column divisions are based on the indices of the elements, when we add or remove some elements, some columns may grow wider or thinner. While the grid illustration may make it seem challenging to manage such update operations, the actual implementation is straightforward. We define $m-1$ thresholds initially set to factors of $n/m$. Every time an element is added or removed, in addition to updating the binary tree data structure tracking the location of each element,  we can also update the thresholds separating the grid into subgrids..

Since the preprocessing time is $\tilde O(n^{1+\kappa})$ and we run the algorithm for $n/m = O(n^{2/3})$ steps and the worst-case update time for each operation is $\tilde O(n^{2/3+\kappa})$, this block-based algorithm can be turned into a dynamic algorithm for \textsf{LIS} with approximation factor $O(1/\kappa)$ and worst-case update time $\tilde O(n^{2/3+\kappa})$. While this is worse than the solution given in Corollary~\ref{theorem:trivial} both in terms of the approximation factor and worst-case update time, this solution can be extended to improve the update time down to $\tilde O(n^\epsilon)$ for any constant $\epsilon > 0$. All it takes to improve the update time is to replace the naive \textsf{LIS} algorithm of each segment by the more clever algorithm we explained above. While this comes at the expense of a larger approximation factor, the worst-case update time improves. We show in Section \ref{sec:constant} that by setting $\kappa = \Omega(\epsilon)$ and recursing on this algorithm $O(1/\epsilon)$ times we obtain a dynamic algorithm for \textsf{LIS} with worst case update time $\tilde O(n^{\epsilon})$ and approximation factor $O((1/\epsilon)^{O(1/\epsilon)})$.

\vspace{0.2cm}
{\noindent \textbf{Theorem}~\ref{theorem:constant}, [restated informally]. \textit{For any constant $\epsilon > 0$, there exists an algorithm for dynamic \textsf{LIS} whose worst-case update time is $\tilde O(n^{\epsilon})$ and whose approximation factor is $O((1/\epsilon)^{O(1/\epsilon)})$.\\}}

It follows from Theorem~\ref{theorem:constant} that after reporting the estimated value of the solution, we can also determine the corresponding sequence in time proportional to its size. More precisely, after using DP to construct a global solution based on partial solutions of the segment, we can find out which segments contribute to such a solution and recursively recover the corresponding increasing subsequences of the relevant segments. To this end, in addition to the DP table which we use for constructing a global solution, we also store which segments contribute to such a solution. This way, the runtime required for determine the corresponding increasing subsequence is proportional to the size of the solution.

\begin{remark}
	After reporting a solution of size $x$ by our dynamic \textsf{LIS} algorithm, our algorithm is able to report an increasing subsequence of length $x$ in time $O_\epsilon(x)$.
\end{remark}

For the special case of $\mathsf{LIS}^+$ where only insertion operations are supported, we improve the approximation factor down to $O(1/\epsilon)$. Moreover, if one favors the update time over the approximation factor, we show that the update time can be reduced to polylogarithmic if we allow the approximation factor to be $O(\log n)$.

\subsubsection{Another Example: Advisory Help}
To illustrate the effectiveness of the grid packing technique, we bring yet another example, this time in the context of streaming algorithms. It has been shown that \textsf{LIS} can be approximated within a factor of $1+\epsilon$ in the streaming model with memory $O(\sqrt{n})$~\cite{DBLP:conf/soda/GopalanJKK07}. Moreover, matching lower bounds are also provided by G{\'{a}}l and Gopalan~\cite{DBLP:conf/focs/GalG07}. They show that it is impossible to beat the $\Omega(\sqrt{n})$ barrier with any deterministic algorithm that runs in a constant number of rounds and obtains a constant factor approximation. We show that this can be improved with a randomized algorithm that reads the input in a particular order. This notion is called advisory help and has been previously studied~\cite{DBLP:conf/esa/CormodeMT10} to provide graph algorithms in the streaming model.

In such a setting, we design a streaming algorithm but we ask the adversary to give us the input in a particular order. To avoid losing information, elements come in the form $(i,a_i)$ which specifies both the position and the value of each element. We show that in three rounds we can obtain an $O(1/\kappa)$ approximation with memory $\tilde O(n^{2/5+\kappa})$. Roughly speaking, in the first round we sample $m=n^{1/5}$ elements from the array and we set horizontal lines of the grid based on their values. Vertical lines just evenly divide the elements based on their indices into portions of size $n/m$.

\input{figs/order}

In the second round, we ask the adversary to give us the elements of the array but in the row order (shown in Figure \ref{fig:order}). In this round, we compute the solution for horizontal segments. Each segment contains at most $\tilde O(n/m) = \tilde O(n^{4/5})$ element with high probability and therefore using the algorithm of~\cite{DBLP:conf/soda/GopalanJKK07} we can approximate its \textsf{LIS} within factor $1+\epsilon$ with memory $\tilde O(n^{2/5})$. Since each cell is covered by at most $\tilde O(m^\kappa)$ segments, at each step we solve the problem for at most $\tilde O(m^{\kappa})$ segments simultaneously which adds an overhead of $\tilde O(m^\kappa) = \tilde O(n^{\kappa})$ to the memory of the algorithm. Thus the overall memory is bounded by $\tilde O(n^{2/5+\kappa})$. The third round solves the problem for vertical segments similar to the horizontal ones. The only difference is that this time we ask the adversary to give us the elements in the column order. Once all the solutions for all segments are available, we run a DP with memory $\tilde O(m^{2+\kappa}) = \tilde O(n^{2/5+\kappa})$ to approximate the final solution size.  Notice that this solution is not refuted by the impossibility result of~\cite{DBLP:conf/soda/GopalanJKK07} since it both uses randomization and extra help from the adversary. In order for the adversary to provide us the array elements in this particular order, she may need to sort the numbers based on their values. However, sorting does not overly simplify the problem. Computing the \textsf{LIS} of an array is equally hard if the elements are given in the sorted order!

\subsection{Distance to Monotonicity}
Additionally, we present a dynamic algorithm for \textsf{DTM}. Distance to monotonicity seems to be more tractable than \textsf{LIS} since previous work obtain much more efficient algorithms for \textsf{DTM} than \textsf{LIS}~\cite{DBLP:conf/soda/AndoniN10,DBLP:conf/soda/GopalanJKK07,DBLP:conf/soda/NaumovitzSS17}. In particular, there are several known techniques for approximating \textsf{DTM} within a constant factor. As an example, one can model the problem with a graph containing $n$ vertices each corresponding to an element. There is an edge between two vertices, if the corresponding elements are not increasing. While in this interpretation, \textsf{LIS} is equivalent to the largest independent set of the graph, \textsf{DTM} translates to vertex cover which can be approximated within a factor of $2$ by maintaining a maximal matching. As part of our algorithm, we show that such a maximal matching can be maintained with worst-case update time $O(\log ^2 n)$ which yields a dynamic algorithm for \textsf{DTM} with approximation factor $2$ and worst-case update time $O(\log ^2 n)$.

However, we further strengthen this result by improving the approximation factor down to $1+\epsilon$ while keeping the update time intact. The heart of our improvement is based on an exact algorithm for computing \textsf{DTM} when an approximate solution is available. We show in Section \ref{sec:dtm} that given random access to the elements of an array $a$ of size $n$ and a constant approximate solution of size $k$ for the array, one can compute an exact solution for \textsf{DTM} in time $O(k \log n)$. Just knowing the size of the approximate solution does not suffice here; our algorithm requires random access to the elements of the approximate solution as well.

\vspace{0.2cm}
{\noindent \textbf{Lemma}~\ref{lemma:dtm}, [restated informally]. \textit{Let $a$ be an array of length $n$ and $S$ be a set of $k$ elements whose removal from $a$ makes $a$ increasing. One can compute the distance to monotonicity of $a$ in time $O(k \log n)$.\\}}

The above algorithm, in addition to the 2-approximate solution, yields a $1+\epsilon$ approximation block-based algorithm for \textsf{DTM}. Starting from an array $a$ and provided access to a $2$-approximate solution, we set $f(a) = \tilde O(k)$ where $k$ is the size of the approximate solution. Here, $f,g,$ and $h$ do not depend on $n$ since the runtimes depend on the solution size. See Section \ref{sec:amortized} for more information. Moreover, $g(a) = k\epsilon/2$ and $h(a) = O(1)$. After computing an exact solution via the algorithm of Lemma~\ref{lemma:dtm} in the preprocessing phase, we keep reporting $d+i$ as an estimate for \textsf{DTM} for the $i$'th operation where $d$ is the solution for the initial array. The block-based algorithm then can be used to obtain a dynamic algorithm with worst-case update time $O(\log ^2 n)$.

\vspace{0.2cm}
{\noindent \textbf{Theorem}~\ref{theorem:dtm}, [restated informally]. \textit{For any constant $\epsilon > 0$, there exists an algorithm for dynamic \textsf{DTM} whose worst-case update time is $O(\log^2 n)$ and whose approximation factor is $1+\epsilon$.\\}}

Although our method is simple, it has a nice implication for classic algorithms. We show that using Lemma~\ref{lemma:dtm}, we can approximate \textsf{DTM} in time $O(n)$ within an approximation factor $1+\epsilon$ (a log is shaved from the runtime by incurring a factor $1+\epsilon$ to the approximation guarantee). This result is tight in two ways: i) Any constant factor approximation algorithm for \textsf{DTM} has to make at least $\Omega(n)$ value queries to the elements of the array to solve the case that the solution is either 0 or 1. ii) Any exact solution which is comparison based or based on algebraic decisions trees has a runtime of at least $\Omega(n \log n)$~\cite{fredman1975computing,ramanan1997tight}.

To achieve this, we first compute a 2-approximate solution for \textsf{DTM} in time $O(n)$ (see Section \ref{sec:dtmclassic} for more details). If the size of the solution is smaller than $\sqrt{n}$, we use the algorithm of Lemma \ref{lemma:dtm} to obtain an exact solution in linear time. Otherwise, we use the algorithm of ~\cite{DBLP:conf/soda/NaumovitzSS17} to obtain a $1+\epsilon$ approximate solution in time $O(n)$\footnote{The runtime of the algorithm given in~\cite{DBLP:conf/soda/NaumovitzSS17} is $\tilde O(n/d+\sqrt{n})$ when the solution size is lower bounded by $d$.}.% Also, if a stronger type of queries for the array is available one can improve the runtime down to $\tilde O(\sqrt{n})$. In such queries, we give two indices $i$ and $j$ of the array and the oracle tells us whether all the elements in this range are sorted or otherwise it reports a conflicting pair.

\newpage
\section{From Amortized Update Time to Worst Case Update Time}\label{sec:amortized}
The goal of this section is to show a reduction that simplifies the problem with respect to worst-case time constraints. Ultimately, in our algorithms, we prove that the update time of each operation is bounded in the worst case. %\Michael{hyphen note: "worst case" is a noun; worst-case is an adjective.  I'll fix where I see...}\Saeed{Got it. Will fix the rest of the sections accordingly.}
However, it is more convenient to allow for larger update times in some cases, while keeping a bounded amortized update time. In this section, we prove that if our algorithms adhere to a certain structure, then worst-case update time reduces to amortized update time. Later in the section, we present a motivating example to show how the reduction enables us to simplify the proofs. In our example, we seek to design a $1+\epsilon$ approximation algorithm for dynamic \textsf{LIS} with worst-case update time $\tilde O(\sqrt{n})$.
%\Michael{If we're proving it here remove all but a sketch earlier.}\Saeed{Yes, we are giving a complete proof here. Will remove the unnecessary details from Section 2.}

In our framework, we start with an array $a$ of size $n$ and our algorithm is allowed to make a preprocessing of time $f(a)$. For the next $g(a)$ steps, the processing time of each operation is bounded by $h(a)$ in the worst case. After $g(a)$ steps, our algorithm is no longer responsible for the operations and terminates. We refer to such an algorithm as \textit{block-based}. Note $f(a)$, $g(a)$, and $h(a)$ are not necessarily determined based only on the size $n$ of the array $a$. For example, in the algorithm of Section \ref{sec:dtm}, the values of these functions are proportional to the solution size for $a$, not  $n$. However, when these functions are only dependent on $n$, we may drop $a$ in the notation and use $n$ instead. Functions $f,g,$ and $h$ should meet one important property: after applying $g(a)$ arbitrary operations to an array $a$ and obtaining a new array $a'$, $f(a'), g(a')$ and $h(a')$ should not change asymptotically.  More specifically, although the reduction holds when the values are within any constant factor, we assume $1/2 \leq f(a)/f(a'), g(a)/g(a'), h(a)/h(a') \leq 2$. We call this property \textit{relativity}. We also assume without loss of generality that $f,g,$ and $h$ are always lower bounded by a constant (say $20$) and when the array size is constant, so are the values for $f(a), g(a),$ and $h(a)$.

 We show in the following that a block-based algorithm $\mathcal{A}$ for \textsf{LIS} or \textsf{DTM} with identifiers $\langle f,g,h \rangle$ can be used as a black box to obtain a dynamic algorithm $\mathcal{A'}$ with worst-case update time $O(\max\{h(a), f(a) / g(a)\})$. The approximation factor of the algorithm is preserved in this reduction. To show a use case of this technique, we provide a simple analysis of a $(1+\epsilon)$-approximate dynamic algorithm for \textsf{LIS} with worst-case update time $\tilde O(\sqrt{n})$. 
 
\begin{lemma}\label{lemma:reduction}
	Let $\mathcal{A}$ be a block-based algorithm with preprocessing time $f(a)$ that approximates dynamic \textsf{LIS} or dynamic \textsf{DTM} for up to $g(a)$ many steps with worst-case update time $h(a)$. If $\langle f,g,h\rangle$ satisfies relativity then there exists a dynamic algorithm $\mathcal{A}'$ for the same problem whose worst-case update time is bounded by $O(\max\{h(a), f(a) / g(a)\})$ and whose approximation factor is the same as $\mathcal{A}$.
\end{lemma}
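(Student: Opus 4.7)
The plan is to deamortize $\mathcal{A}$ by running two pipelined copies of it in parallel: a \emph{current} instance $C$ that actively serves queries, and a \emph{next} instance $N$ that is being prepared in the background on a recent snapshot of the array. When $C$ approaches the end of its block, $N$ takes over as the new current instance and a fresh $N$ is spawned.

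Concretely, when a new $N$ is spawned I would proceed as follows: snapshot the current array $a$ implicitly by beginning to record every subsequent operation in a FIFO queue, and let $F = f(a)$, $L = g(a)$, $H = h(a)$. Over the next $L/4$ real-time operations $N$ is prepared in two phases. In the \emph{preprocessing phase} (the first $L/8$ operations after spawning), we spend $O(F/L)$ preprocessing time on $N$ per real-time step, completing the $F$-time offline preprocessing of $\mathcal{A}$ against the snapshot. In the \emph{catch-up phase} (the next $L/8$ operations), we replay enqueued operations into $N$ at the rate of two per real-time step, each replay costing $O(H)$; fresh operations continue to be enqueued during catch-up. By the end of the $L/4$-th operation after spawning, $N$ has processed exactly $L/4$ operations from its snapshot state (in order) and is in sync with the current array. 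Throughout the entire $L/4$ steps, $C$ continues to answer queries at cost $O(H)$ per operation, and the estimate returned to the user is $C$'s. Summing the per-operation costs gives $O(H) + O(F/L) + O(H) = O\bigl(\max\{h(a), f(a)/g(a)\}\bigr)$ worst-case time per real-time operation.

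Next I would verify that the pipeline is self-sustaining and that the approximation guarantee transfers. At the moment $N$ takes over, it has consumed $L/4$ of its $g(a) = L$-operation block, so $3L/4$ operations remain. The next preparation cycle will occupy $L'/4$ real-time steps, where $L' = g(a')$ is the block length of the new snapshot $a'$ taken at swap-time; by the relativity assumption, $L' \leq 2L$, hence $L'/4 \leq L/2 \leq 3L/4$, and the new preparation comfortably fits inside the remaining budget of the just-promoted $N$. The initial instance can be prepared from the empty starting array in $O(1)$ preprocessing time (we may assume $f,g,h$ are constants on arrays of constant size), so no pipeline is needed before the first swap. Since at every real-time step some instance of $\mathcal{A}$ is exactly synchronized with the current array—namely $C$—and we always report $C$'s estimate, the output of $\mathcal{A}'$ inherits verbatim the approximation factor of $\mathcal{A}$.

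The main obstacle will be the bookkeeping needed to keep the invariants tight: the queue must replay operations into $N$ in the correct order, the block-counters of $C$ and $N$ must reflect exactly the operations each has actually processed, and the sub-phase constants (here $1/8$ and $1/4$) must be chosen so the newly-promoted $C$ always has strictly more remaining block budget than the next preparation cycle needs. The relativity hypothesis is precisely what rules out adversarial shrinkage of $g$ within a round, so the constants above suffice whenever $g$ changes by at most a factor of $2$; a tighter relativity assumption would only shrink these constants and not affect the asymptotic worst-case bound $O(\max\{h(a), f(a)/g(a)\})$.
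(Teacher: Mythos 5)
Your proposal is correct and follows essentially the same route as the paper's proof: two pipelined copies of the block-based algorithm, with the next instance's preprocessing spread over a constant fraction of the current block, queued operations replayed at double rate to catch up, and relativity invoked to ensure each promoted instance has enough remaining block budget to host the next preparation cycle. The only differences are the choice of constants (your $1/8$, $1/4$ versus the paper's $1/20$, $1/10$), which do not affect the argument.
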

\begin{proof}
\begin{figure}[ht]

\tikzset{every picture/.style={line width=0.75pt}} %set default line width to 0.75pt        

\begin{tikzpicture}[x=0.75pt,y=0.75pt,yscale=-1,xscale=1]
%uncomment if require: \path (0,197); %set diagram left start at 0, and has height of 197

%Shape: Rectangle [id:dp9378736597768127] 
\draw   (39,29) -- (301.5,29) -- (301.5,52) -- (39,52) -- cycle ;
%Straight Lines [id:da7598911370300216] 
\draw [color={rgb, 255:red, 208; green, 2; blue, 27 }  ,draw opacity=1 ] [dash pattern={on 4.5pt off 4.5pt}]  (239,29) -- (239,79) ;
\draw [shift={(239,81)}, rotate = 270] [fill={rgb, 255:red, 208; green, 2; blue, 27 }  ,fill opacity=1 ][line width=0.75]  [draw opacity=0] (10.72,-5.15) -- (0,0) -- (10.72,5.15) -- (7.12,0) -- cycle    ;

%Shape: Rectangle [id:dp862665075058989] 
\draw   (239,88) -- (501.5,88) -- (501.5,111) -- (239,111) -- cycle ;
%Straight Lines [id:da5148161774301114] 
\draw [color={rgb, 255:red, 208; green, 2; blue, 27 }  ,draw opacity=1 ] [dash pattern={on 4.5pt off 4.5pt}]  (439,32) -- (439,138) ;
\draw [shift={(439,140)}, rotate = 270] [fill={rgb, 255:red, 208; green, 2; blue, 27 }  ,fill opacity=1 ][line width=0.75]  [draw opacity=0] (10.72,-5.15) -- (0,0) -- (10.72,5.15) -- (7.12,0) -- cycle    ;

%Shape: Rectangle [id:dp6160675512833309] 
\draw   (436,143) -- (698.5,143) -- (698.5,166) -- (436,166) -- cycle ;
%Straight Lines [id:da7333810832609753] 
\draw [color={rgb, 255:red, 189; green, 16; blue, 224 }  ,draw opacity=1 ][fill={rgb, 255:red, 189; green, 16; blue, 224 }  ,fill opacity=1 ][line width=1.5]    (246,41) -- (296.5,41) ;
\draw [shift={(299.5,41)}, rotate = 180] [color={rgb, 255:red, 189; green, 16; blue, 224 }  ,draw opacity=1 ][line width=1.5]    (14.21,-6.37) .. controls (9.04,-2.99) and (4.3,-0.87) .. (0,0) .. controls (4.3,0.87) and (9.04,2.99) .. (14.21,6.37)   ;
\draw [shift={(243,41)}, rotate = 0] [color={rgb, 255:red, 189; green, 16; blue, 224 }  ,draw opacity=1 ][line width=1.5]    (14.21,-6.37) .. controls (9.04,-2.99) and (4.3,-0.87) .. (0,0) .. controls (4.3,0.87) and (9.04,2.99) .. (14.21,6.37)   ;
%Straight Lines [id:da4359865948636126] 
\draw [color={rgb, 255:red, 189; green, 16; blue, 224 }  ,draw opacity=1 ][fill={rgb, 255:red, 189; green, 16; blue, 224 }  ,fill opacity=1 ][line width=1.5]    (446,100) -- (496.5,100) ;
\draw [shift={(499.5,100)}, rotate = 180] [color={rgb, 255:red, 189; green, 16; blue, 224 }  ,draw opacity=1 ][line width=1.5]    (14.21,-6.37) .. controls (9.04,-2.99) and (4.3,-0.87) .. (0,0) .. controls (4.3,0.87) and (9.04,2.99) .. (14.21,6.37)   ;
\draw [shift={(443,100)}, rotate = 0] [color={rgb, 255:red, 189; green, 16; blue, 224 }  ,draw opacity=1 ][line width=1.5]    (14.21,-6.37) .. controls (9.04,-2.99) and (4.3,-0.87) .. (0,0) .. controls (4.3,0.87) and (9.04,2.99) .. (14.21,6.37)   ;
%Straight Lines [id:da2682301879425284] 
\draw  [dash pattern={on 0.84pt off 2.51pt}]  (116.45,134.57) -- (272.5,100) ;
\draw [shift={(272.5,100)}, rotate = 347.51] [color={rgb, 255:red, 0; green, 0; blue, 0 }  ][fill={rgb, 255:red, 0; green, 0; blue, 0 }  ][line width=0.75]      (0, 0) circle [x radius= 3.35, y radius= 3.35]   ;
\draw [shift={(114.5,135)}, rotate = 347.51] [fill={rgb, 255:red, 0; green, 0; blue, 0 }  ][line width=0.75]  [draw opacity=0] (8.93,-4.29) -- (0,0) -- (8.93,4.29) -- cycle    ;
%Straight Lines [id:da9562644571915735] 
\draw  [dash pattern={on 0.84pt off 2.51pt}]  (116.5,135.12) -- (469.5,156) ;
\draw [shift={(469.5,156)}, rotate = 3.39] [color={rgb, 255:red, 0; green, 0; blue, 0 }  ][fill={rgb, 255:red, 0; green, 0; blue, 0 }  ][line width=0.75]      (0, 0) circle [x radius= 3.35, y radius= 3.35]   ;
\draw [shift={(114.5,135)}, rotate = 3.39] [fill={rgb, 255:red, 0; green, 0; blue, 0 }  ][line width=0.75]  [draw opacity=0] (8.93,-4.29) -- (0,0) -- (8.93,4.29) -- cycle    ;
%Shape: Rectangle [id:dp3838283745049702] 
\draw  [fill={rgb, 255:red, 155; green, 155; blue, 155 }  ,fill opacity=1 ] (20,124) -- (64.5,124) -- (64.5,147) -- (20,147) -- cycle ;
%Shape: Rectangle [id:dp19287683682351742] 
\draw  [fill={rgb, 255:red, 255; green, 255; blue, 255 }  ,fill opacity=1 ] (65,124) -- (109.5,124) -- (109.5,147) -- (65,147) -- cycle ;
%Straight Lines [id:da501853240981107] 
\draw [line width=0.75]  [dash pattern={on 4.5pt off 4.5pt}]  (301.5,56) -- (301.5,116) ;

%Straight Lines [id:da35773891795656665] 
\draw [line width=0.75]  [dash pattern={on 4.5pt off 4.5pt}]  (501.5,30) -- (501.5,167) ;

%Straight Lines [id:da890848584412387] 
\draw    (94.5,138) -- (94.5,167) ;
\draw [shift={(94.5,169)}, rotate = 270] [fill={rgb, 255:red, 0; green, 0; blue, 0 }  ][line width=0.75]  [draw opacity=0] (8.93,-4.29) -- (0,0) -- (8.93,4.29) -- cycle    ;

%Straight Lines [id:da4780426360771186] 
\draw    (42.5,113) -- (42.5,142) ;

\draw [shift={(42.5,111)}, rotate = 90] [fill={rgb, 255:red, 0; green, 0; blue, 0 }  ][line width=0.75]  [draw opacity=0] (8.93,-4.29) -- (0,0) -- (8.93,4.29) -- cycle    ;

% Text Node
\draw (23,39) node   {$\mathcal{B}_{1}$};
% Text Node
\draw (41,66) node [scale=0.9]  {$a^{(1)}$};
% Text Node
\draw (223,98) node   {$\mathcal{B}_{2}$};
% Text Node
\draw (420,153) node   {$\mathcal{B}_{3}$};
% Text Node
\draw (556,153) node [scale=2.488]  {$\dotsc $};
% Text Node
\draw (67,96) node  [align=left] {preprocessing};
% Text Node
\draw (132,176) node  [align=left] {updating two operations in each step};
% Text Node
\draw (269,62) node [scale=0.7,color={rgb, 255:red, 144; green, 19; blue, 254 }  ,opacity=1 ]  {$g(a^{(1)}) /10$};
% Text Node
\draw (470,121) node [scale=0.7,color={rgb, 255:red, 144; green, 19; blue, 254 }  ,opacity=1 ]  {$g(a^{(2)}) /10$};
% Text Node
\draw (242,126) node [scale=0.9]  {$a^{(2)}$};
% Text Node
\draw (439,181) node [scale=0.9]  {$a^{(3)}$};
% Text Node
\draw (43,15) node [scale=0.7]  {$1$};
% Text Node
\draw (240,15) node [scale=0.7]  {$\frac{9g(a^{(1)})}{10}$};
% Text Node
\draw (301,13) node [scale=0.7]  {$g(a^{(1)})$};
% Text Node
\draw (512,15) node [scale=0.7]  {$\frac{9g(a^{(1)})}{10} +g(a^{(2)})$};
% Text Node
\draw (428,15) node [scale=0.7]  {$\frac{9( g(a^{(1)}) +g(a^{(2)}))}{10}$};

\end{tikzpicture}

\caption{The reduction is shown in this figure.} \label{fig:reduction}
\end{figure}
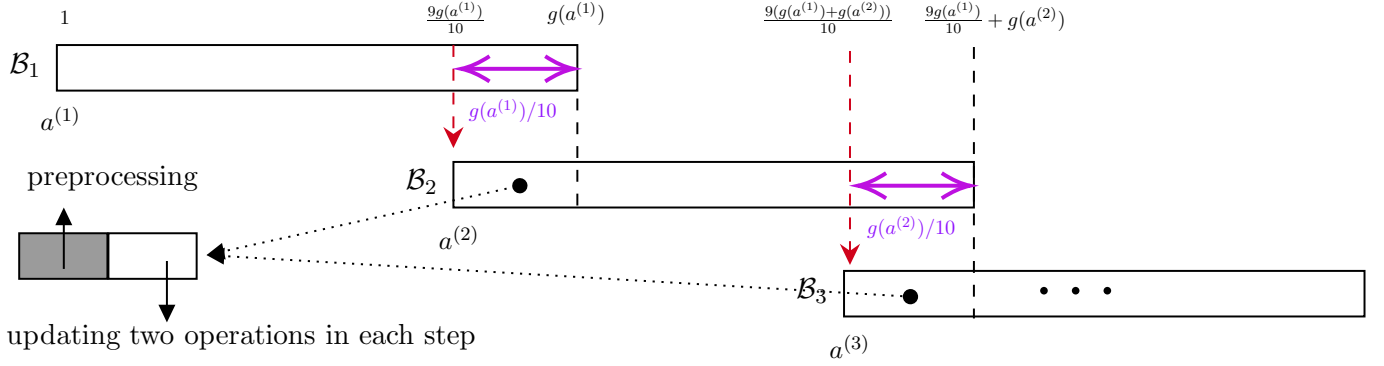

%\Saeed{The figure needs minor modifications.}

%\Saeed{We should distinguish between the step number and the size of the array in the figure.}

Figure \ref{fig:reduction} gives a pictorial depiction of the proof idea. We construct an algorithm $\mathcal{A'}$ in the following way: $\mathcal{A}'$ uses algorithm $\mathcal{A}$ repeatedly. To distinguish between multiple instances of $\mathcal{A}$, we add subscripts; the first time we use algorithm $\mathcal{A}$ we call it $\mathcal{B}_1$. Every $\mathcal{B}_i$ is basically a copy of the block-based algorithm $\mathcal{A}$ which is modified slightly to execute the preprocessing part in multiple steps. We begin with using our block-based algorithm $\mathcal{B}_1$ at step 1. At this point we call the initial array (which is empty) $a^{(1)}$. Since the size of the array is constant, so is the preprocessing time and therefore we can ignore it when bounding the time complexity. For $g(a^{(1)})$ many steps, we use algorithm $\mathcal{B}_1$ to preserve an approximate solution and from then on, we use a separate algorithm for the rest of the operations, namely $\mathcal{B}_2$. The construction of $\mathcal{B}_2$ is given below:
%\Michael{This notation is bad.  I think that $\mathcal{A}_2$  is just another  run of  $\mathcal{A}_1$, but it's not.  Let's call it $\mathcal{B}_2$ or something, and explain that  the $\mathcal{B}_i$ run two copies of  $\mathcal{A}$ in parallel.}\Saeed{Done. Added a sentence above to clarify the similarity and differences between $\mathcal{B}_i$ and $\mathcal{A}$.}

%\Michael{The below is confusing  because it's not clear what $\mathcal{B}_2$ is;  best to  change  notation. }
When $\mathcal{B}_1$ has gone $9/10$ of the way and is only responsible for $g(a^{(1)})/10$ more operations, we initiate algorithm $\mathcal{B}_2$. Let $a^{(2)}$ be the array at this point. $\mathcal{B}_2$ needs to run the preprocessing step which requires $f(a^{(2)})$ many operations. This may not be possible in a single step, therefore, we break the computation into $g(a^{(1)})/20$ pieces and execute each piece in the next $g(a^{(1)})/20$ steps. Moreover, in the next $g(a^{(1)})/20$ steps operations that arrive after the construction of $\mathcal{B}_2$ are processed: two operations in each step. While this is happening, algorithm $\mathcal{B}_1$ processes the operations and updates the solution size. When we reach $g(a^{(1)})/10$ many steps after the construction of $\mathcal{B}_2$, algorithm $\mathcal{B}_2$ has already finished the preprocessing and all the operations that have arrived so far are applied to it.  This is exactly the time that $\mathcal{B}_1$ terminates, and from then on, we use algorithm $\mathcal{B}_2$ to process each operation.

Similarly, $\mathcal{B}_3$ is constructed when $\mathcal{B}_2$ has applied $g(a^{(2)}) 9/10$ operations. This construction goes on as long as operations arrive.

The correctness of $\mathcal{A}'$ follows from that of $\mathcal{A}$. Therefore, any approximation factor that $\mathcal{A}$ guarantees for us also carries over to $\mathcal{A}'$. For the update time, the construction guarantees that at every time, at most two instances of algorithm $\mathcal{A}$ are active. At every step, in each algorithm, we either perform an operation or we are initializing the algorithm in which case the update time %\Michael{Not clear what this means -- runtime  = worst-case step time?  Runtime typically refers to full run, not a step.  I think you mean "update time", time  todo  an update?}\Saeed{That's correct. Replaced runtime by update time. Will make that change globally.}
is bounded by $O(\max\{f(a)/g(a),h(a)\})$. 
%\Michael{Worth adding a sentence that there are no problems if operation "stop" during the middle of one of these blocks-- but why is that the case?}

One thing to keep in mind in the above argument is that because of relativity the value of functions $f,g,$ and $h$ remain asymptotically the same during two consecutive runs of algorithm $\mathcal{A}$. Thus, $g(a)$ is within a constant factor for two consecutive runs of $\mathcal{A}$ and therefore $f(a^{(i)})/g(a^{(i-1)})$ is asymptotically the same as $f(a^{(i)})/g(a^{(i)})$.
\end{proof}
 
We emphasize that there is a constant-factor overhead in the update-time of the reduction which is hidden in the $O$ notation.
 
To illustrate the effectiveness of our reduction, we bring a motivating example to show how it simplifies the design of dynamic algorithms for \textsf{LIS}.
 
 \subsection{Warm Up: Block-based Algorithm for \textsf{LIS}}\label{sec:warmup}
 The reduction of Section \ref{sec:amortized} gives us a very convenient framework to design dynamic algorithms for  \textsf{LIS} and \textsf{DTM}. Here we bring a simple block-based algorithm for dynamic \textsf{LIS} with approximation factor $1+\epsilon$ that results in an algorithm with the same approximation factor and worst-case update time $\tilde O(\sqrt{n})$.
 
 Since in the following, functions $f,g$, and $h$ only depend on the size of the array in this case, we write them in terms of $n$.
 
 \begin{lemma}\label{lemma:trivial}
 	For any $\epsilon > 0$, there exists a block-based algorithm for \textsf{LIS} with approximation factor $1+\epsilon$, preprocessing time $f(n) = \tilde O(n)$ and update time $h(n) = \tilde O(\sqrt{n})$ that maintains an approximate solution to \textsf{LIS} for up to $g(n) = \sqrt{n}$ many operations.
 \end{lemma}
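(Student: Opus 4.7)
The plan is to combine patience sorting (for a clean one-shot exact value) with the Chen et al.\ $\tilde O(r)$-update dynamic algorithm, choosing between them \emph{up front} based on the initial \textsf{LIS} size, so that no mid-block switching is ever needed. For preprocessing I will run patience sorting on $a$ to compute the exact value $r := \textsf{LIS}(a)$ in $O(n \log n) = \tilde O(n)$ time, and in the same budget initialize the exact dynamic data structure of~\cite{chen2013dynamic}; this gives $f(n) = \tilde O(n)$. With $r$ in hand I commit to one of two per-step strategies for the entire block of $g(n) = \sqrt{n}$ operations. This sidesteps the deamortization headache from the warm-up discussion, because the branching decision is made once and cannot be retaken during the block.

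Case 1 (large initial LIS). If $r \ge (2+\epsilon)\sqrt{n}/\epsilon$, then at the $i$-th of the next $\sqrt{n}$ operations I will simply report $r - i$ in $O(1)$ time. Because each insertion or deletion changes \textsf{LIS} by at most one, the true value $\opt_i$ lies in $[r-i,\, r+i]$, so the reported value is always a lower bound on $\opt_i$. The approximation ratio then satisfies
\[
\frac{\opt_i}{r-i} \;\le\; \frac{r+i}{r-i} \;\le\; 1 + \epsilon,
\]
where the last inequality is exactly $r \ge i(2+\epsilon)/\epsilon$, which holds throughout the block since $i \le \sqrt{n}$ and the threshold on $r$ was chosen accordingly.

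Case 2 (small initial LIS). If $r < (2+\epsilon)\sqrt{n}/\epsilon$, then during the block the true \textsf{LIS} is bounded above by $r + \sqrt{n} = O(\sqrt{n}/\epsilon)$, so I will feed each operation directly to the Chen et al.\ exact dynamic algorithm; its worst-case per-operation cost on this array is $\tilde O(\sqrt{n}/\epsilon) = \tilde O(\sqrt{n})$, and an exact value is trivially a $(1+\epsilon)$-approximation.

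In both cases $h(n) = \tilde O(\sqrt{n})$ and $g(n) = \sqrt{n}$, matching the claimed parameters, and relativity of $\langle f,g,h\rangle$ is automatic since all three are polynomial in $n$ while $|a|$ changes by at most $\sqrt{n}$ over one block. The one subtle point, and the place where I expect most of the care to go, is the asymmetric error analysis in Case~1: since an insertion can \emph{raise} the \textsf{LIS} while a deletion can \emph{lower} it, the true value can drift in either direction by up to $i$ after $i$ operations. This forces me to report the time-varying lower bound $r-i$ (rather than the constant $r$) to stay below $\opt_i$, and to inflate the threshold to roughly $(2+\epsilon)\sqrt{n}/\epsilon$ rather than the $2\sqrt{n}/\epsilon$ quoted informally in the warm-up, so that the worst-case ratio $(r+i)/(r-i)$ remains within $1+\epsilon$ for every $i \le \sqrt{n}$.
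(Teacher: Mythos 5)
Your proposal is correct and follows essentially the same route as the paper: compute the exact \textsf{LIS} value $r$ in preprocessing, then either report $r-i$ at step $i$ when $r$ is large or run the Chen \textit{et al.} exact dynamic algorithm (set up during preprocessing) when $r$ is small, with the solution size staying $O(\sqrt{n}/\epsilon)$ throughout the block. Your only deviation is tightening the threshold to $(2+\epsilon)\sqrt{n}/\epsilon$ so the worst-case ratio $(r+i)/(r-i)$ is exactly at most $1+\epsilon$, a minor sharpening of the paper's constant $2\sqrt{n}/\epsilon$ that does not change the argument.
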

 \begin{proof}
 	In the preprocessing phase, we first compute the longest increasing subsequence in time $\tilde O(n)$. Let the solution size be $r$. Based on the value of $r$, we consider two different strategies: If $r$ is already at least $2\sqrt{n}/\epsilon$, in the next $\sqrt{n}$ steps, we do not make any changes to the array and output $r-i$ after $i$'th operation. Since each operation hurts the solution size by an additive factor of at most $1$, our solution is always valid and has approximation factor $1+\epsilon$ throughout this process.
 	
 	Otherwise, we use the algorithm of Chen \textit{et al.}~\cite{chen2013dynamic} to update the solution in every step. The setup cost for the algorithm of Chen \textit{et al.}~\cite{chen2013dynamic} is $\tilde O(n)$ which can be executed in the preprocessing step (since this is not explicitly mentioned in~\cite{chen2013dynamic}, we bring more details about their algorithm in Appendix~\ref{sec:chen-ap}). Moreover, the solution size is initially upper bounded by $2\sqrt{n}/\epsilon$ and it can grow to at most $2\sqrt{n}/\epsilon+\sqrt{n}$ after $\sqrt{n}$ operations. Therefore, the update time remains $\tilde O(\sqrt{n})$ in the worst case.
 \end{proof}

 Based on Lemma \ref{lemma:reduction}, a dynamic algorithm for \textsf{LIS} can be implemented with worst-case update time $\tilde O(\sqrt{n})$ and approximation factor $1+\epsilon$. 
 \begin{theorem}\label{theorem:trivial}[a corollary of Lemmas \ref{lemma:reduction} and \ref{lemma:trivial}]
 	For any constant $\epsilon > 0$, there exists a dynamic algorithm for \textsf{LIS} with approximation factor $1+\epsilon$ and update time $\tilde O(\sqrt{n})$ in the worst case.
 \end{theorem}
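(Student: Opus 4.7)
The plan is to obtain Theorem~\ref{theorem:trivial} as a direct composition of Lemma~\ref{lemma:trivial} (which supplies a block-based algorithm for \textsf{LIS}) and Lemma~\ref{lemma:reduction} (which converts any such block-based algorithm, under the relativity hypothesis, into a dynamic one whose worst-case update time is $O(\max\{h(a), f(a)/g(a)\})$ while preserving the approximation factor). First I would record the three parameters produced by Lemma~\ref{lemma:trivial}: preprocessing time $f(n) = \tilde O(n)$, block length $g(n) = \sqrt n$, per-step worst-case update time $h(n) = \tilde O(\sqrt n)$, and approximation ratio $1+\epsilon$.

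Before invoking Lemma~\ref{lemma:reduction} I need to verify the relativity condition, i.e.\ that $f$, $g$, $h$ change by at most a constant factor between the start and the end of a single block. Starting from an array of size $n$, after $g(n)=\sqrt n$ arbitrary insertion/deletion operations the new array $a'$ has length $n'$ with $|n'-n|\le \sqrt n$, so $n/n' \in [1/2,2]$ for all $n$ at least some absolute constant (and the lemma statement permits us to assume this, since $f,g,h$ are lower bounded by a constant and are constant for constant-size arrays). Since $f(n)=\tilde O(n)$, $g(n)=\sqrt n$, $h(n)=\tilde O(\sqrt n)$ are all monotone power-of-$n$ expressions up to polylogarithmic factors, each of the ratios $f(a)/f(a')$, $g(a)/g(a')$, $h(a)/h(a')$ stays within a constant, so relativity holds.

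Now I apply Lemma~\ref{lemma:reduction}. The resulting dynamic algorithm has worst-case update time
\[
O\bigl(\max\{h(n),\, f(n)/g(n)\}\bigr) = O\bigl(\max\{\tilde O(\sqrt n),\, \tilde O(n)/\sqrt n\}\bigr) = \tilde O(\sqrt n),
\]
and preserves the $1+\epsilon$ approximation factor guaranteed by Lemma~\ref{lemma:trivial}. Taking $\epsilon$ to be the constant from the theorem statement completes the argument.

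There is no real obstacle here: both lemmas do all the work. The only thing to be slightly careful about is making the relativity check explicit, because Lemma~\ref{lemma:reduction} is stated with $f,g,h$ allowed to depend on the array rather than just on $n$; however in this warm-up application the dependence is purely on $|a|$, and a $\sqrt n$ sized change in $|a|$ is negligible compared to $n$, so the verification is immediate.
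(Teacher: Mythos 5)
Your proposal is correct and follows exactly the paper's route: the theorem is obtained by plugging the block-based algorithm of Lemma~\ref{lemma:trivial} (with $f(n)=\tilde O(n)$, $g(n)=\sqrt n$, $h(n)=\tilde O(\sqrt n)$) into the reduction of Lemma~\ref{lemma:reduction}, giving worst-case update time $O(\max\{h(n),f(n)/g(n)\})=\tilde O(\sqrt n)$ with the approximation factor preserved. Your explicit verification of the relativity condition is a small addition the paper leaves implicit, but it matches the intended argument.
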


\newpage
\section{Grid Packing}\label{sec:grid}
This section is dedicated to a combinatorial problem which we call \textit{grid packing}. Definitions are given in Section~\ref{sec:results} but for the sake of completeness we restate them here. In this problem, we have a table of size $m \times m$. Our goal is to introduce a number of segments on the table. Each segment either covers a consecutive set of cells in a row or in a column. A segment $A$ \textit{precedes} a segment $B$ if \textbf{every} cell of $A$ is strictly higher than every cell of $B$ and also \textbf{every} cell of $A$ is strictly to the right of every cell of $B$. Two segments are \textit{non-conflicting}, if one of them precedes the other one. Otherwise, we call them \textit{conflicting}.  The segments we introduce can overlap and there is no restriction on the number of segments or the length of each segment. However, we would like to minimize the maximum number of segments that cover each cell. 

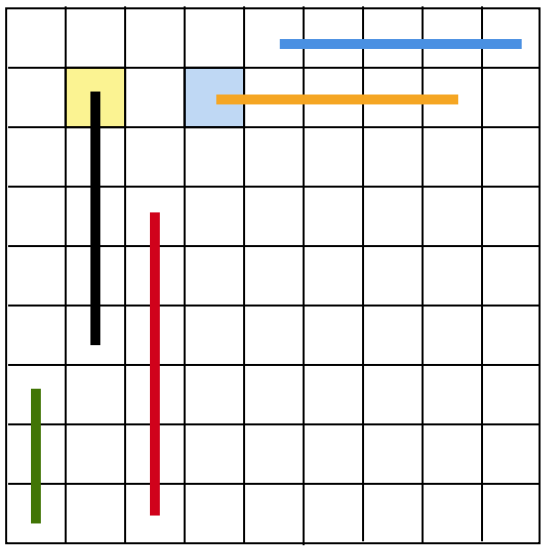
\begin{figure}[ht]

\centering

\tikzset{every picture/.style={line width=0.75pt}} %set default line width to 0.75pt        

\begin{tikzpicture}[x=0.75pt,y=0.75pt,yscale=-1,xscale=1]
%uncomment if require: \path (0,300); %set diagram left start at 0, and has height of 300

%Shape: Rectangle [id:dp5254311647291401] 
\draw   (181,11) -- (450,11) -- (450,281) -- (181,281) -- cycle ;
%Straight Lines [id:da3305762449852494] 
\draw    (301,10) -- (301,281) ;

%Straight Lines [id:da3206498977891954] 
\draw    (331,10) -- (331,282) ;

%Straight Lines [id:da02838390385871148] 
\draw    (361,10) -- (361,280) ;

%Straight Lines [id:da6891763144045235] 
\draw    (211,10) -- (211,281) ;

%Straight Lines [id:da38683484497784937] 
\draw    (241,10) -- (241,281) ;

%Straight Lines [id:da7490078717616151] 
\draw    (271,10) -- (271,281) ;

%Straight Lines [id:da3531030059687228] 
\draw    (391,10) -- (391,281) ;

%Straight Lines [id:da43894336811181467] 
\draw    (421,10) -- (421,280) ;

%Straight Lines [id:da27078630886751465] 
\draw    (182,41) -- (450,41) ;

%Straight Lines [id:da005150497881268201] 
\draw    (182,71) -- (450,71) ;

%Straight Lines [id:da23147705862104506] 
\draw    (182,101) -- (450,101) ;

%Straight Lines [id:da7425549844424117] 
\draw    (182,131) -- (450,131) ;

%Straight Lines [id:da2351732964828932] 
\draw    (182,161) -- (450,161) ;

%Straight Lines [id:da7401347371733205] 
\draw    (182,191) -- (450,191) ;

%Straight Lines [id:da8968761436786192] 
\draw    (182,221) -- (450,221) ;

%Straight Lines [id:da8562431949289493] 
\draw    (182,251) -- (450,251) ;

%Straight Lines [id:da7212412833117718] 
\draw [color={rgb, 255:red, 208; green, 2; blue, 27 }  ,draw opacity=1 ][line width=3.75]    (256,114) -- (256,267) ;

%Straight Lines [id:da7675592926230685] 
\draw [color={rgb, 255:red, 74; green, 144; blue, 226 }  ,draw opacity=1 ][line width=3.75]    (319,29) -- (441,29) ;

%Straight Lines [id:da4112941771444574] 
\draw [color={rgb, 255:red, 65; green, 117; blue, 5 }  ,draw opacity=1 ][line width=3.75]    (196,203) -- (196,271) ;

%Shape: Square [id:dp4777174644258826] 
\draw  [color={rgb, 255:red, 0; green, 0; blue, 0 }  ,draw opacity=1 ][fill={rgb, 255:red, 248; green, 231; blue, 28 }  ,fill opacity=0.48 ] (211,41) -- (241,41) -- (241,71) -- (211,71) -- cycle ;
%Straight Lines [id:da09165043527230288] 
\draw [line width=3.75]    (226,53) -- (226,181) ;

%Shape: Square [id:dp8680283690742998] 
\draw  [color={rgb, 255:red, 0; green, 0; blue, 0 }  ,draw opacity=1 ][fill={rgb, 255:red, 74; green, 144; blue, 226 }  ,fill opacity=0.35 ] (271,41) -- (301,41) -- (301,71) -- (271,71) -- cycle ;
%Straight Lines [id:da0026998791284389423] 
\draw [color={rgb, 255:red, 245; green, 166; blue, 35 }  ,draw opacity=1 ][line width=3.75]    (287,57) -- (409,57) ;

\end{tikzpicture}
\caption{Segments are shown on the grid. The pair (black, orange) is conflicting since the yellow cell (covered by the black segment) is on the same row as the blue cell (covered by the orange segment). The following pairs are non-conflicting: (green, black), (green, orange), (green, blue), (red, orange), (red, blue), (black, blue).} \label{fig:crossing}
\end{figure}

After we choose the segments, an adversary puts a non-negative number on each cell of the grid. The score of a subset of cells of the table would be the sum of their values and the overall score of the table is the maximum score of a path of length $2m-1$ from the bottom left corner to the top right corner. In such a path, we always either move up or to the right.

The score of a segment is the sum of the numbers on the cells it covers. We obtain the maximum sum of the scores of a non-conflicting set of segments.  The score of the table is an upper bound of on the score of any set of non-conflicting segments. We would like to choose segments so that the ratio of the score of the table and our score is bounded by a constant, no matter how the adversary puts the numbers on the table. More precisely, we call a solution $(\alpha,\beta)$-approximate, if at most $\alpha$ segments cover each cell and it guarantees a $1/\beta$ fraction of the score of the table for us for any assignment of numbers to the table cells.

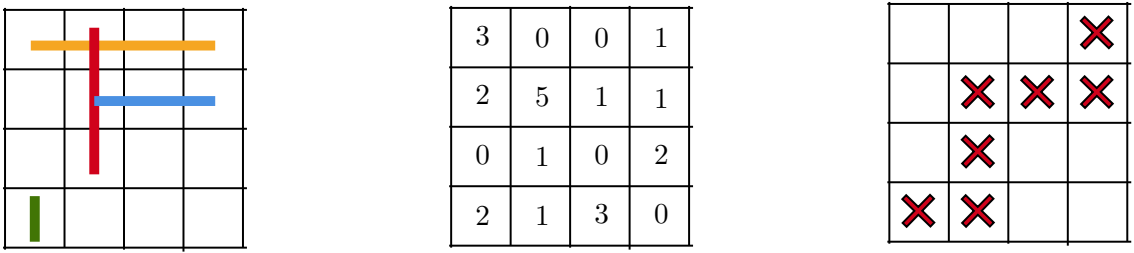
\begin{figure}[ht]

\centering

\tikzset{every picture/.style={line width=0.75pt}} %set default line width to 0.75pt        

\begin{tikzpicture}[x=0.75pt,y=0.75pt,yscale=-1,xscale=1]
%uncomment if require: \path (0,172); %set diagram left start at 0, and has height of 172

%Straight Lines [id:da6633043819092925] 
\draw    (270,30) -- (270,152) ;

%Straight Lines [id:da6069877637104277] 
\draw    (300,31) -- (300,151) ;

%Straight Lines [id:da5789537013543653] 
\draw    (330,30) -- (330,152) ;

%Straight Lines [id:da6178882799434375] 
\draw    (360,30) -- (360,153) ;

%Straight Lines [id:da33725599772480996] 
\draw    (390,30) -- (390,151) ;

%Straight Lines [id:da736940795940934] 
\draw    (269,31) -- (392,31) ;

%Straight Lines [id:da9741470082112216] 
\draw    (269,61) -- (392,61) ;

%Straight Lines [id:da2779143347704205] 
\draw    (269,91) -- (392,91) ;

%Straight Lines [id:da3530720908879399] 
\draw    (269,121) -- (392,121) ;

%Straight Lines [id:da038384016048926606] 
\draw    (269,151) -- (392,151) ;

%Straight Lines [id:da5998474774869911] 
\draw    (45,31) -- (45,153) ;

%Straight Lines [id:da5597571026286936] 
\draw    (75,32) -- (75,152) ;

%Straight Lines [id:da16479280447754507] 
\draw    (105,31) -- (105,153) ;

%Straight Lines [id:da3085712099513582] 
\draw    (135,31) -- (135,154) ;

%Straight Lines [id:da12613347387221086] 
\draw    (165,31) -- (165,152) ;

%Straight Lines [id:da6266120443422065] 
\draw    (44,32) -- (167,32) ;

%Straight Lines [id:da5072506675360497] 
\draw    (44,62) -- (167,62) ;

%Straight Lines [id:da7640683512061104] 
\draw    (44,92) -- (167,92) ;

%Straight Lines [id:da5807436513303619] 
\draw    (44,122) -- (167,122) ;

%Straight Lines [id:da2277681145302426] 
\draw    (44,152) -- (167,152) ;

%Straight Lines [id:da8599665215928698] 
\draw    (491,28) -- (491,150) ;

%Straight Lines [id:da9296198582234105] 
\draw    (521,29) -- (521,149) ;

%Straight Lines [id:da5188209661505319] 
\draw    (551,28) -- (551,150) ;

%Straight Lines [id:da7563699076629549] 
\draw    (581,28) -- (581,151) ;

%Straight Lines [id:da7100636887892002] 
\draw    (611,28) -- (611,149) ;

%Straight Lines [id:da22647026815320848] 
\draw    (490,29) -- (613,29) ;

%Straight Lines [id:da7140685594390213] 
\draw    (490,59) -- (613,59) ;

%Straight Lines [id:da7042967357998233] 
\draw    (490,89) -- (613,89) ;

%Straight Lines [id:da8328482917067017] 
\draw    (490,119) -- (613,119) ;

%Straight Lines [id:da535803664626858] 
\draw    (490,149) -- (613,149) ;

%Straight Lines [id:da2001147987658074] 
\draw [color={rgb, 255:red, 245; green, 166; blue, 35 }  ,draw opacity=1 ][line width=3.75]    (58,50) -- (151,50) ;

%Straight Lines [id:da20733782106809806] 
\draw [color={rgb, 255:red, 208; green, 2; blue, 27 }  ,draw opacity=1 ][line width=3.75]    (90,41) -- (90,115) ;

%Straight Lines [id:da3188339231354045] 
\draw [color={rgb, 255:red, 74; green, 144; blue, 226 }  ,draw opacity=1 ][line width=3.75]    (90,78) -- (151,78) ;

%Straight Lines [id:da6087723399785627] 
\draw [color={rgb, 255:red, 65; green, 117; blue, 5 }  ,draw opacity=1 ][line width=3.75]    (60,126) -- (60,149) ;

%Shape: Cross [id:dp18788217453772793] 
\draw  [color={rgb, 255:red, 0; green, 0; blue, 0 }  ,draw opacity=1 ][fill={rgb, 255:red, 208; green, 2; blue, 27 }  ,fill opacity=1 ] (497.54,127.53) -- (499.49,125.52) -- (505.58,131.45) -- (511.51,125.36) -- (513.51,127.31) -- (507.58,133.4) -- (513.67,139.33) -- (511.72,141.33) -- (505.63,135.41) -- (499.7,141.49) -- (497.7,139.54) -- (503.63,133.45) -- cycle ;
%Shape: Cross [id:dp5876950991142253] 
\draw  [color={rgb, 255:red, 0; green, 0; blue, 0 }  ,draw opacity=1 ][fill={rgb, 255:red, 208; green, 2; blue, 27 }  ,fill opacity=1 ] (527.54,127.53) -- (529.49,125.52) -- (535.58,131.45) -- (541.51,125.36) -- (543.51,127.31) -- (537.58,133.4) -- (543.67,139.33) -- (541.72,141.33) -- (535.63,135.41) -- (529.7,141.49) -- (527.7,139.54) -- (533.63,133.45) -- cycle ;
%Shape: Cross [id:dp6329349377511551] 
\draw  [color={rgb, 255:red, 0; green, 0; blue, 0 }  ,draw opacity=1 ][fill={rgb, 255:red, 208; green, 2; blue, 27 }  ,fill opacity=1 ] (527.54,97.53) -- (529.49,95.52) -- (535.58,101.45) -- (541.51,95.36) -- (543.51,97.31) -- (537.58,103.4) -- (543.67,109.33) -- (541.72,111.33) -- (535.63,105.41) -- (529.7,111.49) -- (527.7,109.54) -- (533.63,103.45) -- cycle ;
%Shape: Cross [id:dp5032748735899488] 
\draw  [color={rgb, 255:red, 0; green, 0; blue, 0 }  ,draw opacity=1 ][fill={rgb, 255:red, 208; green, 2; blue, 27 }  ,fill opacity=1 ] (527.54,67.53) -- (529.49,65.52) -- (535.58,71.45) -- (541.51,65.36) -- (543.51,67.31) -- (537.58,73.4) -- (543.67,79.33) -- (541.72,81.33) -- (535.63,75.41) -- (529.7,81.49) -- (527.7,79.54) -- (533.63,73.45) -- cycle ;
%Shape: Cross [id:dp4781774334975044] 
\draw  [color={rgb, 255:red, 0; green, 0; blue, 0 }  ,draw opacity=1 ][fill={rgb, 255:red, 208; green, 2; blue, 27 }  ,fill opacity=1 ] (557.54,67.53) -- (559.49,65.52) -- (565.58,71.45) -- (571.51,65.36) -- (573.51,67.31) -- (567.58,73.4) -- (573.67,79.33) -- (571.72,81.33) -- (565.63,75.41) -- (559.7,81.49) -- (557.7,79.54) -- (563.63,73.45) -- cycle ;
%Shape: Cross [id:dp6534331657458179] 
\draw  [color={rgb, 255:red, 0; green, 0; blue, 0 }  ,draw opacity=1 ][fill={rgb, 255:red, 208; green, 2; blue, 27 }  ,fill opacity=1 ] (587.54,67.53) -- (589.49,65.52) -- (595.58,71.45) -- (601.51,65.36) -- (603.51,67.31) -- (597.58,73.4) -- (603.67,79.33) -- (601.72,81.33) -- (595.63,75.41) -- (589.7,81.49) -- (587.7,79.54) -- (593.63,73.45) -- cycle ;
%Shape: Cross [id:dp5630155449984502] 
\draw  [color={rgb, 255:red, 0; green, 0; blue, 0 }  ,draw opacity=1 ][fill={rgb, 255:red, 208; green, 2; blue, 27 }  ,fill opacity=1 ] (587.54,37.53) -- (589.49,35.52) -- (595.58,41.45) -- (601.51,35.36) -- (603.51,37.31) -- (597.58,43.4) -- (603.67,49.33) -- (601.72,51.33) -- (595.63,45.41) -- (589.7,51.49) -- (587.7,49.54) -- (593.63,43.45) -- cycle ;

% Text Node
\draw (286,136) node   {$2$};
% Text Node
\draw (286,105) node   {$0$};
% Text Node
\draw (286,75) node   {$2$};
% Text Node
\draw (286,45) node   {$3$};
% Text Node
\draw (316,136) node   {$1$};
% Text Node
\draw (316,106) node   {$1$};
% Text Node
\draw (316,76) node   {$5$};
% Text Node
\draw (316,46) node   {$0$};
% Text Node
\draw (346,76) node   {$1$};
% Text Node
\draw (346,105) node   {$0$};
% Text Node
\draw (346,135) node   {$3$};
% Text Node
\draw (376,135) node   {$0$};
% Text Node
\draw (376,105) node   {$2$};
% Text Node
\draw (376,77) node   {$1$};
% Text Node
\draw (346,46) node   {$0$};
% Text Node
\draw (376,46) node   {$1$};

\end{tikzpicture}

\caption{After we introduce the segments (left figure), the adversary puts the numbers on the table (middle figure). In this case, the score of the table is equal to $12$ (via the path depicted on the right figure), and our score is equal to $9$ obtained from two non-conflicting segments green and blue.} \label{fig:crossing}
\end{figure}

In this section, we prove the following theorem: For any grid $m \times m$ and any $0 < \kappa < 1$, there exists a grid packing with  guarantee $(O(m^\kappa \log m),O(1/\kappa))$. That is, each cell is covered by at most $O(m^\kappa \log m)$ segments and the ratio of the table's score over our score is bounded by $O(1/\kappa)$ in the worst case. This solution is constructive and our proof also gives us the segments.

\subsection{Array Packing}
We first consider a useful sub-problem, array packing.
Array packing is a one-dimensional variant of grid packing, where we have an array of size $m$ and we choose segments of consecutive cells in this array. Again segments can overlap and there is no constraint on the number or size of the segments;  after we fix the segments an adversary puts non-negative numbers on the array cells;  and the score of a subset of the array cells would be the sum of their values. Here we call a solution $(\alpha,\beta)$-approximate if no more than $\alpha$ segments cover each cell and for any interval $[x,y]$ of the array, there exists a segment which completely lies in this interval with a score is at least a $1/\beta$ fraction of the overall score of the interval.

\begin{figure}[ht]

\centering

\tikzset{every picture/.style={line width=0.75pt}} %set default line width to 0.75pt        

\begin{tikzpicture}[x=0.75pt,y=0.75pt,yscale=-1,xscale=1]
%uncomment if require: \path (0,127); %set diagram left start at 0, and has height of 127

%Straight Lines [id:da000861350101278191] 
\draw    (306,38) -- (306,69) ;

%Straight Lines [id:da4515491798042337] 
\draw    (184,39) -- (486,39) ;

%Straight Lines [id:da5711586529634816] 
\draw    (184,69) -- (486,69) ;

%Straight Lines [id:da8492635772371222] 
\draw    (276,38) -- (276,69) ;

%Straight Lines [id:da7439963969228394] 
\draw    (246,38) -- (246,69) ;

%Straight Lines [id:da87771159737836] 
\draw    (216,38) -- (216,69) ;

%Straight Lines [id:da5639157080082706] 
\draw    (186,38) -- (186,69) ;

%Straight Lines [id:da1925307213491647] 
\draw    (366,38) -- (366,69) ;

%Straight Lines [id:da9014920419090529] 
\draw    (336,38) -- (336,69) ;

%Straight Lines [id:da5064291836062897] 
\draw    (426,38) -- (426,69) ;

%Straight Lines [id:da018132113507773218] 
\draw    (396,38) -- (396,69) ;

%Straight Lines [id:da8655584270971388] 
\draw    (486,38) -- (486,69) ;

%Straight Lines [id:da7067561661186135] 
\draw    (456,38) -- (456,69) ;

%Straight Lines [id:da7282534052566849] 
\draw [color={rgb, 255:red, 74; green, 144; blue, 226 }  ,draw opacity=1 ][line width=3.75]    (198,63) -- (259,63) ;

%Straight Lines [id:da6673738300554781] 
\draw [color={rgb, 255:red, 74; green, 144; blue, 226 }  ,draw opacity=1 ][line width=3.75]    (291,63) -- (352,63) ;

%Straight Lines [id:da7171514864092876] 
\draw [color={rgb, 255:red, 74; green, 144; blue, 226 }  ,draw opacity=1 ][line width=3.75]    (378,63) -- (439,63) ;

%Straight Lines [id:da4324133809248172] 
\draw [color={rgb, 255:red, 208; green, 2; blue, 27 }  ,draw opacity=1 ][line width=3.75]    (198,53.5) -- (232,53.5) ;

%Straight Lines [id:da014089775369803048] 
\draw [color={rgb, 255:red, 208; green, 2; blue, 27 }  ,draw opacity=1 ][line width=3.75]    (258,53.5) -- (292,53.5) ;

%Straight Lines [id:da5002605763083692] 
\draw [color={rgb, 255:red, 208; green, 2; blue, 27 }  ,draw opacity=1 ][line width=3.75]    (318,53.5) -- (352,53.5) ;

%Straight Lines [id:da5117820153311623] 
\draw [color={rgb, 255:red, 208; green, 2; blue, 27 }  ,draw opacity=1 ][line width=3.75]    (378,53.5) -- (412,53.5) ;

%Straight Lines [id:da11685276261039368] 
\draw [color={rgb, 255:red, 208; green, 2; blue, 27 }  ,draw opacity=1 ][line width=3.75]    (438,53.5) -- (472,53.5) ;

%Straight Lines [id:da6031246093370757] 
\draw [color={rgb, 255:red, 0; green, 0; blue, 0 }  ,draw opacity=1 ][line width=3.75]    (190,44.5) -- (212,44.5) ;

%Straight Lines [id:da6352855656108114] 
\draw [color={rgb, 255:red, 0; green, 0; blue, 0 }  ,draw opacity=1 ][line width=3.75]    (220,44.5) -- (242,44.5) ;

%Straight Lines [id:da4989459827916627] 
\draw [color={rgb, 255:red, 0; green, 0; blue, 0 }  ,draw opacity=1 ][line width=3.75]    (250,44.5) -- (272,44.5) ;

%Straight Lines [id:da9651012848435458] 
\draw [color={rgb, 255:red, 0; green, 0; blue, 0 }  ,draw opacity=1 ][line width=3.75]    (280,44.5) -- (302,44.5) ;

%Straight Lines [id:da0787010865142661] 
\draw [color={rgb, 255:red, 0; green, 0; blue, 0 }  ,draw opacity=1 ][line width=3.75]    (310,44.5) -- (332,44.5) ;

%Straight Lines [id:da8442033560848261] 
\draw [color={rgb, 255:red, 0; green, 0; blue, 0 }  ,draw opacity=1 ][line width=3.75]    (340,44.5) -- (362,44.5) ;

%Straight Lines [id:da12718137256496487] 
\draw [color={rgb, 255:red, 0; green, 0; blue, 0 }  ,draw opacity=1 ][line width=3.75]    (370,44.5) -- (392,44.5) ;

%Straight Lines [id:da9446134807052542] 
\draw [color={rgb, 255:red, 0; green, 0; blue, 0 }  ,draw opacity=1 ][line width=3.75]    (400,44.5) -- (422,44.5) ;

%Straight Lines [id:da12385101999421555] 
\draw [color={rgb, 255:red, 0; green, 0; blue, 0 }  ,draw opacity=1 ][line width=3.75]    (430,44.5) -- (452,44.5) ;

%Straight Lines [id:da38266859660472297] 
\draw [color={rgb, 255:red, 0; green, 0; blue, 0 }  ,draw opacity=1 ][line width=3.75]    (460,44.5) -- (482,44.5) ;

\end{tikzpicture}

\caption{A $(3,5)$-approximate solution is shown for an array packing problem of size $10$. Blue segments cover three consecutive cells, red segments cover two consecutive cells and each black segment covers a single cell. Each cell is covered by at most $3$ segments. Also, one can prove that no matter how the adversary puts the numbers on the array, for any interval, there is a segment completely in that interval whose score is at least $1/5$ of the score of that interval.} \label{fig:array}
\end{figure}
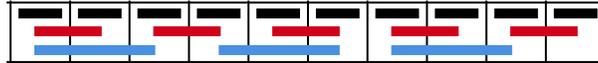

Here we  show that when the array size is $m$, for any $0 < \kappa < 1$, there exists a solution for array packing with approximation guarantee $(O(m^{\kappa}\log m),O(1/\kappa))$.  We then use this in a solution for grid packing.  

We construct two sets of segments in the following way: let $d$ be the largest power of 2 such that $d^2 \leq m^{\kappa}$. In the first set, for any $1 \leq i \leq d$ and any $1 \leq j \leq m-i+1$ we introduce a segment starting from cell $j$ and ending at cell $i+j-1$. 

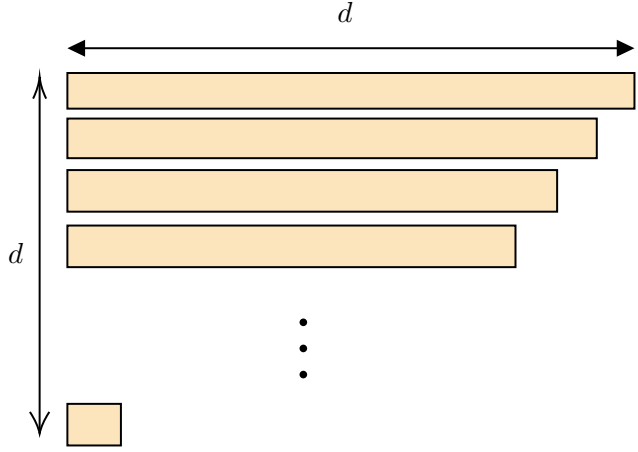
\begin{figure}[ht]

\centering

\tikzset{every picture/.style={line width=0.75pt}} %set default line width to 0.75pt        

\begin{tikzpicture}[x=0.75pt,y=0.75pt,yscale=-1,xscale=1]
%uncomment if require: \path (0,243); %set diagram left start at 0, and has height of 243

%Shape: Rectangle [id:dp7198015778122937] 
\draw  [color={rgb, 255:red, 0; green, 0; blue, 0 }  ,draw opacity=1 ][fill={rgb, 255:red, 245; green, 166; blue, 35 }  ,fill opacity=0.3 ] (199,39) -- (485,39) -- (485,57) -- (199,57) -- cycle ;
%Shape: Rectangle [id:dp326449100428279] 
\draw  [color={rgb, 255:red, 0; green, 0; blue, 0 }  ,draw opacity=1 ][fill={rgb, 255:red, 245; green, 166; blue, 35 }  ,fill opacity=0.3 ] (199,62) -- (466,62) -- (466,82) -- (199,82) -- cycle ;
%Shape: Rectangle [id:dp5227205005194291] 
\draw  [color={rgb, 255:red, 0; green, 0; blue, 0 }  ,draw opacity=1 ][fill={rgb, 255:red, 245; green, 166; blue, 35 }  ,fill opacity=0.3 ] (199,88) -- (446,88) -- (446,109) -- (199,109) -- cycle ;
%Shape: Rectangle [id:dp3199835946686478] 
\draw  [color={rgb, 255:red, 0; green, 0; blue, 0 }  ,draw opacity=1 ][fill={rgb, 255:red, 245; green, 166; blue, 35 }  ,fill opacity=0.3 ] (199,116) -- (425,116) -- (425,137) -- (199,137) -- cycle ;
%Shape: Rectangle [id:dp29431208806742304] 
\draw  [color={rgb, 255:red, 0; green, 0; blue, 0 }  ,draw opacity=1 ][fill={rgb, 255:red, 245; green, 166; blue, 35 }  ,fill opacity=0.3 ] (199,206) -- (226,206) -- (226,227) -- (199,227) -- cycle ;
%Straight Lines [id:da4486140676617101] 
\draw    (201,26.5) -- (483,26.5) ;
\draw [shift={(485,26.5)}, rotate = 180] [fill={rgb, 255:red, 0; green, 0; blue, 0 }  ][line width=0.75]  [draw opacity=0] (8.93,-4.29) -- (0,0) -- (8.93,4.29) -- cycle    ;
\draw [shift={(199,26.5)}, rotate = 0] [fill={rgb, 255:red, 0; green, 0; blue, 0 }  ][line width=0.75]  [draw opacity=0] (8.93,-4.29) -- (0,0) -- (8.93,4.29) -- cycle    ;
%Straight Lines [id:da605367449207038] 
\draw    (185,43) -- (185,219) ;
\draw [shift={(185,221)}, rotate = 270] [color={rgb, 255:red, 0; green, 0; blue, 0 }  ][line width=0.75]    (10.93,-4.9) .. controls (6.95,-2.3) and (3.31,-0.67) .. (0,0) .. controls (3.31,0.67) and (6.95,2.3) .. (10.93,4.9)   ;
\draw [shift={(185,41)}, rotate = 90] [color={rgb, 255:red, 0; green, 0; blue, 0 }  ][line width=0.75]    (10.93,-3.29) .. controls (6.95,-1.4) and (3.31,-0.3) .. (0,0) .. controls (3.31,0.3) and (6.95,1.4) .. (10.93,3.29)   ;

% Text Node
\draw (339,8) node   {$d$};
% Text Node
\draw (318,168) node [scale=2.488]  {$\vdots $};
% Text Node
\draw (173,130) node   {$d$};

\end{tikzpicture}

\caption{In the first set, from each cell we introduce $d$ segments with lengths $1,2,\ldots,d$.} \label{fig:set1}
\end{figure}

In the second set, for any integer $0 \leq i \leq \log m$ and any $j$ divisible by $2^i$ such that $j+d2^i-1 \leq m$, we introduce a segment that spans the interval $[j,j+d2^{i}-1]$.
\begin{figure}[ht]

\centering

\tikzset{every picture/.style={line width=0.75pt}} %set default line width to 0.75pt        

\begin{tikzpicture}[x=0.75pt,y=0.75pt,yscale=-1,xscale=1]
%uncomment if require: \path (0,260); %set diagram left start at 0, and has height of 260

%Shape: Rectangle [id:dp3780865282338346] 
\draw  [color={rgb, 255:red, 0; green, 0; blue, 0 }  ,draw opacity=1 ][fill={rgb, 255:red, 245; green, 166; blue, 35 }  ,fill opacity=0.3 ] (122,72) -- (230,72) -- (230,89) -- (122,89) -- cycle ;
%Shape: Rectangle [id:dp7096124539638322] 
\draw  [color={rgb, 255:red, 0; green, 0; blue, 0 }  ,draw opacity=1 ][fill={rgb, 255:red, 245; green, 166; blue, 35 }  ,fill opacity=0.3 ] (232,72) -- (340,72) -- (340,89) -- (232,89) -- cycle ;
%Shape: Rectangle [id:dp009399095832083537] 
\draw  [color={rgb, 255:red, 0; green, 0; blue, 0 }  ,draw opacity=1 ][fill={rgb, 255:red, 245; green, 166; blue, 35 }  ,fill opacity=0.3 ] (342,72) -- (450,72) -- (450,89) -- (342,89) -- cycle ;
%Shape: Rectangle [id:dp7513724579509333] 
\draw  [color={rgb, 255:red, 0; green, 0; blue, 0 }  ,draw opacity=1 ][fill={rgb, 255:red, 245; green, 166; blue, 35 }  ,fill opacity=0.3 ] (452,72) -- (560,72) -- (560,89) -- (452,89) -- cycle ;
%Straight Lines [id:da6179638507145953] 
\draw  [dash pattern={on 4.5pt off 4.5pt}]  (122,43) -- (122,72) ;

%Straight Lines [id:da9430208529028419] 
\draw  [dash pattern={on 4.5pt off 4.5pt}]  (232,41) -- (232,70) ;

%Straight Lines [id:da7195852022125899] 
\draw  [dash pattern={on 4.5pt off 4.5pt}]  (342,41) -- (342,70) ;

%Straight Lines [id:da33497106168716906] 
\draw  [dash pattern={on 4.5pt off 4.5pt}]  (452,41) -- (452,70) ;

%Straight Lines [id:da8157068284750819] 
\draw  [dash pattern={on 4.5pt off 4.5pt}]  (562,41) -- (562,70) ;

%Shape: Rectangle [id:dp4023652981308239] 
\draw  [color={rgb, 255:red, 0; green, 0; blue, 0 }  ,draw opacity=1 ][fill={rgb, 255:red, 245; green, 166; blue, 35 }  ,fill opacity=0.3 ] (149,95) -- (257,95) -- (257,112) -- (149,112) -- cycle ;
%Shape: Rectangle [id:dp26301835733287526] 
\draw  [color={rgb, 255:red, 0; green, 0; blue, 0 }  ,draw opacity=1 ][fill={rgb, 255:red, 245; green, 166; blue, 35 }  ,fill opacity=0.3 ] (259,95) -- (367,95) -- (367,112) -- (259,112) -- cycle ;
%Shape: Rectangle [id:dp7641849405878227] 
\draw  [color={rgb, 255:red, 0; green, 0; blue, 0 }  ,draw opacity=1 ][fill={rgb, 255:red, 245; green, 166; blue, 35 }  ,fill opacity=0.3 ] (369,95) -- (477,95) -- (477,112) -- (369,112) -- cycle ;
%Shape: Rectangle [id:dp9379290868108303] 
\draw  [color={rgb, 255:red, 0; green, 0; blue, 0 }  ,draw opacity=1 ][fill={rgb, 255:red, 245; green, 166; blue, 35 }  ,fill opacity=0.3 ] (479,95) -- (587,95) -- (587,112) -- (479,112) -- cycle ;
%Straight Lines [id:da9257227625652331] 
\draw    (124,103.5) -- (147,103.5) ;
\draw [shift={(149,103.5)}, rotate = 180] [fill={rgb, 255:red, 0; green, 0; blue, 0 }  ][line width=0.75]  [draw opacity=0] (8.93,-4.29) -- (0,0) -- (8.93,4.29) -- cycle    ;
\draw [shift={(122,103.5)}, rotate = 0] [fill={rgb, 255:red, 0; green, 0; blue, 0 }  ][line width=0.75]  [draw opacity=0] (8.93,-4.29) -- (0,0) -- (8.93,4.29) -- cycle    ;
%Shape: Rectangle [id:dp9242228786576219] 
\draw  [color={rgb, 255:red, 0; green, 0; blue, 0 }  ,draw opacity=1 ][fill={rgb, 255:red, 245; green, 166; blue, 35 }  ,fill opacity=0.3 ] (174,120) -- (282,120) -- (282,137) -- (174,137) -- cycle ;
%Shape: Rectangle [id:dp005030147703118715] 
\draw  [color={rgb, 255:red, 0; green, 0; blue, 0 }  ,draw opacity=1 ][fill={rgb, 255:red, 245; green, 166; blue, 35 }  ,fill opacity=0.3 ] (284,120) -- (392,120) -- (392,137) -- (284,137) -- cycle ;
%Shape: Rectangle [id:dp5859158598734318] 
\draw  [color={rgb, 255:red, 0; green, 0; blue, 0 }  ,draw opacity=1 ][fill={rgb, 255:red, 245; green, 166; blue, 35 }  ,fill opacity=0.3 ] (394,120) -- (502,120) -- (502,137) -- (394,137) -- cycle ;
%Shape: Rectangle [id:dp4457114817002761] 
\draw  [color={rgb, 255:red, 0; green, 0; blue, 0 }  ,draw opacity=1 ][fill={rgb, 255:red, 245; green, 166; blue, 35 }  ,fill opacity=0.3 ] (504,120) -- (612,120) -- (612,137) -- (504,137) -- cycle ;
%Straight Lines [id:da6551105761833025] 
\draw    (148,128.5) -- (171,128.5) ;
\draw [shift={(173,128.5)}, rotate = 180] [fill={rgb, 255:red, 0; green, 0; blue, 0 }  ][line width=0.75]  [draw opacity=0] (8.93,-4.29) -- (0,0) -- (8.93,4.29) -- cycle    ;
\draw [shift={(146,128.5)}, rotate = 0] [fill={rgb, 255:red, 0; green, 0; blue, 0 }  ][line width=0.75]  [draw opacity=0] (8.93,-4.29) -- (0,0) -- (8.93,4.29) -- cycle    ;
%Shape: Rectangle [id:dp49286915722617497] 
\draw  [color={rgb, 255:red, 0; green, 0; blue, 0 }  ,draw opacity=1 ][fill={rgb, 255:red, 245; green, 166; blue, 35 }  ,fill opacity=0.3 ] (207,210) -- (315,210) -- (315,227) -- (207,227) -- cycle ;
%Shape: Rectangle [id:dp48330418500477546] 
\draw  [color={rgb, 255:red, 0; green, 0; blue, 0 }  ,draw opacity=1 ][fill={rgb, 255:red, 245; green, 166; blue, 35 }  ,fill opacity=0.3 ] (317,210) -- (425,210) -- (425,227) -- (317,227) -- cycle ;
%Shape: Rectangle [id:dp18249322992525996] 
\draw  [color={rgb, 255:red, 0; green, 0; blue, 0 }  ,draw opacity=1 ][fill={rgb, 255:red, 245; green, 166; blue, 35 }  ,fill opacity=0.3 ] (427,210) -- (535,210) -- (535,227) -- (427,227) -- cycle ;
%Shape: Rectangle [id:dp893797241969857] 
\draw  [color={rgb, 255:red, 0; green, 0; blue, 0 }  ,draw opacity=1 ][fill={rgb, 255:red, 245; green, 166; blue, 35 }  ,fill opacity=0.3 ] (537,210) -- (645,210) -- (645,227) -- (537,227) -- cycle ;
%Straight Lines [id:da7831472743519785] 
\draw    (208,198.5) -- (231,198.5) ;
\draw [shift={(233,198.5)}, rotate = 180] [fill={rgb, 255:red, 0; green, 0; blue, 0 }  ][line width=0.75]  [draw opacity=0] (8.93,-4.29) -- (0,0) -- (8.93,4.29) -- cycle    ;
\draw [shift={(206,198.5)}, rotate = 0] [fill={rgb, 255:red, 0; green, 0; blue, 0 }  ][line width=0.75]  [draw opacity=0] (8.93,-4.29) -- (0,0) -- (8.93,4.29) -- cycle    ;
%Straight Lines [id:da8186397121702664] 
\draw  [dash pattern={on 4.5pt off 4.5pt}]  (232,141) -- (232,198.5) ;

%Straight Lines [id:da10510335495109491] 
\draw    (25,69) -- (25,245) ;
\draw [shift={(25,247)}, rotate = 270] [color={rgb, 255:red, 0; green, 0; blue, 0 }  ][line width=0.75]    (10.93,-4.9) .. controls (6.95,-2.3) and (3.31,-0.67) .. (0,0) .. controls (3.31,0.67) and (6.95,2.3) .. (10.93,4.9)   ;
\draw [shift={(25,67)}, rotate = 90] [color={rgb, 255:red, 0; green, 0; blue, 0 }  ][line width=0.75]    (10.93,-3.29) .. controls (6.95,-1.4) and (3.31,-0.3) .. (0,0) .. controls (3.31,0.3) and (6.95,1.4) .. (10.93,3.29)   ;
%Shape: Rectangle [id:dp516670719142682] 
\draw  [color={rgb, 255:red, 0; green, 0; blue, 0 }  ,draw opacity=1 ][fill={rgb, 255:red, 245; green, 166; blue, 35 }  ,fill opacity=0.3 ] (39,95) -- (147,95) -- (147,112) -- (39,112) -- cycle ;
%Shape: Rectangle [id:dp2577325239029944] 
\draw  [color={rgb, 255:red, 0; green, 0; blue, 0 }  ,draw opacity=1 ][fill={rgb, 255:red, 245; green, 166; blue, 35 }  ,fill opacity=0.3 ] (63,120) -- (171,120) -- (171,137) -- (63,137) -- cycle ;
%Shape: Rectangle [id:dp5318440496116796] 
\draw  [color={rgb, 255:red, 0; green, 0; blue, 0 }  ,draw opacity=1 ][fill={rgb, 255:red, 245; green, 166; blue, 35 }  ,fill opacity=0.3 ] (97,210) -- (205,210) -- (205,227) -- (97,227) -- cycle ;
%Straight Lines [id:da2827715340750767] 
\draw  [dash pattern={on 0.84pt off 2.51pt}]  (60.07,222.31) -- (135.5,103.5) ;

\draw [shift={(59,224)}, rotate = 302.41] [fill={rgb, 255:red, 0; green, 0; blue, 0 }  ][line width=0.75]  [draw opacity=0] (8.93,-4.29) -- (0,0) -- (8.93,4.29) -- cycle    ;
%Straight Lines [id:da059330727268365635] 
\draw  [dash pattern={on 0.84pt off 2.51pt}]  (60.45,222.62) -- (159.5,128.5) ;

\draw [shift={(59,224)}, rotate = 316.46] [fill={rgb, 255:red, 0; green, 0; blue, 0 }  ][line width=0.75]  [draw opacity=0] (8.93,-4.29) -- (0,0) -- (8.93,4.29) -- cycle    ;
%Straight Lines [id:da19451834729452222] 
\draw  [dash pattern={on 0.84pt off 2.51pt}]  (60.98,223.69) -- (219.5,198.5) ;

\draw [shift={(59,224)}, rotate = 350.97] [fill={rgb, 255:red, 0; green, 0; blue, 0 }  ][line width=0.75]  [draw opacity=0] (8.93,-4.29) -- (0,0) -- (8.93,4.29) -- cycle    ;

% Text Node
\draw (576,75) node   {$\dotsc $};
% Text Node
\draw (124,24) node   {$d2^{i}$};
% Text Node
\draw (231,24) node   {$2d2^{i}$};
% Text Node
\draw (341,24) node   {$3d2^{i}$};
% Text Node
\draw (452,24) node   {$4d2^{i}$};
% Text Node
\draw (562,24) node   {$5d2^{i}$};
% Text Node
\draw (60,235) node   {$2^{i}$};
% Text Node
\draw (422,165) node [scale=2.488]  {$\vdots $};
% Text Node
\draw (607,97) node   {$\dotsc $};
% Text Node
\draw (634,125) node   {$\dotsc $};
% Text Node
\draw (13,156) node   {$d$};

\end{tikzpicture}

\caption{In the second set, for each $i$, there is a segment of length $d2^i$ starting from any cell whose starting point is divisible by $2^i$. The distance between consecutive segments is $2^i$.} \label{fig:set1}
\end{figure}
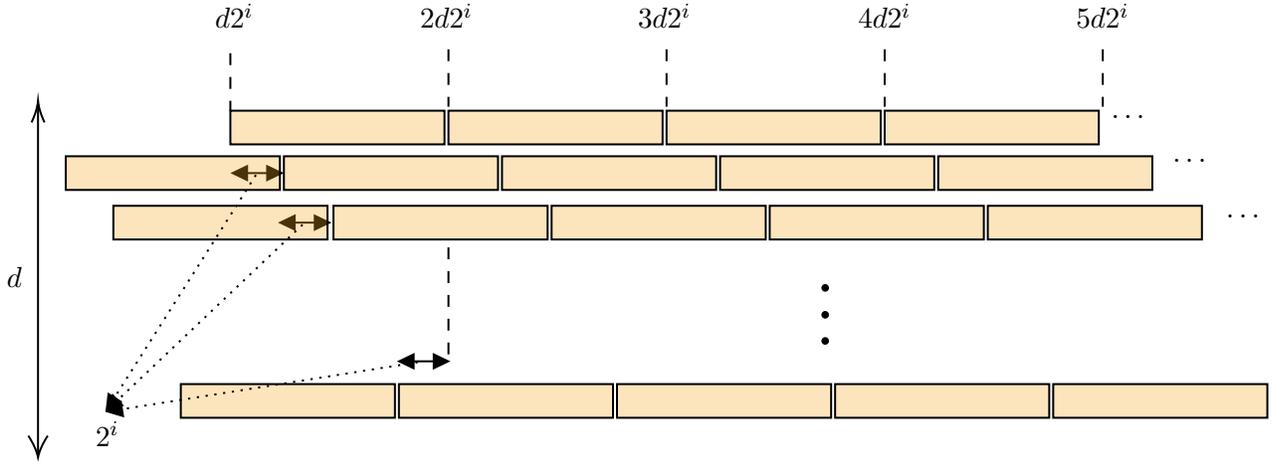

% We prove in the following that such a construction gives us the desired approximation guarantee.
%\Michael{I assume our construction is the UNION of these two sets. This should be explicitly stated;  I put it  in the start of the proof below.}\Saeed{Sounds good.}

\begin{lemma}\label{lemma:array}
	For any $m \geq 1$ and any $0 < \kappa < 1$ there exists a solution for the array packing problem of size $m$ with approximation guarantee $$(O(m^\kappa\log m), O(1/\kappa)).$$
\end{lemma}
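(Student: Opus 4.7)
The plan is to establish the two approximation parameters separately: a direct counting argument for $\alpha$, followed by an inductive decomposition for $\beta$.

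For $\alpha$, I count the segments covering an arbitrary cell $c$. From the first set, a segment of length $i$ contains $c$ for exactly $i$ choices of starting position (namely the positions in $[c-i+1,c]$), so the total contribution is $\sum_{i=1}^{d} i = O(d^2) = O(m^\kappa)$. From the second set at level $i$, a segment of length $d\,2^i$ contains $c$ iff its start lies in the window $[c - d\,2^i + 1, c]$, which contains exactly $d$ multiples of $2^i$; summing over the $O(\log m)$ levels yields $O(d \log m) = O(m^{\kappa/2} \log m)$. Combining, each cell is covered by $O(m^\kappa \log m)$ segments.

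For $\beta$, I argue by strong induction on the length $\ell = y - x + 1$. The base case $\ell \leq d$ is trivial, since $[x,y]$ itself is a length-$\ell$ segment in the first set, giving ratio $1$. For $\ell > d$, let $i^*$ be the largest integer such that a set-2 segment of length $d\,2^{i^*}$ fits inside $[x,y]$; by maximality, $d\,2^{i^*} = \Omega(\ell)$. Considering the leftmost and rightmost such fitting segments, each leaves a left and right gap, but in total there are at most $O(d)$ fitting level-$i^*$ segments and by choosing among them we may assume the heavier boundary has length at most $O(\ell/d)$. Either the chosen segment already has score $\Omega(\kappa) S$, in which case we are done, or most of $S$ is concentrated in this $O(\ell/d)$-length boundary, on which we recurse. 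Since each recursion step contracts the length by a factor of $d$, the recursion terminates in $O(\log_d m) = O(1/\kappa)$ steps, and a careful additive accounting of the score loss per step yields $\beta = O(1/\kappa)$.

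The main technical obstacle is keeping this recursion additive rather than multiplicative. A naive treatment that recurses on both boundary gaps would produce a branching factor of $2$ per level and inflate the final ratio to $2^{\Theta(1/\kappa)}$. The key is to exploit the $d$ overlapping level-$i^*$ segments (shifted by multiples of $2^{i^*}$): by comparing the leftmost fit, which absorbs the left boundary, with the rightmost fit, which absorbs the right boundary, one can show that the heavier uncovered side alone carries the residual score that needs recursive treatment. A separate edge case arises when $\ell$ lies in the narrow band $[d\,2^{i^*}, (d+1)\,2^{i^*} - 1)$ and no level-$i^*$ segment actually fits; here one drops to level $i^* - 1$, where a fit is always guaranteed, losing only a constant factor in the gap-shrinking rate. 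Tracking these constants carefully through the induction gives the claimed $O(1/\kappa)$ bound.
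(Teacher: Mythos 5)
Your count for $\alpha$ is fine and matches the paper. The problem is the $\beta$ argument. First, a single maximal-level segment does not behave the way you describe: by maximality of $i^*$ the interval length $\ell$ can be almost $2q$ for $q=d2^{i^*}$, so the leftmost fit does not ``absorb the left boundary'' --- it leaves the cells of $[x,j_L-1]$ uncovered and, on its own, also misses roughly half of the interval on the right. What is true (and still needs a proof, via maximality of $i^*$) is that the \emph{union} of the leftmost and rightmost fits covers everything except two boundary gaps of length less than $q/d$ each. But then comes the unsubstantiated step that your whole plan rests on: that ``the heavier uncovered side alone carries the residual score.'' This is false in general. The adversary may split the score evenly between the two boundary gaps, and again split it evenly between the two boundary gaps of whichever gap you recurse into, at every scale. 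Your single-branch recursion then retains at most about half of the relevant score per level, and over $\Theta(\log_d m)=\Theta(1/\kappa)$ levels the guaranteed fraction degrades to $2^{-\Theta(1/\kappa)}$ --- exactly the blowup you set out to avoid. No bookkeeping of constants can repair a recursion that provably discards a constant fraction of the score it is chasing at every level; and recursing on \emph{both} gaps, with segments confined to those gaps, gives a cover of size $2^{\Theta(1/\kappa)}$, hence only $\beta=2^{\Theta(1/\kappa)}$.

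The additive $O(1/\kappa)$ accounting has to come from a covering statement, not a score-driven recursion: exhibit, for \emph{every} interval $[x,y]$, a set of $O(1/\kappa)$ segments that lie inside $[x,y]$ and jointly cover all of it; since the numbers are non-negative, their scores sum to at least the score of the interval, and pigeonhole hands you one segment with an $\Omega(\kappa)$ fraction --- no recursion, no dependence on where the adversary put the mass. This is what the paper does: it grows a greedy chain from the left end (repeatedly taking a segment that starts in the still-uncovered prefix and ends as far right as possible, anywhere in $[x,y]$, not merely inside the gap), a symmetric chain from the right end, and closes the middle with at most two maximum-length segments. The dyadic structure of the second family forces each successive chain segment to be at least $d$ times longer than its predecessor, so each chain has at most $\log_d m=O(1/\kappa)$ segments. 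The fact that chain segments are allowed to protrude far beyond the boundary region is precisely what makes the count a sum along two chains rather than a binary tree; your recursion, which confines itself to the gaps, cannot reproduce this. (Your ``edge case'' where no level-$i^*$ segment fits also cannot arise, since $i^*$ was defined as the largest level that does fit.)
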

\begin{proof}
	The solution is the union of the two sets of segments described above. The key to showing it gives the required approximation is the following: For each interval $[x,y]$ we can cover the entire interval with $O(1/\kappa)$ segments that completely lie in the interval $[x,y]$. Therefore, no matter how the adversary puts the numbers on the cells, the summation of the scores for such segments is at least as much as the score of interval $[x,y]$ and thus one of those segments has an $\Omega(\kappa)$ fraction of the score of interval $[x,y]$. To show this, we cover the cells of the intervals in the following way.
	
	%\Michael{So elsewhere in the section we talk about an $(\alpha,\beta)$ approximation, and here we're using an interval $[x,y]$.  While not the end of the world it's to be avoided.  Any reason we  can't just use $[x,y]$ as the interval?  Try to choose some other letters?}\Saeed{We use $i$ and $j$ for the construction of the segments. But I agree that $\alpha$ and $\beta$ are not the right choices. I will change it to $x$ and $y$ for now.}
	
	If $y - x < d$, then we can cover the entire interval with a single segment (of the first type) and the proof is trivial. Otherwise, we only use the segments of the second type in our covering. Let $q$ be the size of the largest segment that completely lies in the interval $[x,y]$.
	
	We start with an empty set $S_\ell$ and a pointer $p_\ell$ initially equal to $x$. Each time we find a segment that starts in the range $[x,p_\ell]$ and ends in the range $[x,y]$. If many such segments exist, we choose the one with the right-most ending point and in case of a tie, we break the tie arbitrarily. We add the new segment to $S_\ell$ and continue the process by increasing $p_\ell$ to the cell right after the new segment ends. We continue this process so long as $p_\ell - x < q/d$
	
	We repeat the same process but this time starting from $p_r = y$ and proceeding backwards. Initially we set $S_r$ to be an empty set. Every time, we find a segment that ends in range $[p_r,y]$ and starts in range $[x,y]$. Similarly, when multiple options are available, we choose the one whose starting point is the smallest. We add the new segment to $S_r$ and update $p_r$ to be the right-most cell not covered by the new segment. We terminate this process when $y-p_r \geq q/d$.
	
	Obviously segments in $S_\ell \cup S_r$ cover both intervals $[x,p_\ell-1]$ and $[p_r+1,y]$. If $p_r < p_\ell$ then the entire interval $[x,y]$ is covered. Otherwise, we can add two more segments of size $q$ to fill the gap. More precisely, we add the left-most and right-most segments of length $q$ that lie completely in the interval $[x,y]$. The analysis is based on the following property of our construction: the distance between consecutive segments of length $q$ is bounded by $q/d$. Thus, after putting such segments on the interval $[x,y]$, each cell is covered except the ones whose distance to one of the end points is smaller than $q/d$. However, segments of sets $S_\ell$ and $S_r$ cover those cells.
	
	We show that $|S_\ell|,|S_r| \leq O(1/\kappa)$. To this end, we prove that every time we add a new segment to $S_\ell$, the size of the new segment is at least $d$ times larger than the size of the previous segment we added to $S_\ell$. The reason is the following: let $l$ be the length of the segment we add to $S_\ell$ at some point. This means that after adding this segment to $S_\ell$ we have $p_\ell \geq x+l$. $l$ is a power of $2$ since we only use the segments of the second type. Moreover, one cell in this range is divisible by $l$ which means that it is the starting point of another segment of length $dl$. Thus the next segment that we add to $S_\ell$ has a length at least $dl$ which is $d$ times larger than the current one. Therefore, the size of $|S_\ell|$ is bounded by $\log_d m = O(1/\kappa)$. The same also holds for $S_r$.
	
	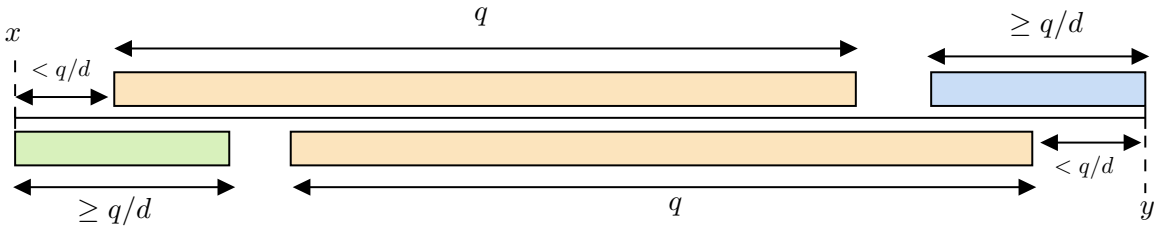
\begin{figure}[ht]

\centering

\tikzset{every picture/.style={line width=0.75pt}} %set default line width to 0.75pt        

\begin{tikzpicture}[x=0.75pt,y=0.75pt,yscale=-1,xscale=1]
%uncomment if require: \path (0,123); %set diagram left start at 0, and has height of 123

%Straight Lines [id:da6826662328602326] 
\draw    (40,60) -- (610,60) ;
\draw [shift={(610,60)}, rotate = 180] [color={rgb, 255:red, 0; green, 0; blue, 0 }  ][line width=0.75]    (0,5.59) -- (0,-5.59)   ;
\draw [shift={(40,60)}, rotate = 180] [color={rgb, 255:red, 0; green, 0; blue, 0 }  ][line width=0.75]    (0,5.59) -- (0,-5.59)   ;
%Shape: Rectangle [id:dp23545492012890912] 
\draw  [color={rgb, 255:red, 0; green, 0; blue, 0 }  ,draw opacity=1 ][fill={rgb, 255:red, 245; green, 166; blue, 35 }  ,fill opacity=0.3 ] (90,37) -- (464,37) -- (464,54) -- (90,54) -- cycle ;
%Shape: Rectangle [id:dp14149673151289388] 
\draw  [color={rgb, 255:red, 0; green, 0; blue, 0 }  ,draw opacity=1 ][fill={rgb, 255:red, 245; green, 166; blue, 35 }  ,fill opacity=0.3 ] (179,67) -- (553,67) -- (553,84) -- (179,84) -- cycle ;
%Straight Lines [id:da35804818707319486] 
\draw    (182,95) -- (553,95) ;
\draw [shift={(555,95)}, rotate = 180] [fill={rgb, 255:red, 0; green, 0; blue, 0 }  ][line width=0.75]  [draw opacity=0] (8.93,-4.29) -- (0,0) -- (8.93,4.29) -- cycle    ;
\draw [shift={(180,95)}, rotate = 0] [fill={rgb, 255:red, 0; green, 0; blue, 0 }  ][line width=0.75]  [draw opacity=0] (8.93,-4.29) -- (0,0) -- (8.93,4.29) -- cycle    ;
%Straight Lines [id:da6497706704680086] 
\draw    (92,25) -- (463,25) ;
\draw [shift={(465,25)}, rotate = 180] [fill={rgb, 255:red, 0; green, 0; blue, 0 }  ][line width=0.75]  [draw opacity=0] (8.93,-4.29) -- (0,0) -- (8.93,4.29) -- cycle    ;
\draw [shift={(90,25)}, rotate = 0] [fill={rgb, 255:red, 0; green, 0; blue, 0 }  ][line width=0.75]  [draw opacity=0] (8.93,-4.29) -- (0,0) -- (8.93,4.29) -- cycle    ;
%Shape: Rectangle [id:dp4417477082264687] 
\draw  [color={rgb, 255:red, 0; green, 0; blue, 0 }  ,draw opacity=1 ][fill={rgb, 255:red, 74; green, 144; blue, 226 }  ,fill opacity=0.3 ] (502,37) -- (610,37) -- (610,54) -- (502,54) -- cycle ;
%Shape: Rectangle [id:dp49180493926898716] 
\draw  [color={rgb, 255:red, 0; green, 0; blue, 0 }  ,draw opacity=1 ][fill={rgb, 255:red, 126; green, 211; blue, 33 }  ,fill opacity=0.3 ] (40,67) -- (148,67) -- (148,84) -- (40,84) -- cycle ;
%Straight Lines [id:da4592175028028467] 
\draw  [dash pattern={on 4.5pt off 4.5pt}]  (40,31) -- (40,60) ;

%Straight Lines [id:da37907827815881534] 
\draw  [dash pattern={on 4.5pt off 4.5pt}]  (610,69) -- (610,98) ;

%Straight Lines [id:da42140038104766475] 
\draw    (41,95) -- (149,95) ;
\draw [shift={(151,95)}, rotate = 180] [fill={rgb, 255:red, 0; green, 0; blue, 0 }  ][line width=0.75]  [draw opacity=0] (8.93,-4.29) -- (0,0) -- (8.93,4.29) -- cycle    ;
\draw [shift={(39,95)}, rotate = 0] [fill={rgb, 255:red, 0; green, 0; blue, 0 }  ][line width=0.75]  [draw opacity=0] (8.93,-4.29) -- (0,0) -- (8.93,4.29) -- cycle    ;
%Straight Lines [id:da46499990188669416] 
\draw    (502,29) -- (610,29) ;
\draw [shift={(612,29)}, rotate = 180] [fill={rgb, 255:red, 0; green, 0; blue, 0 }  ][line width=0.75]  [draw opacity=0] (8.93,-4.29) -- (0,0) -- (8.93,4.29) -- cycle    ;
\draw [shift={(500,29)}, rotate = 0] [fill={rgb, 255:red, 0; green, 0; blue, 0 }  ][line width=0.75]  [draw opacity=0] (8.93,-4.29) -- (0,0) -- (8.93,4.29) -- cycle    ;
%Straight Lines [id:da15800952593965167] 
\draw    (42,49.5) -- (85,49.5) ;
\draw [shift={(87,49.5)}, rotate = 180] [fill={rgb, 255:red, 0; green, 0; blue, 0 }  ][line width=0.75]  [draw opacity=0] (8.93,-4.29) -- (0,0) -- (8.93,4.29) -- cycle    ;
\draw [shift={(40,49.5)}, rotate = 0] [fill={rgb, 255:red, 0; green, 0; blue, 0 }  ][line width=0.75]  [draw opacity=0] (8.93,-4.29) -- (0,0) -- (8.93,4.29) -- cycle    ;
%Straight Lines [id:da4271946711517298] 
\draw    (559,72.5) -- (606,72.5) ;
\draw [shift={(608,72.5)}, rotate = 180] [fill={rgb, 255:red, 0; green, 0; blue, 0 }  ][line width=0.75]  [draw opacity=0] (8.93,-4.29) -- (0,0) -- (8.93,4.29) -- cycle    ;
\draw [shift={(557,72.5)}, rotate = 0] [fill={rgb, 255:red, 0; green, 0; blue, 0 }  ][line width=0.75]  [draw opacity=0] (8.93,-4.29) -- (0,0) -- (8.93,4.29) -- cycle    ;

% Text Node
\draw (39,18) node   {$x $};
% Text Node
\draw (611,107) node   {$y $};
% Text Node
\draw (373,105) node   {$q$};
% Text Node
\draw (275,9) node   {$q$};
% Text Node
\draw (90,107) node   {$\geq q/d$};
% Text Node
\draw (560,13) node   {$\geq q/d$};
% Text Node
\draw (63,36) node [scale=0.8]  {$< q/d$};
% Text Node
\draw (579,85) node [scale=0.8]  {$< q/d$};

\end{tikzpicture}

\caption{Green parts are covered with segments in $S_\ell$ and blue parts are covered by segments in $S_r$. Two segments of length $q$ cover the rest of the elements.} \label{fig:proof}
\end{figure}
	
	All that remains is to show that each cell of the array is covered by at most $O(m^{\kappa} \log m)$ different segments. In the first set, the length of each segment is bounded by $d$ and therefore there are at most $d^2$ different combinations for the starting and ending cells of the intervals that cover a particular cell. Since $d^2 \leq m^\kappa$, this guarantee is met by the first set of segments. For the second set, notice that there are at most $O(\log m)$ distinct segment sizes. Moreover, each cell is covered by at most $d$ segments of a particular size. Thus, each cell is covered by at most $O(d \log m)$ different segments of the second set which is bounded by $O(m^\kappa \log m)$.
	
\end{proof}
\subsection{An $(O(m^\kappa \log m),O(1/\kappa))$-approximate Solution for Grid Packing}
We provide a reduction from grid packing to array packing. The intuition behind the reduction is given below: Let us fix a path from the bottom-left to the top-right of the array. We can divide the cells of the path into some disjoint vertical and horizontal intervals as show in Figure \ref{fig:vh}. In this decomposition, all the row-intervals are non-conflicting and all the column-intervals are also non-conflicting (row-intervals and column-intervals may be conflicting).

\input{figs/vh}

For any path and any combination of numbers on the cells of the path, either the sum of the numbers on the row-intervals or the sum of numbers on the column intervals is at least a $1/2$ fraction of the total sum of the numbers on the path. Based on this, we can reduce grid packing to array packing in the following way.
% : We treat each row of the grid as an array and construct a solution of array packing for it.
% \Michael{Make clear we construct an array packing solution using the lemma of the previous section?}
% Next, we treat each column of the grid as an array and construct a solution of array packing for that column. We show in Theorem \ref{theorem:grid-cover} that this provides an $(O(m^\kappa \log m), O(1/\kappa))$-approximate solution.

\begin{theorem}\label{theorem:grid-cover}
	For any $0 < \kappa < 1$, the grid packing problem on an $m \times m$ grid admits an $(O(m^\kappa \log m), O(1/\kappa))$-approximate solution.
\end{theorem}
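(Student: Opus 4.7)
The plan is to reduce grid packing to the array packing problem of Lemma~\ref{lemma:array}, exactly along the lines sketched immediately before the theorem statement. Concretely, I would construct the solution for the $m\times m$ grid by treating each of the $m$ rows and each of the $m$ columns as an independent instance of array packing of length $m$ and placing, on each row (resp.\ column), the segments produced by the array packing solution of Lemma~\ref{lemma:array} with parameter $\kappa$. Because each grid cell lies in exactly one row and one column, the number of segments covering it is at most twice the cell-coverage bound of Lemma~\ref{lemma:array}, i.e.\ at most $O(m^\kappa \log m)$, establishing the first coordinate of the approximation.

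For the second coordinate, I would fix an adversarial assignment of numbers to the cells and let $P$ be an optimal monotone path of length $2m-1$ realizing the score of the grid. I would decompose $P$ into maximal row-subpaths and maximal column-subpaths (as in Figure~\ref{fig:vh}); by non-negativity at least one of the two groups accounts for at least half of the total score of $P$. Assume without loss of generality this is the row group, so the row-subpaths $R_1,\dots,R_k$ together have total score at least $\mathrm{score}(P)/2$. Each $R_j$ is a contiguous interval within a single row, so Lemma~\ref{lemma:array} (applied to that row) guarantees a row-segment $s_j\subseteq R_j$ among our chosen segments whose score is at least an $\Omega(\kappa)$ fraction of the score of $R_j$. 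Summing over $j$ yields a collection $\{s_j\}$ of segments whose total score is at least $\Omega(\kappa)\cdot \mathrm{score}(P)/2 = \Omega(\kappa)\cdot \mathrm{score}(P)$.

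The remaining obligation is to verify that $\{s_j\}$ is actually a non-conflicting set, so that its total score is a valid lower bound on our score. This uses the fact that, by construction of the path decomposition, any two distinct row-subpaths of $P$ are themselves non-conflicting: between any two such subpaths the path must execute at least one upward and at least one rightward step, which places every cell of the higher subpath strictly above and strictly to the right of every cell of the lower one. Since each $s_j$ lies entirely inside $R_j$, the $s_j$'s inherit this property and are pairwise non-conflicting. Combining this with the coverage bound gives an $(O(m^\kappa \log m),O(1/\kappa))$-approximate solution, as claimed.

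The only mildly delicate point, and the place I would be most careful in writing the details, is the ``without loss of generality'' step and the justification that the row-subpaths of a monotone path are pairwise non-conflicting in the strong sense required by the definition (strictly higher \emph{and} strictly to the right), since a symmetric argument must also be available for the column case so that the factor of $2$ loss is genuine rather than hiding a parity issue at the endpoints where a row-subpath meets a column-subpath.
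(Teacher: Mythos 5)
Your proposal is precisely the paper's own proof: solve array packing (Lemma~\ref{lemma:array}) on every row and every column, union the segments so each cell's coverage at most doubles, and then argue via the path decomposition of Figure~\ref{fig:vh} that either the row-subpaths or the column-subpaths carry at least half the optimal score, after which Lemma~\ref{lemma:array} applied per row (or per column) finishes the job. The delicate point you flag about strict separation at the corners is well placed, and it is resolved by a consistent assignment rule, for instance sending each right-then-up corner to its row-run and each up-then-right corner to its column-run, which forces every two consecutive row-runs to be separated by a rightward step (strict column increase) and every two consecutive column-runs to be separated by an upward step (strict row increase), so both groups are simultaneously pairwise non-conflicting and the factor-of-two loss is genuine.
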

\begin{proof}
	We treat every row and every column of the grid as an array of length $m$ and construct an $(O(m^\kappa \log m), O(1/\kappa))$-approximate solution of the array packing problem for that row or column (Lemma \ref{lemma:array}). 
	%\Michael{Make clear we construct an array packing solution using the lemma of the previous section?}\Saeed{Sounds good.}
	With this construction, every cell is covered by at most $O(m^\kappa \log m)$ segments, because every cell is covered by at most $O(m^\kappa \log m)$ horizontal segments and every cell is covered by at most $O(m^\kappa \log m)$ vertical segments, so the total number of segments covering each cell is bounded by $O(m^\kappa \log m)$.
	
	After the adversary puts the numbers on the cells of the grid, the score of the grid equals the largest score of a path of the length $2m-1$ that starts from the bottom left corner and moves to the top right corner. As mentioned previously, we can divide  the cells of such a path into non-conflicting row intervals and non-conflicting column intervals. The score of either the row intervals or column  intervals is at least half the score  of this path. Let it be the  row intervals without loss of generality. By Lemma~\ref{lemma:array}, 
	%\Michael{put in specific reference to array result}\Saeed{Done}
	there is one segment in each row that approximates the score of the corresponding interval within a factor $ O(1/\kappa)$ and fits completely within that interval. Since all the row-intervals are non-conflicting, these segments are non-conflicting, and the score of those segments is at least an $O(1/\kappa)$ fraction of the score of the row intervals, which is at least $1/2$ of the score of the grid. This completes the proof.
\end{proof}

\newpage
\section{Constant Factor Approximation for \textsf{LIS} with Update Time $\tilde O(n^\epsilon)$}\label{sec:constant}
In this section, we show that for any constant $\epsilon > 0$, there is a dynamic algorithm for \textsf{LIS} that loses only a constant factor in the approximation (where the constant depends on $\epsilon$) and guarantees an update time bounded by $\tilde O(n^\epsilon)$. Our algorithm is based on the grid packing technique explained in Section \ref{sec:grid}.

We begin by explaining a simple algorithm for dynamic \textsf{LIS} where the approximation factor is constant and the update time is close to $\tilde O(n^{2/3})$. This result is weaker than what we give in Section \ref{sec:amortized} both in terms of update time and approximation factor, but we show this new technique can be extended to obtain the result for any $\epsilon > 0$ above.

We remind the reader that an exact algorithm with worst-case update time $\tilde O(n)$ is trivial; we compute the \textsf{LIS} from the scratch after each operation arrives. We call this algorithm $\mathcal{A'}_0$\footnote{It will be clear to the reader later in the section why we use such an unconventional notation for this algorithm.}. From $\mathcal{A'}_0$, we make a block-based algorithm $\mathcal{A}_1$ and then turn it into an algorithm $\mathcal{A'}_1$ with worst-case update time $\tilde O(n^{2/3+\kappa})$ for a small enough $\kappa > 0$.

In our block-based algorithm, we begin by an array $a$ of length $n$. Map the elements of $a$ onto the 2D plane by creating a point $(i,a_i)$ for every element of the array. Recall that we assume the elements are distinct but there is no bound on their values.
%\Michael{Are we assuming the element are distinct?  Assuming the are specifically $[1,n]$ for convenience?  Make clear up front.}\Saeed{Done.}
Define $m = n^{1/3}$ and construct a grid in the following way: draw $m-1$ horizontal lines that divide the plane into parts of as equal size as possible with respect to the number of points included in each part. If $n$ is not divisible by $m$, some parts may have one more point than other parts. Similarly, we draw $m-1$ vertical lines that separate the plane into $m$ parts each having either $\lfloor n/m \rfloor$ or $\lceil n/m \rceil$ points.  This gives an $m$ by $m$ grid, with each grid cell corresponding to a ``rectangle'' in the 2D plane.
\input{figs/example1}

Our block-based algorithm works in the following way: fix a $0 < \kappa < 1$ and construct a grid packing solution with approximation $(O(m^\kappa \log m), O(1/\kappa))$ for the $m$ by $m$ grid. Each element of the array lies in exactly one cell of the grid and corresponds to all segments that cover that cell. Next for each segment in the solution of grid packing, we construct a separate instance of the \textsf{LIS} problem that maintains a solution for the \textsf{LIS} of the corresponding elements. %\Michael{This is the part that is vague and confusing--  you should describe this as the score that will be associated with each grid cell, and explain what it is --- "LIS that is only concenred with the points lying in the corresponding cells" is pretty ambiguous.  "Concerned with" is not a mathematial term.}\Saeed{We have not talked about the scores at this point. This comes later in the proof of the correctness. I will add more details later when we define the score of the cells and the ones for segments.}% Keep in mind that the solution of \textsf{LIS} for a segment is an upper bound on the score of the segment as we described earlier.

Each time an operation arrives, we update the solution for the corresponding segments. Let us be more specific about this. Initially, $m-1$ vertical lines divide the array into chunks of size roughly $n/m$. As operations arrive, the elements are shifted to the left or to the right (their indices are updated). Each vertical line can be thought of as a separator between two consecutive elements that is also shifted to the left or to the right when elements are added or removed. Thus, although the vertical lines move, each element which is inserted or deleted lies between two thresholds and corresponds to a unique column of the grid. The corresponding row is uniquely determined by the horizontal lines (those lines remain unchanged). Thus, every element insertion or deletion affects only one cell of the grid which is covered by a bounded number of segments. 
%\Michael{It is ambiguous what the "corresponding segments" are, particularly since you haven't explained what an  update -- e.g. insertion -- does to the grid.  As we've discussed, inserting a point potentially does a lot of things to the grid -- chaning line boundaries, etc. -- and the reader doesn't know what you mean at this point.  Spell things out precisely, then get to the result.}\Saeed{Added a sentence earlier to define the corresponding segments and a few senences to resolve the issue.}
When the size of the \textsf{LIS} is desired, we run a dynamic program and find a subset of non-conflicting set of segments whose total solution size is maximized. We show that this gives us approximation factor $O(1/\kappa)$ for the \textsf{LIS} of the array. Our algorithm is responsible for up to $g(n) = n^{2/3}$ many operations.

\input{figs/example2}

\begin{lemma}\label{lemma:chert}
	Let $0 < \kappa < 1$ be an arbitrarily small constant used for the solution of grid packing. $\mathcal{A}_1$ is a block-based algorithm for dynamic \textsf{LIS} with approximation factor $O(1/\kappa)$ whose preprocessing time is $\tilde O(n^{1+\kappa})$ and whose update time is $\tilde O(n^{2/3+\kappa})$. Moreover, $\mathcal{A}_1$ runs for $n^{2/3}$ many steps.
\end{lemma}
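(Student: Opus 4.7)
I plan to make precise the informal algorithm already sketched at the top of this section and then verify each of its four promised properties (approximation factor, preprocessing time, update time, block length) in turn. Fix $m=n^{1/3}$, place $m-1$ horizontal lines at the $(kn/m)$-th order statistics of the values for $k=1,\dots,m-1$, and use $m-1$ vertical separators located between the pairs of consecutive array entries whose ranks in the initial array are multiples of $n/m$. The resulting $m\times m$ grid puts each of the $n$ points $(i,a_i)$ into a unique cell, and every row and every column contains at most $\lceil n/m\rceil$ points. On top of this grid I instantiate the construction of Theorem~\ref{theorem:grid-cover} with parameter $\kappa$, obtaining a family $\mathcal{S}$ of horizontal and vertical segments covering every cell at most $\tilde O(m^{\kappa})$ times and enjoying the associated $O(1/\kappa)$ score-approximation guarantee. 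For each $s\in\mathcal{S}$ I then maintain $\mathsf{lis}(s)$, the length of a longest increasing subsequence among the elements currently inside $s$, together with a witness, and compute these by running patience sorting segment by segment during preprocessing. Because each of the $n$ elements lies in at most $\tilde O(m^{\kappa})$ horizontal and $\tilde O(m^{\kappa})$ vertical segments, the total element--segment incidence count is $\tilde O(nm^{\kappa})=\tilde O(n^{1+\kappa/3})$, which dominates the preprocessing cost and sits comfortably within the claimed $\tilde O(n^{1+\kappa})$ bound.

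For updates, each insertion or deletion modifies the element set of exactly one cell: the affected element's row is determined by its value (between two fixed horizontal thresholds) and its column by the two vertical separators on its immediate sides, neither of which moves \emph{past} any existing element. Hence only the $\tilde O(m^{\kappa})$ segments covering that single cell need refreshing, and since each such segment still contains $O(n/m)=O(n^{2/3})$ elements, recomputing its $\mathsf{lis}$ from scratch costs $\tilde O(n^{2/3})$, for a subtotal of $\tilde O(m^{\kappa}\cdot n/m)=\tilde O(n^{2/3+\kappa/3})$. To answer a query I run a dynamic program that finds a maximum-weight chain in $\mathcal{S}$ under the ``precedes'' order, noting that $A$ precedes $B$ iff the bottom-left cell of $A$ is strictly above and strictly right of the top-right cell of $B$; sorting segments by top-right cell and sweeping while maintaining a 2D range-maximum structure on the bottom-left corners yields a DP of total time $\tilde O(|\mathcal{S}|)=\tilde O(m^{2+\kappa})=\tilde O(n^{2/3+\kappa/3})$. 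Combining the refresh and the query cost gives the worst-case update bound $\tilde O(n^{2/3+\kappa})$ stated in the lemma.

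For the approximation factor, fix any longest increasing subsequence $L^{*}$ of $a$ and let $w_c$ denote the number of elements of $L^{*}$ inside cell~$c$. The monotone lattice path visiting exactly the cells occupied by $L^{*}$ collects $\sum_c w_c=|L^{*}|$, so the grid score against the weights $\{w_c\}$ is exactly $|L^{*}|$. By Theorem~\ref{theorem:grid-cover} there is a non-conflicting subfamily $S^{*}\subseteq\mathcal{S}$ with $\sum_{s\in S^{*}}\sum_{c\in s}w_c\geq \Omega(\kappa)\cdot|L^{*}|$, and for every $s$ the inner sum is the length of an increasing subsequence living inside $s$ and is therefore at most $\mathsf{lis}(s)$; hence the DP value is at least $\Omega(\kappa)\cdot|L^{*}|$. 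Conversely, concatenating the $\mathsf{lis}$ witnesses of the segments in any non-conflicting chain produces a valid increasing subsequence of $a$, so the DP value never exceeds $|L^{*}|$, yielding the $O(1/\kappa)$-approximation. Finally, the block length $g(n)=n^{2/3}$ is justified by observing that after $n^{2/3}=n/m$ operations each cell, row, and column still contains $O(n/m)$ elements, so every per-segment and per-update quantity above degrades only by constant factors. The main obstacle I expect is the tidy analysis of the query DP: because ``precedes'' compares the \emph{bottom-left} corner of one bounding box against the \emph{top-right} corner of another, I need to argue carefully that sweeping by top-right cell while issuing 2D range-maximum queries indexed by the bottom-left cell of the current segment really captures every predecessor relation, so that the total DP cost stays $\tilde O(|\mathcal{S}|)$ rather than $|\mathcal{S}|^{2}$.
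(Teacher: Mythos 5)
Your proposal matches the paper's proof essentially step for step: the same $n^{1/3}\times n^{1/3}$ grid built from value and index thresholds, the same application of Theorem~\ref{theorem:grid-cover}, the same per-segment naive \textsf{LIS} maintenance with the $\tilde O(nm^{\kappa})$ element--segment incidence accounting for preprocessing and the $\tilde O(m^{\kappa}\cdot n/m)$ bound per update, and the same fixed-\textsf{LIS}/adversarial-weights argument giving the $O(1/\kappa)$ factor in both directions. The only place you diverge --- and the obstacle you flag --- is the query DP: the paper avoids the sweep-plus-2D-range-maximum machinery entirely by filling an $m\times m$ table $D[i][j]$, setting $D[i][j]=\max\{D[i-1][j],\,D[i][j-1],\,D[i'-1][j'-1]+w\}$ over segments with top-right cell $(i,j)$ and bottom-left cell $(i',j')$, which is correct by inspection and runs in time proportional to the number of segments, $\tilde O(m^{2+\kappa})$, so the predecessor bookkeeping you worry about never arises.
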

\begin{proof}
	We first prove that the preprocessing time of $\mathcal{A}_1$ is $\tilde O(n^{1+\kappa})$. It takes time $\tilde O(n)$ to sort the numbers and draw the grid lines. Also, the runtime for constructing the solution for grid packing is $\tilde O(m^{2+\kappa})$, which is smaller than $\tilde O(n)$. Every cell of the grid is covered by at most $\tilde O(m^{\kappa})$ different segments. We construct a separate \textsf{LIS} instance for every segment. Each grid cell appears in at most $\tilde O(m^{\kappa})$ segments. Therefore, every element of the array in included in in at most $\tilde O(m^\kappa)$ \textsf{LIS} instances. Thus, the total size of all the instances combined is bounded by $\tilde O(nm^{\kappa})$ 
	%\Michael{Explain  this? Where does the $n$ come from? }\Saeed{Added more details.}
	and thus finding an \textsf{LIS} for each segment takes time $\tilde O(nm^{\kappa})$ in total. Since $m = n^{1/3}$ the preprocessing time is bounded by $\tilde O(n^{1+\kappa/3}) = \tilde O(n^{1+\kappa})$.

	The total number of points in every column, or every row of the grid is bounded by $\lceil n/m \rceil = O(n^{2/3})$. Thus, the size of the \textsf{LIS} instance for each segment is also bounded by $O(n^{2/3})$. Since we run the algorithm for at most $O(n^{2/3})$ many steps, the sizes of the \textsf{LIS} instances remain bounded by $O(n^{2/3})$ even if we add more numbers to them in the next $O(n^{2/3})$ many operations.  %\Michael{Again, up to this point, there has been no explanation of how the grid changes/updates when an insert operation occurs for the LIS problem, this is unclear.}\Saeed{Some new details come earlier.}
	
	When a new operation arrives, this only affects one cell of the grid which we can find using its position and its value. We update all the corresponding segments that cover that cell. % \Michael{How?}\Saeed{Added more details.}
	Their count is bounded by $\tilde O(m^{\kappa}) \leq \tilde O(n^{\kappa})$. Moreover, each one we can update in time $\tilde O(n^{2/3})$ since the problem size for each segment is bounded by $O(n^{2/3})$.  Every time the size of the \textsf{LIS} is desired, we run a DP in time $\tilde O(m^{2+\kappa})$ and find a solution that can be constructed using non-conflicting segments. The size of the \textsf{LIS} for each segment is available in time $O(1)$. Thus, the runtime for approximating the \textsf{LIS} is bounded by $\tilde O(m^{2+\kappa}) \leq \tilde O(n^{2/3+\kappa})$. Therefore, the update time $h(n)$ is bounded by $\tilde O(n^{2/3+\kappa})$ for every operation.
	
	For the dynamic program, we define an $m \times m$ table $D$ such that $D[i][j]$ denotes the solution for any subset of non-conflicting segments in the first $i$ rows and the first $j$ columns of the grid. Obviously $D[i][j] \geq D[i-1][j],D[i][j-1]$. Thus, when computing the value for $D[i][j]$, we start by assigning $\max\{D[i-1][j],D[i][j-1]\}$ to it. Next, for any segment that ends at cell $(i,j)$, we update $D[i][j]$ as
	 $$D[i][j] = \max\{D[i][j],D[i'-1][j'-1]+w\}$$
	 where $(i',j')$ are coordinates of the bottom-left corner of the segment and $w$ is the length of its \textsf{LIS}. The total runtime of this algorithm is asymptotically equal to the number of segments on the grid.

	%\Michael{I  keep seeing operations that are $\tilde O(m^{2/3+\kappa})$ that you bound by $\tilde O(n^{2/3+\kappa})$, which is  much  biggers.  Is there anything that  is  really $\tilde O(n^{2/3+\kappa})$ or  could we use  $\tilde O(n^{2/3})$?}\Saeed{Good point. The actual runtime is $O(n^{2/3+\kappa/3}$ but I thought it may be cleaner to just say $O(n^{2/3+\kappa}$.}
	
	Finally, we show that this gives us an $O(1/\kappa)$-approximate solution for the \textsf{LIS} of the entire array. For any point in time, fix an arbitrary solution of the longest increasing subsequence for the array at that time. Assume that the numbers the adversary puts on the cells of the grid are the contributions of the cells to the fixed longest common subsequence. This way, the score of the grid is exactly equal to the size of the longest increasing subsequence.
	
	In addition to the above, the score of every segment is a lower bound on the \textsf{LIS} of the elements for that segment. Thus, by the guarantee of the grid packing solution, the solution we obtain by appending the solutions of non-conflicting segments is definitely an $O(1/\kappa)$ approximate solution for the score of the grid which is the size of the solution. Finally, the validity of our solution follows from the fact that since all the segments are non-conflicting, then we can combine their partial solutions and that gives us a valid increasing subsequence.

	Keep in mind that when we add elements to or remove elements from the array, the indices of the numbers change. Thus, we need to also update the coordinates of the vertical lines. This can be done similarly to the way we update the indices of the array elements. Therefore, this does not add computational difficulty as all we need to know for each new operation is which column of the grid this operation applies to and which row of the grid is affected by the new operation. Also, for edge cases (when the left most or right most element of a column is removed or added), we have a choice of which column we add the new points to. In any case, the solution we find preserves the approximation.
	%\Michael{This sort of explanation has to  come much earlier!}\Saeed{done.}
	
\end{proof}

One thing to note about the algorithm of Lemma \ref{lemma:chert} is that by setting $\kappa$ arbitrarily close to $0$, we can obtain a update time close to $\tilde O(n^{2/3})$. By Lemma \ref{lemma:reduction}, algorithm $\mathcal{A}_1$ can be turned into an algorithm $\mathcal{A'}_1$ with the same approximation factor but worst-case update time $\tilde O(n^{2/3+\kappa})$. Although $\mathcal{A'}_1$ has an approximation factor of $O(1/\kappa)$, its update time is $\tilde O(n^{1/3-\kappa})$ times smaller than that of $\mathcal{A'}_0$. Thus, our approach to improve the overall update time is to replace $\mathcal{A'}_0$ by $\mathcal{A'}_1$ to obtain a faster (but worse in terms of approximation factor) algorithm.

\begin{lemma}[as a corollary of Lemmas \ref{lemma:chert} and \ref{lemma:reduction}]
	For any $0 < \kappa < 1$, there exists a dynamic algorithm for \textsf{LIS} with worst case update time $\tilde O(n^{2/3+\kappa})$ and approximation factor $\tilde O(1/\kappa)$.
\end{lemma}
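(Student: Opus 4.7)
The plan is to derive the corollary by plugging the parameters of the block-based algorithm $\mathcal{A}_1$ from Lemma~\ref{lemma:chert} directly into the reduction machinery of Lemma~\ref{lemma:reduction}. Recall that $\mathcal{A}_1$ has preprocessing time $f(n) = \tilde O(n^{1+\kappa})$, per-step worst-case update time $h(n) = \tilde O(n^{2/3+\kappa})$, block length $g(n) = n^{2/3}$, and approximation factor $O(1/\kappa)$. Since Lemma~\ref{lemma:reduction} yields a dynamic algorithm with worst-case update time $O(\max\{h(n), f(n)/g(n)\})$ and preserves the approximation factor, the target bound will follow once the prerequisites of that lemma are verified.

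First, I would check the relativity hypothesis. All three of $f, g, h$ here depend only on the array length $n$ via fixed polynomial expressions in $n$, so it suffices to show that $n$ cannot change by more than a constant factor across a single block. Starting from an array of size $n$, the block processes at most $g(n) = n^{2/3}$ insertions and deletions, producing an array $a'$ of size between $n - n^{2/3}$ and $n + n^{2/3}$; for $n$ at least a small constant, this implies $1/2 \le n/|a'| \le 2$, and hence each of $f(\cdot), g(\cdot), h(\cdot)$ changes by at most a constant factor, as required. For the trivial base case when $n$ is a small constant the relativity clause is taken for granted since all three quantities are bounded by constants by convention.

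Next I would instantiate the update-time bound from the reduction. Computing
\[
\frac{f(n)}{g(n)} \;=\; \frac{\tilde O(n^{1+\kappa})}{n^{2/3}} \;=\; \tilde O(n^{1/3+\kappa}),
\]
we see $f(n)/g(n) \le h(n)$ for all sufficiently large $n$, so the worst-case update time of the dynamic algorithm output by Lemma~\ref{lemma:reduction} is $O(h(n)) = \tilde O(n^{2/3+\kappa})$. The approximation factor is inherited unchanged from $\mathcal{A}_1$, giving $O(1/\kappa)$, which matches the $\tilde O(1/\kappa)$ claim in the statement.

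I do not anticipate a genuine obstacle here, since the statement is explicitly flagged as a corollary and the two prior lemmas were designed to compose cleanly; the only mildly delicate point is the sanity check of relativity under $n^{2/3}$ operations, which is where I would focus to make sure the proof is airtight. The rest is bookkeeping: restate the parameters of $\mathcal{A}_1$, observe $f/g \le h$, and invoke Lemma~\ref{lemma:reduction}.
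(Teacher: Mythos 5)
Your proof is correct and is precisely the composition the paper intends (the paper gives no explicit argument beyond labeling the statement a corollary of the two lemmas). You instantiate the parameters of $\mathcal{A}_1$, verify relativity under $n^{2/3}$ operations, observe $f(n)/g(n) = \tilde O(n^{1/3+\kappa}) \le h(n)$, and invoke Lemma~\ref{lemma:reduction}; nothing is missing.
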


It is not hard to see that one can recurse on the above idea to improve the update time. This comes however, at the expense of a larger approximation factor. We prove in Theorem \ref{theorem:constant} that, similar to what we did for Lemma \ref{lemma:chert}, one can devise an algorithm with worst-case update time $\tilde O(n^{\epsilon})$ with approximation factor $O((1/\epsilon)^{O(1/\epsilon)})$ for any $\epsilon > 0$.

\begin{theorem}\label{theorem:constant}
	For any constant $\epsilon > 0$, there exists an algorithm for dynamic \textsf{LIS} whose worst-case update time is $\tilde O(n^{\epsilon})$ and whose approximation factor is $O((1/\epsilon)^{O(1/\epsilon)})$.
\end{theorem}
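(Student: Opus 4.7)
The plan is to iterate the grid-packing construction from Lemma~\ref{lemma:chert} a bounded number of times, replacing the naive per-segment LIS oracle with a faster recursive one. Set $\kappa = \Theta(\epsilon)$, and define a sequence of block-based algorithms $\mathcal{A}_0, \mathcal{A}_1, \mathcal{A}_2, \ldots$, where $\mathcal{A}_0$ is the trivial recompute-from-scratch algorithm with update time $T_0(n) = \tilde O(n)$ and approximation factor $1$. Lemma~\ref{lemma:reduction} turns each $\mathcal{A}_i$ into a dynamic algorithm $\mathcal{A}'_i$ with the same (asymptotic) worst-case update time. Algorithm $\mathcal{A}_i$ is defined exactly as in the proof of Lemma~\ref{lemma:chert}, except that it maintains each segment's LIS using $\mathcal{A}'_{i-1}$ rather than recomputing from scratch. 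This is the $O(1/\epsilon)$-fold recursion referred to in the statement of Theorem~\ref{theorem:constant}.

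For each level $i$, I would pick a grid side length $m_i = n^{a_i}$ so that each cell (and hence each segment) contains $\tilde O(n/m_i)$ points, and so that the block length $g_i(n) = \Theta(n/m_i)$ preserves the relativity condition needed by Lemma~\ref{lemma:reduction}. Since every operation touches only $\tilde O(m_i^\kappa)$ segments and each sub-update costs $T_{i-1}(n/m_i)$, while refreshing the dynamic program over the grid costs $\tilde O(m_i^{2+\kappa})$, the update-time recursion becomes
\begin{equation*}
T_i(n) \;=\; \tilde O\!\bigl(m_i^{\kappa}\,T_{i-1}(n/m_i) + m_i^{2+\kappa}\bigr).
\end{equation*}
Writing $T_i(n) = \tilde O(n^{c_i})$ and balancing the two terms at $a_i = c_{i-1}/(2+c_{i-1})$ gives the clean recursion
\begin{equation*}
c_i \;=\; \frac{(2+\kappa)\,c_{i-1}}{2+c_{i-1}}, \qquad c_0 = 1.
\end{equation*}
A direct fixed-point analysis shows that $c_i$ decreases monotonically to the attractor $\kappa$, and linearization around the fixed point yields $c_i - \kappa = \Theta\bigl((1+\kappa/2)^{-i}\bigr)$. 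Hence for $k = O\bigl((1/\epsilon)\log(1/\epsilon)\bigr) = O(1/\epsilon)$ levels (using $\kappa = \epsilon/2$) we achieve $c_k \le \epsilon$, i.e.\ worst-case update time $\tilde O(n^\epsilon)$.

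The approximation factor multiplies by the grid-packing loss $O(1/\kappa) = O(1/\epsilon)$ at every level: a non-conflicting selection of segments provides a valid increasing subsequence, while Theorem~\ref{theorem:grid-cover} guarantees that its score recovers a $1/O(1/\kappa)$ fraction of the true LIS under the adversarial contribution weights. After $k$ levels the loss is $O\!\bigl((1/\epsilon)^{k}\bigr) = O\!\bigl((1/\epsilon)^{O(1/\epsilon)}\bigr)$, as required. The recovery of an actual increasing subsequence is also immediate by tracking back through the DP choices at the top level and recursively inside each chosen segment.

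The main technical obstacle I expect is bookkeeping the block-based interface across levels. Specifically, one has to verify (i) that the chosen parameters $f_i, g_i, h_i$ actually satisfy the relativity assumption of Lemma~\ref{lemma:reduction} at every level, which requires $g_i(n) \le n/m_i$ so that no column threshold shifts by more than a constant factor during a block, (ii) that the preprocessing cost $f_i(n)$, which must rebuild every segment's sub-instance via insertions into $\mathcal{A}'_{i-1}$, still amortizes to $f_i(n)/g_i(n) = \tilde O(n^{c_i})$ and does not dominate $h_i(n)$, and (iii) that when insertions or deletions shift the vertical thresholds of the grid and occasionally move a point across a threshold, the corresponding $O(1)$ segment removals and insertions at level~$i$ can still be executed within the $h_i(n)$ budget. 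Each of these is routine given the construction above, but needs to be checked uniformly in~$i$ so that the recursion composes cleanly.
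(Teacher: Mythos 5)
Your overall route is the same as the paper's: recurse the grid-packing block-based construction, solve each segment with the previous level's dynamic algorithm, and convert with Lemma~\ref{lemma:reduction}; your balancing $a_i = c_{i-1}/(2+c_{i-1})$ in fact reproduces the paper's schedule $m = n^{1/(k+2)}$, $g(n)=n^{(k+1)/(k+2)}$, and the per-level $O(1/\kappa)$ loss matches. The genuine gap is in your item (ii), which you dismiss as routine: you propose building each segment's sub-instance ``via insertions into $\mathcal{A}'_{i-1}$.'' That step fails quantitatively. Inserting the $s = n/m_i$ elements of a segment one at a time costs $\tilde O(s\cdot s^{c_{i-1}})$ per segment, hence $\tilde O\bigl(m_i^{\kappa} n (n/m_i)^{c_{i-1}}\bigr)=\tilde O(n^{1+c_i})$ in total, so $f_i(n)/g_i(n)=\tilde O(n^{c_i+a_i})$, which exceeds your target $h_i(n)=\tilde O(n^{c_i})$ by $n^{a_i}$. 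If you re-balance the recursion with this larger term, the exponent recursion becomes $c'_i=(2+\kappa)c'_{i-1}/(1+c'_{i-1})$ with fixed point $1+\kappa$, i.e.\ the update time never drops below $\tilde O(n)$ and the whole scheme collapses. The missing idea is the strengthened induction hypothesis the paper carries: every $\mathcal{A}'_{i}$ admits a \emph{batch} initialization from a given array of length $n'$ in time $\tilde O(n'^{1+\kappa})$, uniformly in $i$. With that, the level-$i$ preprocessing is $\tilde O\bigl((n/m_i)^{\kappa}\cdot n m_i^{\kappa}\bigr)=\tilde O(n^{1+\kappa})$, so $f_i/g_i=\tilde O(n^{a_i+\kappa})\le h_i$, the reduction applies, and the same bound re-establishes the hypothesis at level $i$. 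You need to state and prove this initialization bound as part of the induction; it does not follow from the dynamic update time alone.

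Two minor points. First, your claim ``$k=O((1/\epsilon)\log(1/\epsilon))=O(1/\epsilon)$'' is not a valid simplification; in fact your own recursion gives $1/c_i = 1 + i/2 + O(\kappa i)$, so $k=O(1/\epsilon)$ levels suffice outright (the paper takes $k=\lceil 4/\epsilon\rceil$), which is what you need for the stated $(1/\epsilon)^{O(1/\epsilon)}$ form. Second, items (i) and (iii) are indeed routine: the column thresholds are defined between consecutive elements and shift with them, so each operation affects a single cell and relativity holds over a block of length $\Theta(n/m_i)$.
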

\begin{proof}
	\begin{table}[!htbp]
\centering
\begin{tabular}{|c|c|c|}
	\hline
	Dynamic & worst-case  & approximation \\
	algorithm &  update time &  factor\\
	\hline 
	$\mathcal{A'}_0$ & $\tilde O(n)$ & 1\\
	\hline 
	$\mathcal{A'}_1$ & $\tilde O(n^{2/3+\kappa})$ & $O(1/\kappa)$\\
	\hline
	$\mathcal{A'}_2$ & $\tilde O(n^{1/2+\kappa})$ & $O(1/\kappa^2)$\\
	\hline
	$\mathcal{A'}_3$ & $\tilde O(n^{2/5+\kappa})$ & $O(1/\kappa^3)$\\
	\hline
	\vdots  & \vdots & \vdots \\
	\hline
	$\mathcal{A'}_k$ & $\tilde O(n^{\frac{2}{k+2}+\kappa})$ & $O(1/\kappa)^k$\\
	\hline
\end{tabular}
\caption{Guarantees of the dynamic solutions are shown in this figure.}
\end{table}

\begin{table}[!htbp]
\centering
\begin{tabular}{|c|c|c|c|c|c|}
	\hline
	block-based & $m$ & $f(n)$ & $g(n)$ & $h(n)$ & approximation \\
	 algorithm &  & &  &  & factor \\
%	\hline 
%	- & - & -  &- & - & -\\
	\hline
	$\mathcal{A}_1$ & $n^{1/3}$ & $\tilde O(n^{1+\kappa})$ & $n^{2/3}$ & $\tilde O(n^{2/3+\kappa})$ & $O(1/\kappa)$\\
	\hline
	$\mathcal{A}_2$ & $n^{1/4}$ & $\tilde O(n^{1+\kappa})$ & $n^{3/4}$ & $\tilde O(n^{1/2+\kappa})$ & $O(1/\kappa^2)$\\
	\hline
	$\mathcal{A}_3$ & $n^{1/5}$ & $\tilde O(n^{1+\kappa})$ & $n^{4/5}$ & $\tilde O(n^{2/5+\kappa})$ & $O(1/\kappa^3)$\\
	\hline
	\vdots  & \vdots & \vdots & \vdots & \vdots & \vdots \\
	\hline
	$\mathcal{A}_k$ & $n^{\frac{1}{k+2}}$ &$\tilde O(n^{1+\kappa})$ & $n^{\frac{k+1}{k+2}}$ & $\tilde O(n^{\frac{2}{k+2}+\kappa})$ & $O(1/\kappa)^k$\\
	\hline
\end{tabular}
\caption{Guarantees of the block-based solutions are shown in this figure. We assume that the length of the initial array $a$ for the block-based algorithm is $n$.}
\end{table}

	The proof is by induction. Let us fix a constant $0 < \kappa < 1$ that is used for all recursions. For any $k \geq 1$, our aim is to design a dynamic algorithm for \textsf{LIS} with approximation factor $O((1/\kappa)^k)$ and worst-case update time $\tilde O(n^{\frac{2}{k+2}+\kappa})$. We call such an algorithm $\mathcal{A}'_k$. The base case is for $k = 0$ for which we already know a solution $\mathcal{A}'_0$. We also strengthen our hypothesis: If instead of starting from an empty array, we start with an array of length $n$, our algorithm needs a preprocessing time of $\tilde O(n^{1+\kappa})$.
	
	Let us assume that for $k \geq 1$, $\mathcal{A}'_{k-1}$ with desirable guarantees is available and the goal is to design $\mathcal{A}'_{k}$. To this end, we first make a block-based algorithm $\mathcal{A}_k$ with the following properties: For an initial array $a$ of length $n$, we set $f(n) = n^{1+\kappa}$, $g(n) = n^{\frac{k+1}{k+2}}$, and $h(n) = \tilde O(n^{\frac{k+1}{k+2}+\kappa})$. Similar to what we did in Lemma \ref{lemma:chert}, we map every element of the initial array $a$ onto the 2D plane by putting a point $(i,a_i)$ for every element. We set $m = n^{\frac{1}{k+2}}$ and divide the plane into an $m \times m$ grid, such that in each row and in each column there are at most $\lceil n/m \rceil$ points. Moreover, we construct a grid packing solution for the $m \times m$ grid with approximation guarantee $(\tilde O(m^{\kappa}),O(1/\kappa))$. Next, for each segment we initiate an \textsf{LIS} instance that will be solved with algorithm $\mathcal{A}'_{k-1}$.
	
	The preprocessing time consists of two parts: (i) constructing the solution to grid packing in time $\tilde O(m^{2+\kappa})$ and (ii) constructing an initial solution for each segment. Since the problem size for each segment is bounded by $n/m$ and every element appears in at most $\tilde O(m^\kappa)$ segments and the initialization time for $\mathcal{A}'_{k-1}$ is $\tilde O(n^{1+\kappa})$ this can be bounded by $\tilde O(m^{\kappa} n (n/m)^\kappa) = \tilde O(n^{1+\kappa})$. Thus, the total preprocessing time is bounded by
	 $$\tilde O(m^{2+\kappa} + n^{1+\kappa}) = \tilde O(n^{1+\kappa}).$$
	 
	 By the construction of the grid, the size of the problem for each segment is bounded by $O(n/m) = O(n^{\frac{k+1}{k+2}})$. Also, since $g(n) = O(n^{\frac{k+1}{k+2}})$, the size of the problem instance corresponding to each segment remains in this range throughout the $g(n)$ many steps. Each time an operation arrives, we update the solution for $\tilde O(m^{\kappa})$ many segments each in time $\tilde O((n/m)^{\frac{2}{k+1}+\kappa})$ (recall that we use $\mathcal{A}'_{k-1}$ for the solution of the segments). Thus, the update time is bounded by 
	 \begin{align*}
	 \tilde O(m^\kappa (n/m)^{\frac{2}{k+1}+\kappa}) &= \tilde O(n^\kappa (n/m)^{\frac{2}{k+1}})\\
	 & = \tilde O(n^\kappa (n^{\frac{k+1}{k+2}})^{\frac{2}{k+1}})\\
	 & = \tilde O(n^\kappa n^{\frac{2}{k+2}})\\
	 & = \tilde O( n^{\frac{2}{k+2}+\kappa}).
	 \end{align*}
	 Moreover, in order to find an approximate solution for \textsf{LIS} using non-conflicting segments, we need to run a DP in time $\tilde O(m^{2+\kappa}) = \tilde O(n^{\frac{2+\kappa}{k+2}}) = \tilde O( n^{\frac{2}{k+2}+\kappa})$. Thus, the overall worst-case update time is equal to $h(n) = \tilde O( n^{\frac{2}{k+2}+\kappa})$.

	 Finally, the approximation factor increases by a multiplicative factor of $O(1/\kappa)$ in each level of recursion. Thus, the approximation factor of $\mathcal{A}_k$ is bounded by $O((1/\kappa)^k)$.  Using Lemma \ref{lemma:reduction}, we can the turn block-based algorithm $\mathcal{A}_k$ into an algorithm for \textsf{LIS} with worst-case update time $\tilde O( n^{\frac{2}{k+2}+\kappa})$. Finally, since we are constructing $\mathcal{A}'_k$ from a block-based algorithm with preprocessing time $\tilde O(n^{1+\kappa})$,  if we start from an array $a$ of length $n$, we need a preprocessing time of $\tilde O(n^{1+\kappa})$ which is another condition of the hypothesis.
	 
	 Now, for a fixed $\epsilon > 0$, we set $\kappa = \epsilon/2$ and  $k = \lceil 4/\epsilon \rceil$. Thus, the worst-case update time of the algorithm would be bounded by 
	 \begin{align*}
	 \tilde O(n^{\frac{2}{k+2}+\kappa}) &\leq \tilde O(n^{\frac{2}{k}+\kappa})\\
	 & = \tilde O(n^{\frac{2}{k}+\epsilon/2})\\
	 &\leq \tilde O(n^{\epsilon/2+\epsilon/2})\\
	 & = \tilde O(n^{\epsilon})
	 \end{align*}
	 which is desired. Also, the approximation factor would be  bounded by $O((1/\epsilon)^{O(1/\epsilon)})$.
\end{proof}

\newpage
\section{$1+\epsilon$ Approximation for \textsf{DTM}}\label{sec:dtm}
In this section, we give a dynamic algorithm for \textsf{DTM} with approximation factor $1+\epsilon$ and update time $O(\log^2 n)$. If access to the elements of the array is provided in time $O(1)$, the update time improves to $O(\log n)$. However, since we use a balanced binary tree for the elements, there is an additional $O(\log n)$ overhead for the update time. We explain the algorithm in four steps. In the first step, we present a simple algorithm that obtains an approximation factor of $2$ with the same update time. Next, in Step 2, we show how a $2$-approximate solution can be used to obtain an exact solution in time $O(k^2 \log n)$ when the solution size is bounded by $k$. In the third step, we improve the runtime of  the same algorithm to $O(k \log n)$. Finally, we show in the last step that such an algorithm along with the $2$-approximate solution in time $O(\log^2 n)$ gives us a $1+\epsilon$ approximate solution with update time $O(\log^2 n)$.

%For simplicity, let us first consider the more relaxed setting in which an amortized update time of $O(\log n)$ is desired. However, we show later in the section that the framework explained in Section \ref{sec:amortized} can be used to make the update time $O(\log n)$ in the worst case.

\subsection{A $2$-approximate Solution for \textsf{DTM}}\label{sec:2-approx}
Similar to previous work~\cite{DBLP:conf/soda/GopalanJKK07}, we call a pair $(a_i,a_j)$ an \textit{inversion}, if $i < j$ but $a_i > a_j$. The heart of the analysis is that a maximal set of disjoint inversion pairs is a 2-approximate solution for \textsf{DTM}. We first formally give a proof to this claim and next show how such a maximal set can be maintained with $O(\log^2 n)$ update time.

\begin{observation}\label{observation:obvious}
	Let $a = \langle a_1, a_2, \ldots, a_n\rangle$ be an array of length $n$ and $S = \{(a_{\alpha_1},a_{\beta_1}),(a_{\alpha_2},a_{\beta_2}),\ldots\}$ be a maximal set of disjoint inversions of $a$. Then we have
	$$|S| \leq \mathsf{DTM}(a) \leq 2|S|.$$
\end{observation}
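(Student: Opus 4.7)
The plan is to establish the two inequalities separately, each by a direct short argument that exploits, respectively, the disjointness and the maximality of $S$.

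For the lower bound $|S|\le \mathsf{DTM}(a)$, I would fix any subset $T$ of elements of $a$ whose removal leaves an increasing subsequence; by definition $\mathsf{DTM}(a)=\min_T |T|$. Any such $T$ must intersect every inversion, because an inversion pair $(a_i,a_j)$ with $i<j$ and $a_i>a_j$ that survives in $a\setminus T$ is a witness that $a\setminus T$ is not increasing. In particular $T$ must intersect each of the $|S|$ inversion pairs in $S$. Since the pairs in $S$ are pairwise element-disjoint, these intersections consume distinct elements of $T$, so $|T|\ge |S|$, and taking the minimum over $T$ gives $\mathsf{DTM}(a)\ge |S|$.

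For the upper bound $\mathsf{DTM}(a)\le 2|S|$, I would exhibit a concrete witness: let $T^\star$ be the set of $2|S|$ elements appearing in the pairs of $S$ (distinct by disjointness). It suffices to show that $a$ restricted to positions outside $T^\star$ is increasing, i.e.\ contains no inversion. Suppose for contradiction there is an inversion $(a_i,a_j)$ with $a_i,a_j\notin T^\star$. Then $(a_i,a_j)$ shares no element with any pair of $S$, so $S\cup\{(a_i,a_j)\}$ is still a set of disjoint inversions, contradicting the maximality of $S$. Hence $a\setminus T^\star$ is increasing and $\mathsf{DTM}(a)\le |T^\star|=2|S|$.

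Combining the two bounds gives the stated chain $|S|\le \mathsf{DTM}(a)\le 2|S|$. There is no real obstacle here: the only subtle point to state carefully is that ``maximal'' is used in the inclusion sense (one cannot add any further disjoint inversion), which is exactly what drives the upper bound; I would make this explicit in the write-up to avoid confusing maximal with maximum, since the latter would be a much stronger (and here unnecessary) hypothesis.
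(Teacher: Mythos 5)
Your proof is correct and follows essentially the same two-step argument as the paper: disjointness of the pairs in $S$ forces any valid deletion set to use $|S|$ distinct elements, and maximality of $S$ ensures that deleting the $2|S|$ elements covered by $S$ leaves no inversion. You simply spell out the contradiction for the upper bound and the intersection-counting for the lower bound more explicitly than the paper does.
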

\begin{proof}
	Since every pair $(a_{\alpha_i},a_{\beta_i})$ is an inversion, then any solution for \textsf{DTM} should remove one element from each pair which implies $\mathsf{DTM}(a) \geq |S|$. On the other hand, if we remove all the $2|S|$ elements of $S$ from $a$, the remaining subsequence is increasing since $S$ is maximal and thereby there is no inversion in the remaining elements. This implies that $\mathsf{DTM}(a) \leq 2|S|$.
\end{proof}

Based on Observation \ref{observation:obvious}, our 2-approximate algorithm maintains a maximal set of disjoint inversions with update time $O(\log^2 n)$.

\begin{lemma}
	There exists a $2$-approximate solution for $\mathsf{DTM}$ with update time $O(\log^2 n)$.
\end{lemma}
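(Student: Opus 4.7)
The plan is to leverage Observation~\ref{observation:obvious} and reduce the problem to dynamically maintaining a maximal set $S$ of pairwise disjoint inversion pairs: by the observation, reporting $2|S|$ after every update yields a $2$-approximation to $\mathsf{DTM}(a)$. The invariant I would enforce is that $S$ is maximal precisely when the set $U$ of elements not covered by $S$ forms an increasing subsequence of $a$, i.e., no two unmatched elements form an inversion; all of the work then goes into preserving this invariant under the local perturbations of a single insertion or deletion.

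The update rules are local and simple. When inserting a value $x$ at position $p$, I would search for an unmatched $y$ that inverts with $x$, meaning either an unmatched $y$ at position $<p$ with $y>x$ or an unmatched $y$ at position $>p$ with $y<x$; if one exists, add $(x,y)$ to $S$, otherwise leave $x$ in $U$. When deleting a value $x$, if $x$ is unmatched we just remove it (a subsequence of an increasing sequence is still increasing), while if $x$ is matched to some $y$ we remove $(x,y)$ from $S$ and then rerun the same inversion-search for the freshly freed $y$. A short case analysis shows that one such adjustment restores maximality: a single-element update does not change the relative order of the remaining entries, so the only candidate new inversion among unmatched elements must involve the one element whose matched status just changed.

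To realise these operations in $O(\log^2 n)$ worst-case time I would store the array in a balanced BST (e.g., a red-black tree) indexed by rank. Each node records its value, its matched/unmatched flag, and a pointer to its partner when matched, and each internal node is additionally augmented with the maximum and minimum value taken over the unmatched descendants in its subtree. Standard augmentation arguments maintain these fields through rotations, insertions, and deletions in $O(\log n)$ time per tree operation, and a segment-tree-style descent on the BST then answers the queries ``is there an unmatched element in rank range $[l,r]$ whose value exceeds (or is below) a given threshold?'' in $O(\log n)$ time, with the witness located by a second descent. Composing this $O(\log n)$ query cost with the $O(\log n)$ per-operation cost of the underlying position-based BST operations yields the claimed $O(\log^2 n)$ worst-case update time.

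The part that I expect to require the most care is verifying that a single re-match always suffices after a deletion, that is, that no cascading chain of re-matchings can be triggered by one update; this should follow because the symmetric difference between the old and new unmatched sets contains at most two elements per operation, so at most one pair need be added to or removed from $S$. Once this invariant argument is in place, combining it with Observation~\ref{observation:obvious} and the data structure above gives both correctness of the $2$-approximation and the $O(\log^2 n)$ worst-case update-time bound.
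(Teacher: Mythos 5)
Your proposal is correct and follows essentially the same route as the paper: reduce to Observation~\ref{observation:obvious}, dynamically maintain a maximal set of disjoint inversion pairs by checking each inserted element (or each partner freed by a deletion) for an inversion against the unmatched elements, which form an increasing subsequence, so a single local re-match suffices. The only difference is implementation detail—you use one rank-indexed BST augmented with unmatched-value maxima/minima, while the paper keeps two balanced trees ($T_N$, $T_S$) and reduces the inversion test to a predecessor/successor comparison in $T_N$—and both fit within the $O(\log^2 n)$ bound.
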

\begin{proof}
	Our algorithm maintains a maximal collection of disjoint inversion pairs, namely $S$. In addition to this, both the elements used in this collection and the elements not used in this collection are stored in a balanced tree\footnote{Red black tree could be one implementation.} that allows for search, insertion, and deletion in logarithmic time. We refer to the tree containing the elements of $S$ with $T_S$ and the tree containing other elements by $T_N$.
	
	Whenever a new element $a_i$ is inserted into the array, we first check if it makes an inversion with the elements of $T_N$. Notice that these elements are increasing in the order of their indices since there is no inversion between them. Thus, in order to verify whether $a_i$ makes an inversion with any element of $T_N$, we just need to compare that to the largest $a_j \in T_N$ which is smaller than  $a_i$ or the smallest $a_j \in T_N$ which is larger than $a_i$. Both of these two operations can be done in time $O(\log^2 n)$ (the exponent of $\log$ is 2 since it takes time $O(\log n)$ for us to get the value of the $i$'th element of the array). If an inversion is detected, we add it to $S$ and update $T_N$ and $T_S$ accordingly. Otherwise, we add $a_i$ into $T_N$.
	
	Removing an element is also strait-forward. If the element belongs to $T_N$, then no action is required other than updating $T_N$. Otherwise, after removing $a_i$, we have to be careful about the element which made an inversion with $a_i$ previously. That can be handled in time $O(\log^2 n)$ similar to adding new elements. We check if it makes an inversion with the elements of $T_N$ and update both $T_N$ and $T_S$ accordingly. All these operations can be done in time $O(\log^2 n)$.
\end{proof}

\subsection{From $2$-approximate Solution to an Exact Solution}\label{sec:exact_slow}
We show that a $2$-approximate solution for \textsf{DTM} can be used to obtain an exact solution. In fact, this idea carries over to any constant approximate solution but for simplicity we state it only for $2$-approximate solutions. Let us denote the size of the $2$-approximate solution by $k$. This way, we know that the optimal solution is in range $[k/2,k]$. 

We first construct a graph $G$ in the following way. Every element of the array becomes one vertex of $G$ and we put an edge between two vertices if their corresponding elements in $a$ make an inversion. This way, finding distance to monotonicity of array $a$ is equivalent to finding the smallest vertex cover of $G$. 

Let set $S$ be all the elements that are removed from the array in our $2$-approximate solution (thus we have  $|S| = k$).  We refer to the vertices of $G$ corresponding to set $S$ by $v_1$, $v_2$, $\ldots$, $v_k$. The key observation is that every edge of the graph is incident to at least one vertex $v_i$ otherwise $\cup_{i \in [k]} \{v_i\}$ would not be a valid vertex cover.

We call a vertex of the graph \textit{low-degree} if its degree is upper bounded by $k$ and \textit{high-degree} otherwise. Based on this, we divide the vertices corresponding to set $S$ into two disjoint sets $L$ and $H$ containing the low-degree and high-degree vertices. All vertices of $H$ have to be included in the optimal solution otherwise all their neighbors should be included and their number is more than $k$. Thus, we can include those vertices in our solution and remove them from the graph (this includes their incident edges too).

For each remaining edge of the graph, one end point is in $L$. Moreover, the degrees of the vertices in $L$ are bounded by $k$. Thus, the total number of remaining edges in the graph is bounded by $k^2$. Therefore, apart from at most $2k^2$ vertices, all the other vertices are isolated and definitely do not contribute to the vertex cover. Thus, we need to solve the problem for $O(k^2)$ many elements. This is equivalent to finding \textsf{DTM} for $O(k^2)$ many vertices which can be solved in time $O(k^2 \log n)$. There is an additional $O(\log n)$ overhead involved if access to each element requires time $O(\log n)$.

\begin{lemma}\label{lemma:dtm}
	Let $a$ be an array of length $n$ and $S$ be a set of $k$ elements whose removal from $a$ makes $a$ increasing. Provided oracle access to the elements of $a$, one can compute the distance to monotonicity of $a$ in time $O(k^2 \log n)$.
\end{lemma}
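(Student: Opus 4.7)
The plan is to formalize the inversion-graph reduction sketched just before the lemma statement. Build the inversion graph $G$ on the elements of $a$, so that $\mathsf{DTM}(a)$ equals the minimum vertex cover number $\tau(G)$, and observe that $S$ is a vertex cover of size $k$, giving $\tau(G) \leq k$. Classify each $v_i \in S$ as \emph{high-degree} if $\deg_G(v_i) > k$ and \emph{low-degree} otherwise, yielding a partition $S = H \sqcup L$. Any high-degree $v_i$ must lie in every minimum vertex cover, since otherwise all of its more than $k$ neighbors would need to be covered, violating $\tau(G) \leq k$. We therefore commit $H$ to the output and pass to $G' = G[V \setminus H]$.

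Every remaining edge of $G'$ is incident to some $v \in L$, since $S$ covered $G$; and each such $v$ has at most $k$ neighbors, so $G'$ has at most $k^2$ edges and at most $2k^2$ non-isolated vertices. Isolated vertices of $G'$ contribute $0$ to the cover, so if $a''$ is the length-$O(k^2)$ subsequence of $a$ induced by the non-isolated vertices of $G'$, then $\tau(G') = \mathsf{DTM}(a'')$ and $\mathsf{DTM}(a) = |H| + \mathsf{DTM}(a'')$. Running patience sorting on $a''$ computes $\mathsf{DTM}(a'')$ exactly in $O(k^2 \log k)$ comparisons, each costing $O(\log n)$ through the balanced search tree of the preliminaries, for a total of $O(k^2 \log n)$.

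The main obstacle is identifying $H$ and $L$ and assembling $a''$ without scanning all $n$ elements. For each $v_i = a_p \in S$ with value $v$, I would try to enumerate up to $k+1$ inversion partners, declaring $v_i$ high-degree as soon as the count exceeds $k$ and otherwise obtaining its full neighborhood. Inversions between $v_i$ and other elements of $S$ are handled by a brute-force scan in $O(k)$ per vertex. For inversions with $a \setminus S$, the key observation is that $a \setminus S$ is increasing, so its elements appear in the same order by index and by value; writing $p^*$ and $v^*$ for the number of non-$S$ elements occurring before position $p$ and with value smaller than $v$ respectively, a short case analysis shows that the inversion partners of $a_p$ in $a \setminus S$ consist of the trailing $\max(0, p^* - v^*)$ non-$S$ elements preceding position $p$ together with the leading $\max(0, v^* - p^*)$ non-$S$ elements following position $p$, giving $|p^* - v^*|$ partners in total. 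Both $p^*$ and $v^*$ can be obtained in $O(\log n)$ via order-statistic trees keyed respectively by position and by value, after which the at most $k+1$ candidate neighbors are listed in $O(\log n + k)$ time by a bounded scan of the appropriate prefix or suffix.

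Summing the costs: classifying $S$ and enumerating neighbors of the low-degree vertices takes $O(k(\log n + k)) = O(k \log n + k^2)$ time, assembling $a''$ requires $O(k^2)$ additional bookkeeping, and the concluding patience sort contributes $O(k^2 \log n)$, giving an overall bound of $O(k^2 \log n)$ as claimed.
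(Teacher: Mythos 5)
Your proof is correct and follows essentially the same route as the paper: reduce to vertex cover of the inversion graph, force the high-degree vertices (degree $>k$) of $S$ into the cover, observe the residual instance has $O(k^2)$ non-isolated elements, and finish exactly on that kernel, with your rank-query formula $|p^*-v^*|$ just making precise the paper's ``a binary search suffices'' step for degrees into the increasing part $a\setminus S$. One small accounting fix: $O(k^2\log k)$ comparisons at $O(\log n)$ each is $O(k^2\log k\log n)$, so to get the stated $O(k^2\log n)$ you should extract the $O(k^2)$ relevant values once through the oracle in $O(k^2\log n)$ and then run patience sorting on the cached values in $O(k^2\log k)$.
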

\begin{proof}
	The correctness of the algorithm is already discussed. Here we just show a bound on the runtime. Since set $S$ is given, we just need to compute the degree of each vertex. There are $k$ elements in set $S$ so detecting the edges between them can be done in time $O(k^2)$. Moreover, for every vertex $v_i$ corresponding to the elements of $S$, detecting its edges to the rest of the elements can also be done in time $O(\log n)$ since a binary search suffices for that purpose. Therefore, the total runtime is $O(k^2 \log n)$.
\end{proof}

\subsection{Exact Solution in Quasi-linear Time}
We show that the runtime of the algorithm of Section \ref{sec:exact_slow} can be improved to quasi-linear. Let us for simplicity divide the element of the array into two sets $S$ and $N$. Set $S$ corresponds to the elements of the approximate solution (whose removal makes the array increasing) and set $N$ contains the rest of the elements. Obviously, set $N$ is increasing otherwise $S$ would not be a valid solution to distance to monotonicity. The key to our improvement is the following observation:

\begin{observation}\label{observation:interval}
	Let $a_i < a_j < a_k$ be three elements of set $N$ and $a_l$ be an element of set $S$. If both $(a_i,a_l)$ and $(a_k,a_l)$ are inversions, then $(a_j,a_l)$ is also an inversion.
\end{observation}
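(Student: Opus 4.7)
The plan is to reduce the observation to a short geometric case analysis in the index--value plane. Because $N$ is an increasing subsequence and $a_i, a_j, a_k \in N$ are strictly increasing in value, their indices must also satisfy $i < j < k$. This is the only structural fact from $N$ that I will need; the element $a_l \in S$ is otherwise unconstrained.

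Next, I would unpack the definition of inversion and enumerate the possible placements of $a_l$ relative to $a_i$ and to $a_k$. The hypothesis that $(a_i, a_l)$ is an inversion forces exactly one of (A) $i < l$ with $a_i > a_l$, or (B) $l < i$ with $a_l > a_i$. Similarly, $(a_k, a_l)$ being an inversion forces (C) $k < l$ with $a_k > a_l$, or (D) $l < k$ with $a_l > a_k$. This yields four combinations.

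Two of them are ruled out on the nose by the already-established inequalities $a_i < a_k$ and $i < k$: combination (A+D) would demand $a_l < a_i < a_k < a_l$, and combination (B+C) would demand $k < l < i < k$. The surviving cases are (A+C), where $l > k$ and $a_l < a_i$, and (B+D), where $l < i$ and $a_l > a_k$. In the first, combining $j < k < l$ with $a_l < a_i < a_j$ gives an inversion $(a_j, a_l)$; in the second, combining $l < i < j$ with $a_j < a_k < a_l$ likewise gives an inversion, which is the desired conclusion.

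Since the argument is a finite case analysis with a constant amount of bookkeeping per case, I do not expect any substantive obstacle. The only care needed is to commit to a convention for the ordered notation $(a_i, a_l)$ (or, alternatively, to interpret it as an unordered conflicting pair and check that both orderings are handled) and to rule out the two contradictory combinations explicitly rather than sweep them under the rug. The deeper point worth highlighting in the exposition, which this proof makes visible, is that the set of elements of $N$ that invert with a fixed $a_l$ is a value-interval of $N$, a monotonicity property that will presumably be exploited in the quasi-linear algorithm that follows.
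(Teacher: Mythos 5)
Your proposal is correct and follows essentially the same route as the paper's proof: deduce $i<j<k$ from the monotonicity of $N$, observe that the two inversion hypotheses force $a_l$ to lie either entirely before $a_i$ with $a_l>a_k$ or entirely after $a_k$ with $a_l<a_i$, and conclude in either case that $(a_j,a_l)$ is an inversion. Your version merely makes explicit the case elimination that the paper states tersely.
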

\begin{proof}
	Notice that since we have $a_i < a_j < a_k$ and all three are in $N$, then we can infer that $i < j < k$. On the other hand, since both $(a_i,a_l)$ and $(a_k,a_l)$ are inversions, then either both $l < i$ and $a_l > a_k$ hold or both $l > k$ and $a_l < a_i$ hold. In either case, $(a_j,a_l)$ makes an inversion.
\end{proof}

Observation \ref{observation:interval} shows that any element of $S$ makes an inversion with an interval of the elements in $N$. This implies an important consequence: Label each element $a_i \in N$ with a set of elements $I(a_i) \subseteq S$ such that for each element $a_j \in I(a_i)$, pair $(a_i,a_j)$ makes an inversion. Based on Observation \ref{observation:interval} we can prove that the total number of distinct labels in $N$ is bounded by $2|S|+1$.

\begin{lemma}\label{lemma:key}
	For each element $a_i \in N$, define its label by $I(a_i) \subseteq S$ where $I(a_i)$ contains all elements of $S$ that make an inversion with $a_i$. Then we have:
	$$|\bigcup_{a_i \in N} \{I(a_i)\}| \leq 2|S|+1.$$
\end{lemma}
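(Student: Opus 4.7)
The plan is to exploit the interval structure given by Observation~\ref{observation:interval} and count transitions along $N$ rather than trying to enumerate labels directly. Since $S$ is a valid solution for distance to monotonicity, the elements of $N$ form an increasing subsequence, so ordering $N$ by value coincides with ordering by index. I will view $N$ as a linearly ordered sequence $a_{i_1}<a_{i_2}<\cdots<a_{i_t}$ and track how $I(a_{i_r})$ evolves as $r$ increases.

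The first step is to reformulate the claim for each fixed $a_l\in S$: let $J(a_l)=\{\,a_{i_r}\in N : a_l\in I(a_{i_r})\,\}$ be the set of elements of $N$ that make an inversion with $a_l$. I claim $J(a_l)$ is a contiguous interval in the sequence $a_{i_1},\dots,a_{i_t}$. This is exactly what Observation~\ref{observation:interval} gives: if $a_{i_r}$ and $a_{i_{r'}}$ both lie in $J(a_l)$ with $r<r'$, and $a_{i_s}$ is any element with $r<s<r'$, then $a_{i_r}<a_{i_s}<a_{i_{r'}}$, so the observation forces $a_l\in I(a_{i_s})$, i.e.\ $a_{i_s}\in J(a_l)$.

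The second step is a simple counting argument. As $r$ goes from $1$ to $t$, the label $I(a_{i_r})$ can only change between $r$ and $r+1$ if some $a_l\in S$ either enters or leaves its interval $J(a_l)$ at that transition. Each of the $|S|$ intervals $J(a_l)$ contributes at most two such transition indices (one when we first enter the interval, one immediately after we leave it), so the total number of indices $r$ at which $I(a_{i_r})\ne I(a_{i_{r+1}})$ is at most $2|S|$. Consequently the sequence $I(a_{i_1}),I(a_{i_2}),\dots,I(a_{i_t})$ is constant on at most $2|S|+1$ maximal runs, and therefore takes at most $2|S|+1$ distinct values.

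This is not a deep argument, so I do not expect real obstacles; the only subtle point is the reformulation in the first step, making sure that "interval in the sorted order of $N$" is precisely what Observation~\ref{observation:interval} yields, and that this ordering coincides with the index ordering because $N$ is increasing (otherwise $S$ would not be a valid deletion set). Once that is in place, the counting of transitions is immediate, and the bound $2|S|+1$ drops out without any further work.
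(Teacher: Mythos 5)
Your proof is correct and follows essentially the same route as the paper: both use Observation~\ref{observation:interval} to show that each element of $S$ conflicts with a contiguous block of $N$ (in the value order, which equals the index order since $N$ is increasing), and then count at most two boundaries per element of $S$, giving at most $2|S|+1$ constant regions and hence at most $2|S|+1$ distinct labels. The paper phrases the boundaries as value thresholds $\alpha,\beta$ per element of $S$ while you phrase them as transition indices in a sweep over $N$, but this is the same argument.
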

\begin{proof}
	For each element $a_j \in S$ that makes an inversion with an element of $N$ define two thresholds $\alpha$ and $\beta$ where $\alpha$ is the smallest element of $N$ that makes an inversion with $a_j$ and $\beta$ is the smallest element of $N$ larger than $\alpha$ that does not make an inversion with $a_j$. Due to Observation \ref{observation:interval}, an element of $N$ makes an inversion with $a_j$ if and only if its value is at least $\alpha$ and smaller  than $\beta$. The total number of thresholds for all elements of $S$ is at most $2|S|$. Moreover if two elements of $N$ are not separated by any threshold, then their labels are the same. Thus, the total number of distinct labels is bounded by $2|S|+1$.
\end{proof}
\input{figs/elements}

Lemma \ref{lemma:key} is important from an algorithmic point of view because of the following: if two elements have the same label (and thus the same set of conflicting elements in $S$), either they both contribute to the optimal solution (removed from the array) or they both remain in the array.  Thus, one can merge these elements and make a larger element in the array that represents both of them. More generally, for each label, if we merge all the element attributed to that label and make a single element out of them (with a larger size), the size of the optimal solution remains unchanged. Since the total number of labels is bounded by $2|S|+1$ then this transformation leaves us with $2|S|+1$ elements and we only need to solve the problem for them. This can be done in time $O(k \log n)$ as shown in Lemma \ref{lemma:linear}. We note that in the above, we assume random access to the elements of the array is provided in time $O(1)$. 

\begin{lemma}\label{lemma:linear}
	Given query access to an array $a$ of length $n$ and a  $2$-approximate solution for distance to monotonicity of $a$ with size $k$, one can find an exact solution for distance to monotonicity in time $O(k \log n)$. 
\end{lemma}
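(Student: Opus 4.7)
The plan is to exploit Lemma~\ref{lemma:key}: the elements of $N$ fall into at most $2k+1$ label classes, and within each class all $N$-elements are ``twins'' in the conflict graph (they share the same $S$-neighbors). By the argument following Lemma~\ref{lemma:key}, any optimal DTM solution either removes all twins together or keeps them all, so I can collapse each class into a single weighted super-element of weight equal to the class size without changing the optimum. The resulting weighted DTM instance has only $O(k)$ items and is solvable by weighted patience sorting in $O(k \log k)$ time.

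First I locate the label classes. By Observation~\ref{observation:interval}, the set of $N$-elements conflicting with a given $a_j \in S$ is a contiguous interval of $N$ in sorted order (which coincides with index order since $N$ is increasing). For each $a_j \in S$, the two endpoints of this interval are extracted in $O(\log n)$ time using the balanced BST of $N$ inherited from Section~\ref{sec:2-approx}: one $O(\log n)$ search locates $v_j$ among the values of $N$, and a second counts how many $N$-elements have original index below $j$ (a rank query on the index-sorted copy of $S$). These two numbers pin down the conflict interval and hence the two thresholds $\alpha_j,\beta_j$. Processing all $k$ elements of $S$ in $O(k\log n)$ time and sorting the $\le 2k$ resulting thresholds partitions $N$ into $\le 2k+1$ label classes.

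Next I build the reduced array $a'$. Because $S$-positions are interleaved with $N$-positions in $a$, a label class $C$ may be split across several maximal $N$-runs by intervening $S$-elements. I further subdivide each class at every $S$-index lying inside the class's index range; this creates one super-element per (class, piece) pair. Since the index-ranges of distinct classes are pairwise disjoint in $a$, each $S$-element lies inside at most one class's range, so the total number of pieces is at most $(2k+1)+k=O(k)$. For each piece $P$ I emit a super-element of weight $|P|$, value $\min P$, and a representative index inside $P$, and I keep each $S$-element unchanged with weight $1$. Sorting these $O(k)$ items by original index yields $a'$.

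The correctness claim is that the weighted $\lis$ of $a'$ equals $\lis(a)$, which then gives $\mathsf{DTM}(a)=n-\lis(a')$. The easy direction is the twin property: an optimal LIS of $a$ can be chosen to take each class all-or-nothing, and this immediately induces a weighted increasing subsequence of $a'$ of the same weight. The converse hinges on a geometric fact I will verify and which is the main obstacle: if an $S$-element $a_l$ sits in $a$ between two $N$-elements of the same class, then because $a_l$ cannot simultaneously be in the label of both sides, $a_l \notin I(C)$ forces $v_l$ to lie strictly between the max of the left piece and the min of the right piece. Consequently the values of super-pieces and intervening $S$-elements along $a'$ stay in a relative order consistent with $a$, so any weighted increasing subsequence of $a'$ expands back to an increasing subsequence of $a$ of the same weight. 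Once this is in place, a Fenwick-tree implementation of patience sorting computes weighted $\lis(a')$ on $O(k)$ items in $O(k\log k)=O(k\log n)$ time, and the total runtime is dominated by the $O(k\log n)$ spent on the threshold searches.
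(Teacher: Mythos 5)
Your proposal is correct and follows essentially the same route as the paper: locate each $S$-element's conflict interval in $N$ with $O(\log n)$ searches, collapse equal-label elements of $N$ into weighted super-elements (at most $O(k)$ items by Lemma~\ref{lemma:key}), and finish with weighted patience sorting in $O(k\log k)$, for $O(k\log n)$ total. Your extra subdivision of classes at interior $S$-indices and the explicit check of the interleaving fact are a more careful rendering of the paper's terse "merge each class into one super-element with an arbitrary representative value" step, not a different approach; the only slip is that the second rank query should be over the index order of $N$, not of $S$.
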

\begin{proof}
	The algorithm and its correctness is outlined above. Here we just explain the runtime. Since the elements of $N$ are stored in a balanced tree data structure, for each element of $S$ we can find in time $O(\log n)$ which interval of the elements of $N$ makes an inversion with it. This takes a total runtime $O(k \log n)$. Next, we sort all the intervals based on them and merge elements of $N$ whose labels are the same. We refer to these merged elements as super elements. The value of a super element can be equal to the value of an arbitrary element which is used in its construction.
	
	After constructing the super elements in time $O(k \log n)$ the size of the problem reduces to $O(k)$. However, patience sorting does not solve this problem since super elements have weights. In other words, removing a super element incurs a cost equal to the number of elements used to make it. Nonetheless, it is known~\cite{jacobson1992heaviest} that even if the elements are weighted, for an array of size $k$, one can solve both \textsf{LIS} and \textsf{DTM} in time $O(k \log k)$.
\end{proof}

\subsection{$1+\epsilon$ Approximation for \textsf{DTM} with Update Time $O(\log ^2n)$ }
The last step is to obtain a $1+\epsilon$ approximate solution using the above techniques. In parallel, we always run the algorithm of Section \ref{sec:2-approx} to maintain a 2-approximate solution. In order to obtain a $1+\epsilon$ approximation algorithm, we use the framework of Section \ref{sec:amortized}. To design our block-based algorithm we start with an array of length $n$. We set the preprocessing time to $O(k \log n)$ where $k$ is the size of the 2-approximate solution (this we know by the parallel algorithm that we run). In the preprocessing phase, we find an exact solution (say $k' \geq k/2$) in time $O(k \log^2 n)$\footnote{The additional $O(\log n)$ factor is due to the data structure used for random access to the elements of the array.} for the array and report it as the value of distance to monotonicity. We set $g(a) = \epsilon k/2$ and for the next $\epsilon k/2$ steps, we report $d+i$ for the $i$'th operation where $d$ is the solution for the initial array. This is clearly an upper bound on the size of the solution as well as a $1+\epsilon$ approximation of it. Thus, the worst-case update time is $O(1)$.  

By Lemma \ref{lemma:trivial}, our block-based algorithm can be turned into a dynamic algorithm with worst-case update time $O(\log^2 n)$.

\begin{theorem}\label{theorem:dtm}
	For any $\epsilon > 0$, there exists a dynamic algorithm for distance to monotonicity with approximation factor $1+\epsilon$ and worst-case update time $O(\log^2 n)$.
\end{theorem}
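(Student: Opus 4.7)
The plan is to apply the block-based framework of Lemma~\ref{lemma:reduction} with the exact computation of Lemma~\ref{lemma:dtm} as the preprocessing engine, using the $2$-approximate algorithm of Section~\ref{sec:2-approx} as a permanently-running auxiliary data structure. Concretely, I would maintain the maximal disjoint-inversion collection of Section~\ref{sec:2-approx} at all times, so that at every moment we have random access to a set $S$ of size $k$ with $k/2 \le \mathsf{DTM}(a) \le k$, and updating $S$ costs $O(\log^2 n)$ in the worst case per operation. This $S$ then serves as the ``approximate solution'' required by Lemma~\ref{lemma:dtm} whenever a new block begins.

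For the block-based layer, starting from an array $a$ of size $n$ with auxiliary $2$-approximate set of size $k$, I set $f(a)=O(k\log^2 n)$ (one invocation of Lemma~\ref{lemma:dtm}, with an extra $\log n$ factor from the balanced-tree oracle), $g(a)=c\epsilon k$ for a small constant $c>0$ to be fixed, and $h(a)=O(1)$. The preprocessing computes the exact value $d=\mathsf{DTM}(a)$; during the block, on the $i$-th operation I simply output the estimate $d+i$ without touching anything inside the block. Since each insertion or deletion changes $\mathsf{DTM}$ by at most $1$, the true value at step $i$ lies in $[d-i,d+i]$, so $d+i$ is a valid upper bound, and the ratio $(d+i)/(d-i)$ stays below $1+\epsilon$ as long as $i\le \epsilon d/(2+\epsilon)$; using $d\ge k/2$ this is guaranteed by picking $c$ small enough. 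Relativity of $(f,g,h)$ holds because the $2$-approximate size $k$ cannot halve or double within a single block of length $\Theta(\epsilon k)$.

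Finally, I feed this block-based algorithm into Lemma~\ref{lemma:reduction}. The worst-case update time becomes
\[
O\!\left(\max\{h(a),\,f(a)/g(a)\}\right)=O\!\left(\max\{1,\,(k\log^2 n)/(c\epsilon k)\}\right)=O(\log^2 n),
\]
where the hidden constant absorbs the $1/\epsilon$ factor. The $2$-approximation maintained in parallel also costs only $O(\log^2 n)$ per operation, so the overall worst-case bound is preserved, and the approximation factor inherited from the block-based algorithm is $1+\epsilon$.

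The main obstacle is a bookkeeping one rather than a conceptual one: I need to check that the auxiliary $2$-approximate structure, which the Lemma~\ref{lemma:reduction} reduction expects to be ``always available'' at the start of each new block, truly is available in $O(1)$ query time at block boundaries, and that rebuilding Lemma~\ref{lemma:dtm}'s label-merged instance inside the preprocessing phase (rather than doing it lazily) respects the relativity of $f$. Both issues are resolved by running the $2$-approximate algorithm of Section~\ref{sec:2-approx} continuously and independently of the block structure, so that at the moment any new block $\mathcal{B}_{i+1}$ is initialized we already have random access to a valid set $S$ on the current array; then Lemma~\ref{lemma:dtm} can be invoked directly within the $f(a)$ time budget, with no amortized slack needed.
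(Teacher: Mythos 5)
Your proposal is correct and matches the paper's own proof essentially step for step: the paper likewise runs the maximal-inversion $2$-approximation of Section~\ref{sec:2-approx} in parallel, builds a block-based algorithm with $f(a)=O(k\log^2 n)$, $g(a)=\Theta(\epsilon k)$, $h(a)=O(1)$ that reports $d+i$ during the block, and then invokes Lemma~\ref{lemma:reduction} with the same relativity observation. Your ratio calculation for why $d+i$ stays within $1+\epsilon$ is just a slightly more explicit version of the paper's argument.
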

\begin{proof}
	For an array $a$, our block-based algorithm has preprocessing time $f(a) = O(k \log^2 n)$ where $k = \mathsf{DTM}(a)$. Also, $g(a) = \epsilon k/2$ and $h(a) = O(1)$. Thus, by Lemma \ref{lemma:reduction}, the worst-case update time of the equivalent dynamic algorithm is $O(\log^2 n)$. Notice that since after $g(a)$ steps the size of the $\mathsf{DTM(a)}$ changes by a small factor, all functions $f,g,$ and $h$ remain asymptotically the same which implies relativity.
\end{proof}

\newpage

\bibliographystyle{abbrv}	
\bibliography{draft}

\newpage
\appendix
\newpage
\section{Streaming Algorithm for \textsf{LIS}}\label{sec:lisstreaming}
We outlined an improved streaming algorithm for \textsf{LIS} in Section \ref{sec:results}. Here we give a proof for its correctness and a bound on its memory. In this setting, we assume that the input is available to us in any desired order. We call this setting, streaming with advisory help. Previous work solves the problem with memory $O(\sqrt{n})$ within a factor $1+\epsilon$ in a single round~\cite{DBLP:conf/soda/GopalanJKK07}.

\begin{observation}
    For any $0 < \kappa < 1$, there exists a streaming algorithm that uses advisory help and approximates the \textsf{LIS} of an array of length $n$ with memory $\tilde O(n^{2/5+\kappa})$ within a factor $O(1/\kappa)$ in three rounds. This algorithm is randomized and gives an approximate solution with probability at least $1-1/n$.
\end{observation}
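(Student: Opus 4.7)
The plan is to instantiate the three-round protocol sketched in Section~\ref{sec:results} with parameter $m = n^{1/5}$ and the grid packing construction of Theorem~\ref{theorem:grid-cover}. In round one, I would sample $m-1$ elements uniformly at random from the array to serve as horizontal thresholds, while the vertical thresholds are deterministic (they split the index range into $m$ equal parts). A standard Chernoff / balls-in-bins argument shows that with probability at least $1 - 1/n$, every horizontal stripe contains at most $\tilde O(n/m) = \tilde O(n^{4/5})$ points, and the same bound holds trivially for vertical columns. This gives the balanced grid on which grid packing is applied.

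In round two, I would ask the adversary to present elements in row-major order, grouped by horizontal stripe and, within each stripe, in index order. For every horizontal segment $S$ in the grid packing solution, run in parallel the $(1+\epsilon)$-approximate streaming algorithm of Gopalan \textit{et al.}~\cite{DBLP:conf/soda/GopalanJKK07}, which uses memory $\tilde O(\sqrt{|S|}) = \tilde O(\sqrt{n/m}) = \tilde O(n^{2/5})$ per segment. The key observation is that while processing a fixed grid cell, by the grid packing construction at most $\tilde O(m^\kappa)$ horizontal segments cover that cell; as elements arrive in row-major order, a segment becomes inactive (and its state can be collapsed to the single number $\llcs(S)$) as soon as its right endpoint is passed. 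Hence at any instant we are maintaining only $\tilde O(m^\kappa)$ live streaming instances, giving a round-two memory cost of $\tilde O(m^\kappa \cdot n^{2/5}) = \tilde O(n^{2/5+\kappa})$. Round three repeats the identical construction in column-major order for vertical segments.

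After the three rounds, I would run the same dynamic program used in the proof of Theorem~\ref{theorem:constant} over the $\tilde O(m^{2+\kappa}) = \tilde O(n^{(2+\kappa)/5}) \leq \tilde O(n^{2/5+\kappa})$ stored segment values to select a maximum-score non-conflicting subset. The approximation factor follows directly from Theorem~\ref{theorem:grid-cover}: assigning each grid cell the contribution of a fixed optimal \textsf{LIS} to that cell makes the score of the grid equal to $|\textsf{LIS}(a)|$, the $\textsf{LIS}$ of each segment upper-bounds its contribution to the fixed optimum and lower-bounds what our DP can recover from that segment, and the grid packing guarantee loses only an $O(1/\kappa)$ factor on top of this.

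The main obstacle I anticipate is making the round-two memory accounting fully rigorous. One must verify carefully that (i) the random horizontal thresholds produce row sizes within a polylog factor of $n/m$ with the claimed $1 - 1/n$ probability, absorbed cleanly into the $\tilde O$ notation; and (ii) the row-major presentation order really does let every horizontal segment be finalized at the moment its right endpoint is crossed, so that the ``$\tilde O(m^\kappa)$ segments active at any instant'' bound controls the number of streaming instances simultaneously in memory --- not merely the per-cell multiplicity. Once these two points are pinned down, the choice $m = n^{1/5}$ is exactly what balances the per-segment streaming cost $\sqrt{n/m}$, the simultaneous-activity overhead $m^\kappa$, and the final DP storage $m^{2+\kappa}$ at the common value $\tilde O(n^{2/5+\kappa})$.
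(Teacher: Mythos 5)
Your proposal is correct and follows essentially the same route as the paper's proof: sample $m-1=n^{1/5}-1$ random horizontal thresholds in round one (with the same Chernoff-based row-size bound and $1-1/n$ success probability), run the Gopalan \textit{et al.} streaming algorithm per segment in row-major and column-major order in rounds two and three with the per-cell coverage bound $\tilde O(m^{\kappa})$ controlling the number of simultaneously live instances, and finish with the non-conflicting-segments DP over $\tilde O(m^{2+\kappa})$ stored values, losing only the $O(1/\kappa)$ grid-packing factor. The two verification points you flag are exactly the ones the paper handles (and handles briefly), so no further work is needed.
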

\begin{proof}
Let $m = n^{1/5}$ be the size of the grid. We use an $(O(m^\kappa \log m),O(1/\kappa))$ approximate solution of grid packing to cover the grid cells with segments. Similar to what we did before, we think of each element $i$ of the array as a point $(i,a_i)$ of the 2D plane. As mentioned earlier, in the first round, we sample $m-1$ elements from the array. We use these elements to draw the horizontal lines of the grid. It follows from standard Chernoff bound that since the elements are chosen uniformly at random, the number of elements in every row of the grid is bounded by $10 n/m \log n$ with probability at least $1-1/n$. Also, we draw the vertical lines evenly so that they divide the elements into chunks of size $n/m$.

In the next two rounds, we ask for the elements of the array in the row-order and column-order. More precisely, in the second round, we first ask for the elements falling in the first row of the grid (in the column order). Next, we ask for the elements of the second row and so on. For each row, we approximate the value of the \textsf{LIS} for each segment. Since the total number of elements in every row is bounded by 
$10 n/m \log n$, we only need memory $\tilde O(\sqrt{n/m \log n}) = \tilde O(n^{2/5})$ to approximate the value of \textsf{LIS} for each segment. However, we need this much memory for multiple segments. This adds an overhead of $\tilde O(m^\kappa)$ to the memory since each grid cell may be covered by at most $\tilde O(m^\kappa)$ segments. Similarly, in the third round, we ask for the elements in the column-order.

Finally, we use a DP to find a subset of non-conflicting segments with the largest total sum of \textsf{LIS}. This can be done with memory $\tilde O(m^{2+\kappa}) = \tilde O(n^{2/5+\kappa})$ as the number of segments is bounded by $\tilde O(m^{2+\kappa})$.

The correctness of the algorithm is similar to the one given for dynamic \textsf{LIS}. We fix an arbitrary \textsf{LIS} of the array and assume that the adversary puts the contribution of each cell of the grid to the fixed \textsf{LIS}. The \textsf{LIS} of each segment is a clear upper bound on the score of that segment on the grid, however, if we use those values instead of their scores, we still obtain a valid solution. Finally, since our solution for the grid packing problem is $(O(m^\kappa \log m),O(1/\kappa))$ approximate, then the score we obtain using non-conflicting segments is at least an $\Omega(\kappa)$ fraction of the score of the grid which is equal to the size of the \textsf{LIS}.
\end{proof}
\newpage
\section{Improved Algorithm for $\mathsf{LIS}^+$}\label{sec:lisplus}
We provide the intuition behind the algorithm and then the formal proof. For simplicity, we assume here that we have random access to all elements of the array in time $O(1)$. Our first goal is to design a block-based algorithm for an array $a$  of length $n$. We set $f(n) = O(n \log n)$ and in the preprocessing phase we compute the \textsf{LIS} of $a$. Let this value be $x$. Since we only add elements in the $\mathsf{LIS}^+$ problem, from here on, $x$ is a lower bound for the solution value. For the next $g(n) = \sqrt{n}$ operations, every new element is added to a separate set, and after each operation, the \textsf{LIS} of the separate set is computed in time $O(\sqrt{n} \log n)$. At an arbitrary step, let this value be $y$. The key observation is that the overall \textsf{LIS} of the array is in range $[\max\{x,y\},x+y] \leq 2\max\{x,y\}$. We therefore have a 2-approximate solution by just reporting $\max\{x,y\}$.  This block-based algorithm for $\lisplus$ with $f(n) = O(n \log n)$, $g(n) = \sqrt{n}$, and $h(n) = O(\sqrt{n} \log n)$ yields a dynamic algorithm with worst-case update time $\tilde O(\sqrt{n})$.

We recurse to improve the runtime down to $O(n^\epsilon)$ for any $\epsilon > 0$. Instead of using the naive algorithm for the $g(n)$ operations after the initialization, we can use the more advanced algorithm explained above (its update time is $\tilde \Omega(\sqrt{n})$ better than computing the \textsf{LIS} from the scratch every time).
%\Michael{What more advanced algorithm?  Cite a theorem or lemma!}\Saeed{Added more details.}
Similar to what we did in \ref{sec:constant}, in each level of recursion, we use the previous algorithm for the operations after initialization. The advantage of this approach in this setting over Section \ref{sec:constant} is that the approximation factor of the algorithm depends linearly on $1/\epsilon$ (and not exponentially). To see this, assume that at some point, $x$ is the solution for the initial array and $y$ is an $\alpha$-approximation for the solution of the second set of operations. The optimal solution is upper bounded by $x+\alpha y$ and lower bounded by $\max\{x,y\}$. Therefore, by reporting $\max\{x,y\}$ we can be sure that our approximation factor is bounded by $\alpha + 1$. 
%\Michael{The following sentence doesn't make sense to me.  If we recurse on what algorithm $\alpha$ times?  We just got an algorithm of $\alpha +1$ in the previous sentence doing this one time.  What does it even mean to recurse on the algorithm $\alpha$ times?  Is $\alpha$ restricted to be a positive intereger, this was never stated?}\Saeed{Added more details.}
In other words, if we recurse on this algorithm $\alpha$ times, then the approximation factor is bounded by $\alpha + 1$. 
%\Michael{Why does this next sentence follow from the followign sentence?  What is the relationship of $\alpha$ and $\epsilon$.  Where did this running time come from?}
This results in an algorithm with worst-case update time $\tilde O(n^{\epsilon})$ and approximation factor $O(1/\epsilon)$.

\begin{observation}\label{obs:append}
For any constant $\epsilon > 0$, there exists an algorithm for dynamic $\lisplus$ whose worst-case update time is $\tilde O(n^{\epsilon})$ and whose approximation factor is $O(1/\epsilon)$.
\end{observation}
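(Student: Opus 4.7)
The plan is to follow the recursive block-based framework sketched in the paragraphs preceding the observation, combined with Lemma~\ref{lemma:reduction}. I will build a sequence of block-based algorithms $\mathcal{B}_1, \mathcal{B}_2, \ldots, \mathcal{B}_k$ for $\lisplus$, where $\mathcal{B}_k$ uses $\mathcal{B}_{k-1}$ as a subroutine. The base block-based algorithm $\mathcal{B}_1$ starts from an array $a$ of length $n$, uses preprocessing time $f_1(n) = O(n\log n)$ to compute $x = \lis(a)$ exactly via patience sorting, and then for the next $g_1(n) = \sqrt{n}$ insertions it maintains a separate buffer $B$ of new elements and, after each insertion, computes $y = \lis(B)$ from scratch in time $h_1(n) = O(\sqrt{n}\log n)$. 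At each step it reports $\max\{x, y\}$.

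The key approximation observation is that, since elements are only inserted, $\lis$ of the current array lies in the interval $[\max\{x,y\},\, x+y]$: the first inequality is immediate, and the second follows because any increasing subsequence of the new array decomposes into one subsequence from the original elements and one from $B$. Thus $\max\{x,y\}$ is a $2$-approximation. More generally, if the subroutine we use to track $y$ returns an $\alpha$-approximation $\tilde y$, then $\max\{x,\tilde y\}$ lies in $[\max\{x, \tilde y\}, x + \alpha \tilde y]$, so it is an $(\alpha+1)$-approximation. This linear (not multiplicative) degradation is exactly why $\lisplus$ can be handled with approximation $O(1/\epsilon)$ rather than the $(1/\epsilon)^{O(1/\epsilon)}$ we got for general $\lis$ in Theorem~\ref{theorem:constant}.

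For the recursion, assume inductively that $\mathcal{B}_{k-1}$ is a block-based $\lisplus$ algorithm with approximation factor $O(k)$, preprocessing $f_{k-1}(n) = \tilde O(n)$, block length $g_{k-1}(n) = n^{1 - 1/(k+1)}$, and per-operation update time $h_{k-1}(n) = \tilde O(n^{1/(k+1)})$; via Lemma~\ref{lemma:reduction} this produces a dynamic algorithm $\mathcal{B}'_{k-1}$ with worst-case update time $\tilde O(n^{1/(k+1)})$. To define $\mathcal{B}_k$, starting from an array of length $n$, spend $f_k(n) = O(n\log n)$ preprocessing time to compute $x = \lis(a)$ exactly, then over the next $g_k(n) = n^{1-1/(k+2)}$ insertions maintain a fresh buffer $B$ using $\mathcal{B}'_{k-1}$ on $B$; after each insertion query $\tilde y$ from that dynamic subroutine and report $\max\{x, \tilde y\}$. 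Since $|B| \le g_k(n) = n^{1-1/(k+2)}$, the per-step update time is $h_k(n) = \tilde O(|B|^{1/(k+1)}) = \tilde O(n^{(1-1/(k+2))/(k+1)}) = \tilde O(n^{1/(k+2)})$, and the ratio $f_k(n)/g_k(n) = \tilde O(n^{1/(k+2)})$ matches. The approximation factor is $O(k) + 1 = O(k+1)$, completing the induction. Relativity of $(f_k, g_k, h_k)$ is immediate because $g_k(n)$ insertions change the array length by a $1+o(1)$ factor.

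Applying Lemma~\ref{lemma:reduction} to $\mathcal{B}_k$ yields a dynamic algorithm with worst-case update time $\tilde O(n^{1/(k+2)})$ and approximation factor $O(k)$. Choosing $k = \lceil 1/\epsilon \rceil$ gives update time $\tilde O(n^\epsilon)$ and approximation $O(1/\epsilon)$, as claimed. The main thing to be careful about is the accounting in the recursion: one must verify that the preprocessing cost of $\mathcal{B}_{k-1}$ (called fresh inside $\mathcal{B}_k$'s preprocessing on an empty buffer) is only $\tilde O(1)$, that $B$'s length never exceeds $g_k(n)$ during a block so relativity for the inner call is maintained, and that the $+1$ additive loss in the approximation factor at each level does not compound multiplicatively; all three hold because $B$ is reset at the start of every block of $\mathcal{B}_k$ and the outer algorithm always reports $\max\{x, \tilde y\}$ with $x$ exact.
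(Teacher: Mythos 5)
Your proposal is correct and follows essentially the same route as the paper's proof of this observation: the same base block-based scheme (exact \textsf{LIS} of the initial array, an \textsf{LIS} computation on the buffer of new insertions, and reporting the maximum of the two), the same additive $\alpha\mapsto\alpha+1$ degradation of the approximation factor that makes $\lisplus$ behave linearly in $1/\epsilon$, and the same recursion through Lemma~\ref{lemma:reduction} with roughly $1/\epsilon$ levels. The only blemish is an off-by-one between your base case ($g_1=\sqrt{n}$, $h_1=\tilde O(\sqrt{n})$) and your inductive formulas (which for $k=1$ would give $g_1=n^{2/3}$, $h_1=\tilde O(n^{1/3})$); the correct schedule is $h_k=\tilde O(n^{1/(k+1)})$, but this indexing slip is cosmetic and does not affect the final choice $k=\lceil 1/\epsilon\rceil$ or the stated bounds.
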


If one could show the statement of Observation \ref{obs:append} for any (possibly sub-constant) $\epsilon > 0$, then by setting $\epsilon = 1/\log n$, we could obtain a dynamic algorithm for $\lisplus$ with polylogarithmic update time and logarithmic approximation factor. However, since there is a constant factor overhead in every recursion, Observation \ref{obs:append} only works when $\epsilon$ is constant. This overhead is incurred in the reduction from dynamic algorithms to block-based algorithms. In the following, we provide a variation of the same algorithm that does not use this reduction and achieves polylogarithmic update with and logarithmic approximation.

\input{figs/lisplus}

\begin{theorem}
There exists an algorithm for dynamic $\lisplus$ whose worst-case update time is $O(\log^3 n)$ and whose approximation factor is $O(\log n)$.
\end{theorem}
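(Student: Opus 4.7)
The plan is to bypass the block-based reduction of Lemma~\ref{lemma:reduction} entirely: that reduction pays a constant multiplicative overhead at every level of recursion, which is fine for any constant $\epsilon$ but kills us as soon as we try to push $\epsilon$ to $1/\log n$. Instead, I would maintain a Bentley--Saxe-style hierarchy of buckets directly on the dynamic array, exploiting the fact that in $\lisplus$ elements are only ever inserted.

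First, the data structure. I would partition the array into a tower of buckets $B_0, B_1, \ldots, B_{\lceil \log n \rceil}$ by creation time, with $|B_i| \le 2^i$, updated according to the usual binary-counter rule: a new element goes into $B_0$, and whenever a level overflows its two occupants are merged into the next level. The key observation is that since $\lisplus$ permits only insertions and every new element is routed into the current smallest bucket, no element is ever inserted \emph{between} two elements of an already-formed $B_i$. Hence the relative order of $B_i$'s elements inside the global array is frozen from the moment $B_i$ is created, and therefore $\lis(B_i)$ is a static quantity that only needs to be recomputed when $B_i$ is itself rebuilt by a merge. I store, alongside the tower, a balanced BST on the elements of $a$ (as in the preliminaries) so that positional comparisons used for LIS computation can be answered in $O(\log n)$ time.

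Second, the estimator and its approximation guarantee. At any moment the algorithm reports $\widehat{\lis}(a) := \max_i \lis(B_i)$, and on demand returns the stored witness subsequence of the maximizing bucket. For any actual longest increasing subsequence $S$ of the current array,
\[
|S| \;=\; \sum_{i} |S \cap B_i| \;\le\; \sum_i \lis(B_i) \;\le\; (1+\lceil\log n\rceil)\,\widehat{\lis}(a),
\]
while on the other hand $\widehat{\lis}(a) \le \lis(a)$ because each $\lis(B_i)$ is realized by an actual IS of $a$. This gives the claimed $O(\log n)$-approximation independent of any running-time considerations.

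Third, and this is the step where the real work lies, I would deamortize the bucket construction to obtain a \emph{worst-case} $O(\log^3 n)$ bound. Building $B_i$ from the two merged sub-buckets by patience sorting costs $O(|B_i| \log |B_i|) = O(2^i \log n)$ comparisons, each of which costs $O(\log n)$ via the BBST, for $O(2^i \log^2 n)$ total work; on the other hand the binary-counter discipline guarantees $B_i$ is rebuilt only once per $2^{i-1}$ insertions. Scheduling the work of an upcoming $B_i$ uniformly across the $2^{i-1}$ insertions leading up to its deadline yields $O(\log^2 n)$ worst-case contribution \emph{from each level}. Summed over the $O(\log n)$ levels that are simultaneously under construction in the worst case we obtain the $O(\log^3 n)$ per-update bound. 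The main technical obstacle I anticipate is the bookkeeping required to keep, at each level, a ``current'' bucket that is being queried and a ``shadow'' bucket that is being incrementally built, and to show that every shadow bucket is completed strictly before the current bucket at its level is retired; the argument is analogous to the standard de-amortization of the Bentley--Saxe transform and will follow by an induction on level $i$ showing that the shadow at level $i$ finishes in at most $2^{i-1}$ insertions after it begins. Combining the three steps establishes the theorem.
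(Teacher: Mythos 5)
Your proposal is correct and is essentially the paper's own construction: the paper likewise maintains power-of-two buckets merged in binary-counter fashion, spreads each merged bucket's patience-sorting work over the following insertions (its ``finalized'' versus ``not yet finalized'' buckets play exactly the role of your current/shadow pairs), reports the maximum bucket LIS, and derives the $O(\log n)$ approximation from the $O(\log n)$ bound on the number of buckets partitioning the elements and the $O(\log^3 n)$ worst-case update time from $O(\log n)$ levels under construction, each contributing $O(\log n)$ work per step, times the $O(\log n)$ balanced-tree access overhead. One minor remark: your justification that $\mathsf{LIS}(B_i)$ is static (``no element is ever inserted between two elements of an already-formed $B_i$'') is misstated, since $\lisplus$ allows insertions at arbitrary positions; the conclusion nevertheless holds because insertions never alter the values or the relative order of the existing elements of $B_i$, so its restricted subsequence, and hence its LIS, is unchanged.
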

\begin{proof}
In our algorithm, we put the elements in buckets and for every bucket, we compute the value of the \textsf{LIS}. As more buckets are made, we combine them to construct larger buckets. The sizes of the buckets are always powers of 2.
%\Michael{The above sentence suggests we only ever put 1 element in a bucket.  Do you mean we initially put each element in a separate bucket, combine buckets as we  go, and  keep the value of the LIS for every bucket?}\Saeed{Revised the sentence.}

In the beginning, the array is empty and there are no buckets.
When the first element is inserted, we construct the first bucket that contains only that element. After the construction of each bucket, we compute the \textsf{LIS} of that bucket over the next steps. 
%\Michael{Below -- you haven't described how buckets  are combined at all, so how do we even get a bucket of size $k$?}\Saeed{Fixed.}
More precisely, when a bucket of size $k$ is constructed, we divide the task of computing the \textsf{LIS} of that bucket into $k$ pieces and execute these pieces in the next $k$ operations. Thus, when a bucket of size $1$ is constructed, its \textsf{LIS} is computed immediately. We say a bucket is \textit{finalized}, when our algorithm has already computed its \textsf{LIS}. In our algorithm, we only merge finalized buckets to make larger ones and thus, we maintain the property that at each point in time, each element appears in exactly one finalized bucket.
%\Michael{Can buckets be interrupted -- that is merged before we comput  their LIS?}\Saeed{Added more details.}

Every time a new element is inserted, we make a bucket containing that element alone. However, when there are two finalized buckets of the same size (say $k$), we merge them to obtain a bucket of size $2k$. After this, it takes $2k-1$ more steps to finalize the new bucket but once the new bucket is finalized, we remove the two smaller buckets. This way, each element appears in exactly one finalized bucket at a time throughout the process.

At any point in time, we approximate the \textsf{LIS} of the array by the maximum solution for any of the finalized buckets. We prove that with this construction, there are at most $O(\log n)$ finalized buckets at every point in time and moreover, the number of buckets that are not finalized is also bounded by $O(\log n)$. 

This immediately implies that the approximation factor of our algorithm is bounded by $O(\log n)$. The reason is that each element is always included in exactly one finalized bucket at a time and therefore the total sum of \textsf{LIS}'s for all finalized buckets is an upper bound on the size of the solution. 
%\Michael{I don't get this.  What about unfinalized buckets?  Couldn't elements from  unfinalized buckets take part?  Why is total sum just from finalized buckets an  upper bound?  Maybe you're saying unfinalized buckets only contain REPEATS of elements already in finalized buckets.  If  that's the case, please state that clearly and explicitly earlier!}
Moreover, the maximum solution size for each bucket is a lower bound on the \textsf{LIS} of the entire array. Since the number of finalized buckets is bounded by $O(\log n)$ this implies that the approximation factor of our algorithm is bounded by $O(\log n)$.

We also bound the runtime by $O(\log^3 n)$. In our algorithm, at every point in time, there are $O(\log n)$ different buckets that are not finalized yet. The total runtime needed to compute the \textsf{LIS} of a bucket of size $k$ is $O(k \log k)$ which is divided over $k$ steps. Thus, each bucket which is not finalized yet requires time at most $O(\log n)$ for each step. Thus the overall runtime is $O(\log^2 n)$ for each step. One thing to keep in mind is that we use a balanced tree data structure to access the elements of the array which adds an overhead of $O(\log n)$. Therefore, the worst-case update time is $O(\log^3 n)$.

It follows from the construction of the buckets that at each step, there is at most one bucket of each size which is not finalized. The reason is that after a bucket of size $k$ is made, it takes $k$ more steps to make another bucket of the same size. However, before the new bucket is made, the first one will be finalized. Since the sizes of the buckets are powers of $2$, this implies that there are most $\lfloor \log n\rfloor$ such buckets. Moreover, this also shows that the number of finalized buckets of each size is bounded by $3$, otherwise this makes two buckets of larger size that are not finalized yet.
\end{proof}
\newpage
\section{Sequential Algorithm for \textsf{DTM}}\label{sec:dtmclassic}
We present a simple comparison-based algorithm for \textsf{DTM} with approximation factor $2$ that runs in time $O(n)$. Using Lemma \ref{lemma:key}, the approximation factor improves to $1+\epsilon$ in the following way: If the solution size is bounded by $\sqrt{n}$, then Lemma \ref{lemma:key} gives an exact solution in time $\tilde O(\sqrt{n})$. Otherwise, one can find a $1+\epsilon$ approximate solution in time $\tilde O(\sqrt{n})$ using the solution of~\cite{DBLP:conf/soda/NaumovitzSS17}. Thus, the main bottleneck of the runtime is for the computation of an approximate solution, which we show can be done in time $O(n)$.

By Observation~\ref{observation:obvious}, a 2-approximate solution can be obtained by computing a maximal set of inversion pairs. Our algorithm finds such a set in linear time.

We begin by a empty stack. We iterate over the elements of the array and each time we compare the new element to the element at the top of the stack (if any). If this pair makes an inversion, we put this pair in a set $S$ and remove the last element of the stack. Otherwise, we put the new element on top of the stack and continue on. Obviously, the runtime is $O(n)$, since each element $a_i$ is processed in time $O(1)$. The correctness of the algorithm follows from the fact that the numbers of the stack are always increasing and therefore there is no inversion between them. Thus, set $S$ is a maximal set of inversion pairs.

\begin{observation}
For any constant $\epsilon > 0$, \textsf{DTM} can be approximated within a factor $1+\epsilon$ in time $O(n)$.
\end{observation}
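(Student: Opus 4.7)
The plan is to combine three ingredients that are already available in the paper: (i) the stack-based procedure described just above the statement, which by Observation~\ref{observation:obvious} computes a maximal set $S$ of disjoint inversions in time $O(n)$, yielding a 2-approximate removal set of size $k = 2|S|$ with $k/2 \le \mathsf{DTM}(a) \le k$; (ii) Lemma~\ref{lemma:linear}, which upgrades such a 2-approximate solution to an \emph{exact} solution in time $O(k \log n)$; and (iii) the result of~\cite{DBLP:conf/soda/NaumovitzSS17}, which for any $\epsilon>0$ yields a $(1+\epsilon)$-approximation in time $\tilde O(n/d + \sqrt{n})$ whenever $d$ is a known lower bound on $\mathsf{DTM}(a)$.

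First I would run the stack-based algorithm in time $O(n)$ to produce the approximate removal set and its size $k$. Then I would branch on the value of $k$ relative to $\sqrt{n}$. In the small-solution regime $k \le \sqrt{n}$, the true distance satisfies $\mathsf{DTM}(a) \le k \le \sqrt{n}$, so invoking Lemma~\ref{lemma:linear} returns the exact answer in time $O(k \log n) = O(\sqrt{n}\log n) = o(n)$. In the large-solution regime $k > \sqrt{n}$, we have the lower bound $\mathsf{DTM}(a) \ge k/2 > \sqrt{n}/2$, so I would feed $d = k/2$ into the algorithm of~\cite{DBLP:conf/soda/NaumovitzSS17}; its runtime becomes $\tilde O(n/d + \sqrt{n}) = \tilde O(\sqrt{n})$, again $o(n)$, and its output is a $(1+\epsilon)$-approximation. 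Adding the $O(n)$ preprocessing to the sublinear post-processing of either branch gives total time $O(n)$.

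Correctness is immediate: the exact branch can only produce the optimum, while the approximate branch inherits the $(1+\epsilon)$ guarantee from~\cite{DBLP:conf/soda/NaumovitzSS17}; in both cases the worst case is $1+\epsilon$. The main technical point to double-check — rather than an obstacle — is that the $\tilde O$ notation in the large-$d$ invocation hides only polylogarithmic factors in $n$, so that $\tilde O(\sqrt n)$ is genuinely absorbed by the $O(n)$ preprocessing; and that the lower bound $d = k/2$ supplied to that algorithm is indeed valid (which follows from Observation~\ref{observation:obvious}). No new algorithmic idea is needed beyond composing these already-established tools via the $\sqrt n$ threshold.
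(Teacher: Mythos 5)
Your proposal is correct and follows essentially the same route as the paper: compute the maximal set of disjoint inversions with the linear-time stack procedure, then branch at the $\sqrt{n}$ threshold, using the exact algorithm (Lemma~\ref{lemma:linear}) for small solutions and the $\tilde O(n/d+\sqrt{n})$ algorithm of~\cite{DBLP:conf/soda/NaumovitzSS17} with the lower bound $d$ from Observation~\ref{observation:obvious} for large ones. The only difference is cosmetic bookkeeping of the constants, so there is nothing to add.
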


We remark that in the above observation, factors that depend on $1/\epsilon$ are hidden in the $O$ notation.

\newpage
\section{The Algorithm of Chen \textit{et al.}~\cite{chen2013dynamic}}\label{sec:chen-ap}
For formal proofs, we refer the reader to~\cite{chen2013dynamic}. Chen \textit{et al.}~\cite{chen2013dynamic} propose the following algorithm to maintain a solution for dynamic \textsf{LIS}.

For each element $i$ of the array, define $l(i)$ to be the size of the longest increasing subsequence ending at element $a_i$ of the array. Chen \textit{et al.}~\cite{chen2013dynamic} refer to this quantity as the \textit{level} of element $i$. Notice that $l(i)$ can be computed in time $\tilde O(n)$ for all elements of the array using the patience sorting algorithm.

Define $L_k$ to be the set of elements whose levels are equal to $k$. The algorithm of Chen \textit{et al.}~\cite{chen2013dynamic} maintains a balanced binary tree for each $L_k$ that contains the corresponding elements. One key observation is that for each $k$, all the elements  of $L_k$ are decreasing, otherwise their levels would not be the same.

When a new element is added to the array, $L_k$'s may change. More precisely, after an element addition, the levels of some elements may change (but only by 1). Similarly, element removal may change the levels of the elements of the array but again the change is bounded by $1$. Chen \textit{et al.}~\cite{chen2013dynamic}, show that after an insertion, for each $L_k$, the levels of only one interval of the elements may increase. In other words, for each $L_k$, there are two numbers $\alpha$ and $\beta$ such that all the elements whose values are within $[\alpha,\beta]$ increase their levels and the rest remain in $L_k$.

Thus, they use a special balanced tree structure that allows for interval deletion and interval addition in logarithmic time. Therefore, all that remains is to detect which interval of each $L_k$ changes after each operation. They show that this can be computed in time $O(\log n)$ for all $L_k$'s via binary search. Since the number of different levels is equal to the size of the \textsf{LIS}, their update time depends on the size of the solution.

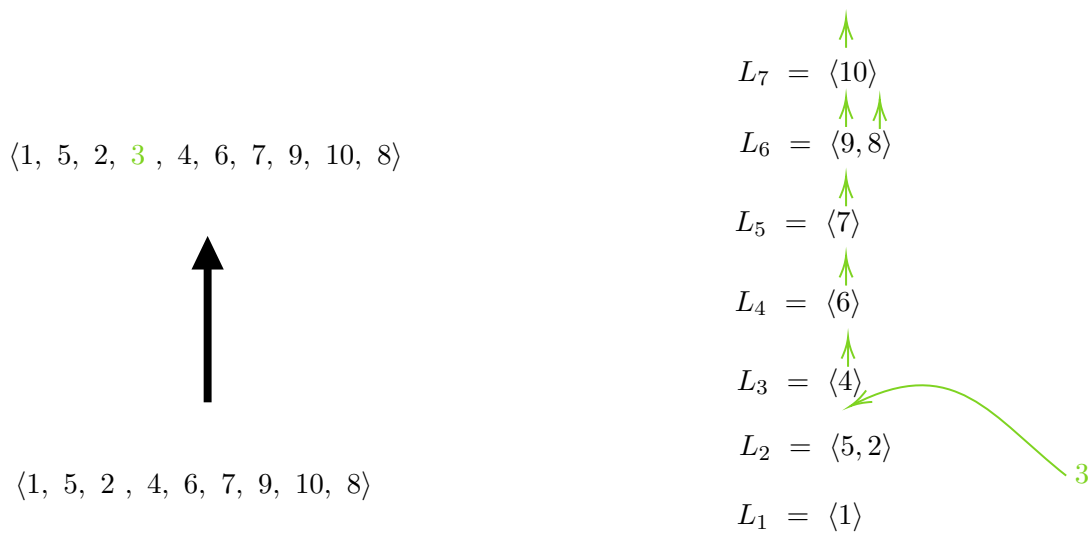
\begin{figure}[ht]

\centering

\tikzset{every picture/.style={line width=0.75pt}} %set default line width to 0.75pt        

\begin{tikzpicture}[x=0.75pt,y=0.75pt,yscale=-1,xscale=1]
%uncomment if require: \path (0,300); %set diagram left start at 0, and has height of 300

%Straight Lines [id:da6222791956949012] 
\draw [line width=3]    (163,205) -- (163,126) ;
\draw [shift={(163,121)}, rotate = 450] [fill={rgb, 255:red, 0; green, 0; blue, 0 }  ][line width=3]  [draw opacity=0] (16.97,-8.15) -- (0,0) -- (16.97,8.15) -- cycle    ;

%Curve Lines [id:da6417752694113028] 
\draw [color={rgb, 255:red, 126; green, 211; blue, 33 }  ,draw opacity=1 ]   (596,242) .. controls (558.38,212.3) and (540.36,179.66) .. (488.58,206.18) ;
\draw [shift={(487,207)}, rotate = 332.15] [color={rgb, 255:red, 126; green, 211; blue, 33 }  ,draw opacity=1 ][line width=0.75]    (10.93,-3.29) .. controls (6.95,-1.4) and (3.31,-0.3) .. (0,0) .. controls (3.31,0.3) and (6.95,1.4) .. (10.93,3.29)   ;

%Straight Lines [id:da4868882358205895] 
\draw [color={rgb, 255:red, 126; green, 211; blue, 33 }  ,draw opacity=1 ][fill={rgb, 255:red, 126; green, 211; blue, 33 }  ,fill opacity=1 ]   (486,187) -- (486,174) ;
\draw [shift={(486,172)}, rotate = 450] [color={rgb, 255:red, 126; green, 211; blue, 33 }  ,draw opacity=1 ][line width=0.75]    (10.93,-3.29) .. controls (6.95,-1.4) and (3.31,-0.3) .. (0,0) .. controls (3.31,0.3) and (6.95,1.4) .. (10.93,3.29)   ;

%Straight Lines [id:da8338957340901085] 
\draw [color={rgb, 255:red, 126; green, 211; blue, 33 }  ,draw opacity=1 ][fill={rgb, 255:red, 126; green, 211; blue, 33 }  ,fill opacity=1 ]   (485,146) -- (485,133) ;
\draw [shift={(485,131)}, rotate = 450] [color={rgb, 255:red, 126; green, 211; blue, 33 }  ,draw opacity=1 ][line width=0.75]    (10.93,-3.29) .. controls (6.95,-1.4) and (3.31,-0.3) .. (0,0) .. controls (3.31,0.3) and (6.95,1.4) .. (10.93,3.29)   ;

%Straight Lines [id:da2688320646501736] 
\draw [color={rgb, 255:red, 126; green, 211; blue, 33 }  ,draw opacity=1 ][fill={rgb, 255:red, 126; green, 211; blue, 33 }  ,fill opacity=1 ]   (485,106) -- (485,93) ;
\draw [shift={(485,91)}, rotate = 450] [color={rgb, 255:red, 126; green, 211; blue, 33 }  ,draw opacity=1 ][line width=0.75]    (10.93,-3.29) .. controls (6.95,-1.4) and (3.31,-0.3) .. (0,0) .. controls (3.31,0.3) and (6.95,1.4) .. (10.93,3.29)   ;

%Straight Lines [id:da7076928865139065] 
\draw [color={rgb, 255:red, 126; green, 211; blue, 33 }  ,draw opacity=1 ][fill={rgb, 255:red, 126; green, 211; blue, 33 }  ,fill opacity=1 ]   (485,66) -- (485,53) ;
\draw [shift={(485,51)}, rotate = 450] [color={rgb, 255:red, 126; green, 211; blue, 33 }  ,draw opacity=1 ][line width=0.75]    (10.93,-3.29) .. controls (6.95,-1.4) and (3.31,-0.3) .. (0,0) .. controls (3.31,0.3) and (6.95,1.4) .. (10.93,3.29)   ;

%Straight Lines [id:da8205708394730371] 
\draw [color={rgb, 255:red, 126; green, 211; blue, 33 }  ,draw opacity=1 ][fill={rgb, 255:red, 126; green, 211; blue, 33 }  ,fill opacity=1 ]   (502,67) -- (502,54) ;
\draw [shift={(502,52)}, rotate = 450] [color={rgb, 255:red, 126; green, 211; blue, 33 }  ,draw opacity=1 ][line width=0.75]    (10.93,-3.29) .. controls (6.95,-1.4) and (3.31,-0.3) .. (0,0) .. controls (3.31,0.3) and (6.95,1.4) .. (10.93,3.29)   ;

%Straight Lines [id:da1772881215319777] 
\draw [color={rgb, 255:red, 126; green, 211; blue, 33 }  ,draw opacity=1 ][fill={rgb, 255:red, 126; green, 211; blue, 33 }  ,fill opacity=1 ]   (485,26) -- (485,13) ;
\draw [shift={(485,11)}, rotate = 450] [color={rgb, 255:red, 126; green, 211; blue, 33 }  ,draw opacity=1 ][line width=0.75]    (10.93,-3.29) .. controls (6.95,-1.4) and (3.31,-0.3) .. (0,0) .. controls (3.31,0.3) and (6.95,1.4) .. (10.93,3.29)   ;

% Text Node
\draw (462,263) node   {$L_{1} \ =\ \langle 1\rangle$};
% Text Node
\draw (470,228) node   {$L_{2} \ =\ \langle 5,2\rangle$};
% Text Node
\draw (462,195) node   {$L_{3} \ =\ \langle 4\rangle$};
% Text Node
\draw (162,81) node   {$\langle 1,\ 5,\ 2,\ \color{rgb, 255:red, 126; green, 211; blue, 33 }3\color{black}\ ,\ 4,\ 6,\ 7,\ 9,\ 10,\ 8\rangle$};
% Text Node
\draw (156,247) node   {$\langle 1,\ 5,\ 2\ ,\ 4,\ 6,\ 7,\ 9,\ 10,\ 8\rangle$};
% Text Node
\draw (461,155) node   {$L_{4} \ =\ \langle 6\rangle$};
% Text Node
\draw (461,115) node   {$L_{5} \ =\ \langle 7\rangle$};
% Text Node
\draw (470,75) node   {$L_{6} \ =\ \langle 9,8\rangle$};
% Text Node
\draw (466,39) node   {$L_{7} \ =\ \langle 10\rangle$};
% Text Node
\draw (604,241) node [color={rgb, 255:red, 126; green, 211; blue, 33 }  ,opacity=1 ]  {$3$};

\end{tikzpicture}

\caption{This examle shows how adding element $3$ to the array changes the levels of the elements. Upward arrows show that the level of the corresponding element increases after we add $3$ to the array.} \label{fig:chan}
\end{figure}

When $n$ elements are given, their runtime for constructing the data structure is $\tilde O(n)$ since patience sorting gives us all the levels in time $\tilde O(n)$ and the balanced trees can be constructed in time $\tilde O(n)$ for all $L_k$.

\end{document}